\definecolor{DarkBlue}{RGB}{0,0,150}
\newtheorem{theorem}{Theorem}
\newtheorem{lemma}[theorem]{Lemma}
\newtheorem{definition}{Definition}
\newtheorem{assumption}{Assumption}
\newtheorem{observation}[theorem]{Observation}
\newtheorem{proposition}[theorem]{Proposition}
\newtheorem{corollary}[theorem]{Corollary}
\renewcommand{\b}{\mathbf{b}}
\newcommand{\h}{\mathbf{h}}
\newcommand{\s}{\mathbf{s}}
\newcommand{\z}{\mathbf{z}}
\renewcommand{\t}{\mathbf{t}}
\newcommand{\N}{[n]}
\newcommand{\E}{\mathbb{E}}
\newcommand{\Mask}{\textsf{M}}
\newcommand{\epicir}{{\textsf{EPIC-IR}}}
\newcommand{\epic}{{\textsf{EPIC}}}
\newcommand{\cepicir}{{\textsf{C-EPIC-IR}}}
\newcommand{\cepic}{{\textsf{C-EPIC}}}
\newcommand{\epir}{{\textsf{EPIR}}}
\newcommand{\cepir}{{\textsf{C-EPIR}}}
\newcommand{\epbb}{\textsf{EPBB}}
\newcommand{\GVA}{{\textsf{GVA}}}
\newcommand{\OPT}{{\textsf{OPT}}}
\newcommand{\REV}{{\textsf{REV}}}
\newcommand{\MGVA}{{\textsf{M-GVA}}}
\newcommand{\ycfixed}[1]{}
\begin{document}
	
	\title{Cursed yet Satisfied Agents}
	
	\author{
	%	Submission \#104
		Yiling Chen\thanks{Harvard; {\tt yiling@seas.harvard.edu}}
		\and
		Alon Eden\thanks{Harvard; {\tt aloneden@seas.harvard.edu}}
		\and
		Juntao Wang \thanks{Harvard; {\tt juntaowang@g.harvard.edu}}
	}
	\date{}
	
	\maketitle

\begin{quote}
	{\small\raggedright \noindent {\it ``You win, you lose money, and you
			curse.''}}
	
	\hspace{2cm} {\tiny\raggedleft \noindent {--- Kagel and Levin}}
\end{quote}
	
\begin{abstract}
    In real-life auctions, a widely observed phenomenon is \textit{the winner's curse}---the winner's high bid implies that the winner often overestimates the value of the good for sale, resulting in an incurred negative utility. The seminal work of Eyster and Rabin [Econometrica'05] introduced a behavioral model aimed to explain this observed anomaly. We term agents who display this bias ``cursed agents''. We adopt their model in the interdependent value setting, and aim to devise mechanisms that prevent the agents from obtaining negative utility. We design mechanisms that are \textit{cursed ex-post incentive compatible}, that is, incentivize agents to bid their true signal even though they are cursed, while ensuring that the outcome is \textit{ex-post individually rational} (\epir)---the price the agents pay is no more than the agents' \textit{true} value. 
	
	Since the agents might over-estimate the value of the allocated good, such mechanisms might require the seller to make positive (monetary) transfers to the agents in order to prevent agents from over-paying for the good. While the revenue of the seller not requiring $\epir$ might increase when agents are cursed, when imposing $\epir$, cursed agents will always pay less than fully rational agents (due to the positive transfers the seller makes). We devise revenue and welfare maximizing mechanisms for cursed agents. For revenue maximization, we give the optimal deterministic and anonymous mechanism. For welfare maximization, we require {\it ex-post budget balance} (\epbb), as positive transfers might cause the seller to have negative revenue. 
	%We propose a masking operation that takes any deterministic mechanism, and imposes that the seller would not make positive transfers, enforcing $\epbb$. 
	We propose a masking operation that takes any deterministic mechanism, and masks the allocation whenever the seller requires to make positive transfers. The masking operation ensures that the mechanism is both $\epir$ and $\epbb$. 
	We show that in typical settings, $\epbb$ implies that the mechanism cannot make any positive transfers. Thus, applying the masking operation on the fully efficient mechanism results in a socially optimal $\epbb$ mechanism. This further implies that if the valuation function is the maximum of agents' signals, the optimal $\epbb$ mechanism obtains zero welfare. In contrast, we show that for sum-concave valuations, which include weighted-sum valuations and $\ell_p$-norms, the welfare optimal $\epbb$ mechanism obtains half of the optimal welfare as the number of agents grows large.    
\end{abstract}

%\begin{document}
	
	% Title page for title and abstract only.
	%\begin{titlepage}

	%	\maketitle

%	\end{titlepage}
	
	% Paper body

	\section{Introduction}\label{sec:intro}
	
	Consider the following hypothetical game---Alice and Bob each have a wallet, with an amount of money that is known to them, but not to the other player. They each know that the money in the other wallet is distributed uniformly in the range between $\$0$ and $\$100$, independently of the amount of money in their own wallet. The auctioneer confiscates the two wallets, and runs a second price auction on the two wallets  (the highest bidder wins the two wallets, and pays the bid of the other bidder). Say Alice has $\$30$ in her wallet, how should she bid? A na\"ive strategy Alice could take is to calculate the expected amount of money in Bob's wallet, $\$50$, and add it to her amount, resulting in a bid of $\$80$. However, such a bidding strategy ignores the fact that if Bob invokes the same strategy, then conditioned on Alice winning the two wallets, she has more money in her wallet than Bob, implying that Bob's amount is distributed uniformly between $\$0$ and $\$30$. Thus, if both agents invoke the  na\"ive strategy, Alice's utility conditioned on winning is the expected sum in the two wallets, $\$30+\$15=\$45$ \textit{minus} Bob's expected bid conditioned on him losing $\$15+\$50=\$65$, implying a \textit{negative} expected utility of $-\$20$. 

Of course, a rational agent should not incur a negative utility when playing a game. \citet{klemperer1998auctions} introduced the game presented above, and named it ``the wallet game". This is an interdependent value setting (IDV)~\citep{milgrom1982theory}, where each agent has \textit{private} information, termed \textit{signal} (i.e., the amount of money in their own wallet), and a \textit{public} valuation function that takes into account different bidders' private information (in this case, the sum of signals).  
%The following hypothetical game was introduced by Klemperer~\cite{klemperer1998auctions},\yc{This is just the wallet game, right?} where he 
\citet{klemperer1998auctions} analyzed the symmetric equilibrium of rational agents in the wallet game (and introduced several asymmetric equilibria). However, in practice, the observed behavior of agents in the wallet game much resembles the na\"ive strategy rather than an equilibrium that rational agents end up with~\citep{avery1997second}. This phenomenon was first observed by~\citet{capen1971competitive}, three petroleum engineers who observed that oil companies experienced unexpectedly low rates of returns in oil-lease auctions since these companies ``ignored the informational consequences of winning''. %\AE{I like this story as it adds a little color, but we can move it to the related section if it fits better.} \yc{I like it!}

This behavioral bias, where the winner fails to account for the implications of outbidding other agents is commonly referred to as \textit{the winner's curse}, and was consistently observed across many scenarios such as selling mineral rights~\citep{capen1971competitive,lohrenz1983bonus}, book publication rights~\citep{dessauer1982book}, baseball's free agency market~\citep{cassing1980implications, blecherman1996there}, and many others (for more information about empirical evidence for the winner's curse, see Chapter 1 in~\cite{kagel2009common}). As standard game theory cannot account for the observed behavior, \citet{eyster2005cursed} introduced a behavioral model that formalizes this discrepancy. They termed this model as ``cursed equilibrium".

%The effect 

%To explain this observed irrational behavior, \citet{eyster2005cursed} introduced a behavioral model that formalizes this discrepancy. 
%They termed this model as ``cursed equilibrium". 
In their model, agents correctly predict other agents' strategies, but fail to estimate that other agents' actions are correlated with their actual signals (or information), similar to the na\"ive strategy presented above. The extent of this degree of misestimation of the correlation of actions and signals is captured by a parameter $\chi$, where the perceived utility of agents  %termed \textit{cursed}-utility\jt{we did not use the term in our work}, 
is $\chi$ times the expected utility of the agents if the actions and signals of other agents are uncorrelated \textit{plus} $(1-\chi)$ times the actual, correct expected utility, correlating the signals with the actions (see Section~\ref{sec:cursed-eq} for a formal definition). %\yc{Do we want to emphasize that this is expected utility?} 
Having a single parameter to explain the behavioral model of the agents proved to be a very tractable modeling, as this parameter can be easily fitted using real world data to estimate $\chi$, and better predict players behaviors~\citep{eyster2005cursed}.

Existing literature of interdependent values either
\begin{enumerate}[(i)]
	\item analyzes fully rational players' behavior as they `shade' their bid to account for the potential over-estimation of the value% when winning the auction
	~\citep{klemperer1998auctions,milgrom1982theory,wilson1969communications,nobel2021considerations};
	\item studies biased agents' behavior when deploying known mechanisms, where the equilibrium often times implies a negative utility for the bidders (and higher revenue for the seller)~\citep{eyster2005cursed,kagel1986winner,avery1997second,holt1994loser}; or
	\item  exploits the biased behavior of the agents to achieve higher revenue~\citep{bergemann2020countering}.
\end{enumerate}
%Our work does not assume agents are fully rational, 
Our work builds upon the \citet{eyster2005cursed} ``cursed equilibrium" model and views cases where agents experience \textit{actual} negative utility as undesirable; thus, tries to avoid such scenarios. 

One might wonder why a seller might care about a cost incurred by the buyer due to her own bias, especially when the outcome might increase the seller's revenue. However, we note  that this can be highly undesirable due to various reasons:
\begin{itemize}
	\item In many real-life scenarios, such as leasing spectrum bands, violating ex-post IR can be detrimental to society at large.  Perhaps a mobile company overbids on spectrum, winds up bankrupt, and then the public cannot enjoy any cellular service associated with that spectrum lease~\citep{zheng2001high}.  
	\item Companies experiencing revenue loss might feel reluctant to join future  auctions~\citep{hendricks1988empirical}. This may have the adverse affect of reducing the long-term revenue of the seller, and the long-term social welfare in the market.
\end{itemize}
%\that there are real-life scenarios 
%Indeed, 

In this paper, we design mechanisms that are incentive compatible (IC) for \textit{cursed} agents---agents maximize their \textit{biased} utility by reporting their true private information, thus generate a predictable behavior. In order to avoid the winner's curse, the mechanisms we introduce are \textit{ex-post individually rational}, meaning an agent will never pay more than their \textit{true} value. We study the quintessential objectives of revenue and welfare maximization in auction settings with interdependent values.

	%\section{Our Results and Techniques} \label{sec:results_and_techniques}
	%Our model section is quite elaborated. It introduces the involved behavioral model of \cite{eyster2005cursed}, and extends it to ex-post solution concepts. Moreover, our starting point is the extension of Roughgarden and Talgam-Cohen of Myerson's auction theory to interdependent settings. As these lengthy preliminaries are required in order for the reader to fully follow our technical work, there is no hope in fitting much of our technical contribution in the 10 page limit of the submission. For these reasons, we now present a lengthy results and techniques section intended to give the reviewer a clear yet informal statement of our results (Section~\ref{sec:results}), followed by an informal discussion of the techniques used to obtain these results (Section~\ref{sec:techniques}).

	\subsection{Our Results}\label{sec:results}
	
We focus on deterministic and anonymous mechanisms, as they are optimal for interdependent values of fully rational agents~\citep{ausubel1999generalized,maskin1992auctions,RoughgardenT16}. 
We extend the cursed-equilibrium model of \citet{eyster2005cursed} to support a strong truthfulness notion of Cursed Ex-Post Incentive-Compatible (\cepic), the equivalent of ex-post IC in the case of fully rational agents, which is the strongest incentive notion possible for this setting\footnote{The ex-post IC notion is stronger than Bayesian IC and weaker than Dominant strategy IC. In interdependent settings it is impossible to design dominant strategy IC mechanisms while obtaining good performance guarantees.} (see Section~\ref{sec:cursed-eq}). %\yc{I suppose this should refer to the model section. In addition, it's a good idea to mention in the model section that C-EPIC is the strongest incentive notion possible for our setting.} 
For interdependent values, a deterministic $\cepic$ mechanism corresponds to a threshold allocation rule, which takes as input other bidders' reports, and returns the minimum bid from which an agent starts winning.

After establishing the incentive notion we are studying in this paper, we turn to take a closer look at the implications of ensuring the mechanisms satisfy ex-post individual rationality (Section~\ref{sec:epir-implication}). 
Our solution concept gives rise to an analogue of the payment identity of \epic\ mechanisms (Proposition~\ref{thm:CEPICIR}). As opposed to the fully rational setting, we might need to set the constant term in the payment identity ($p_i(0,\s_{-i})$ in Equation~\eqref{eq:payment identity}) to be smaller than zero, as the mechanism might make positive transfers to compensate for the over-estimation of values due to the winner's curse.
For a fixed deterministic mechanism, we show how to optimally set this compensation term in a way that maximizes the revenue for the allocation rule, while keeping $\epir$. This has the following implication---fixing the allocation rule, and letting $\chi$ grow decreases the revenue. An interesting conclusion is the following (see Propositions~\ref{prop:rev-mon} and \ref{prop:welfare-mon}):

\vspace{3mm}

\noindent\textbf{Proposition (Revenue and welfare monotonicity).} As the cursedness-parameter $\chi$ increases: (i) the revenue of the revenue-optimal \epicir\ and \epir\ mechanism decreases. (ii) the welfare of the welfare-optimal \epicir, \epir\ and ex-post budget-balanced mechanism decreases.

\vspace{3mm}

%that the revenue and welfare of the revenue-optimal and welfare-optimal mechanisms that are \cepicir\ and \epir\ (and ex-post budget-balance for welfare) decreases as $\chi$ increase (see Propositions~\ref{prop:rev-mon} and \ref{prop:welfare-mon}). 

This is in stark contrast to the case where the mechanism does not require \epir, where cursed-agents are shown to generate more revenue in second price auctions, since the mechanisms take advantage of the possibility of agents paying more than their value~\citep{eyster2005cursed}.  

%This implies that when 

%This is in stark contrast to the results~\cite{eyster2005cursed}, where cursed-agents are shown to generate more revenue in second price auctions, since the mechanisms takes advantage of the possibility of agents paying more than their value.
%\begin{enumerate}
%\item when receiving an item, if the agent's value at the threshold is smaller than the cursed value, then the agent pays her real value at the threshold (the signal from which the agent starts winning), and otherwise,  the agent pays the cursed value at the threshold. 
%\item  
%\item The revenue and welfare of the revenue-optimal and welfare optimal mechanisms that are \cepicir\ and \epir\ (and ex-post budget-balance for welfare) decreases as $\chi$ increase.
%\end{enumerate}

Building upon our understanding of combining individual-rationality constraints with incentive-compatibility for agents who suffer from the winner's curse, we turn to revenue maximization (see Section~\ref{sec:revenue}). We show that designing a revenue optimal $\cepicir$ and $\epir$ deterministic mechanism decomposes into a separate problem for each agent $i$ and other agents' signals $\s_{-i}$. %\AE{We should verify this is clear enough in the section itself.}
That is, the mechanism designer's task reduces to find an optimal threshold for winning the auction for agent $i$  for every set of signals $\s_{-i}$ other agents might declare. We show how to optimally set such a threshold, resulting in a revenue-optimal mechanism (see Theorem~\ref{thm:rev-optimal}). %The optimal threshold rule is similar in some aspects to 
%\red{bears certain intuition}, and 
\vspace{3mm}

\noindent\textbf{Theorem (Revenue optimal mechanism).} The revenue-optimal mechanism is a threshold mechanism, and the threshold rule is given via Theorem~\ref{thm:rev-optimal}.

\vspace{3mm}

We discuss interesting similarities and differences of the optimal threshold rule from the case where the agents are fully rational, and the case where the seller does not require the mechanism to keep $\epir$ constraints at the end of Section~\ref{sec:revenue}.

Social welfare maximization is a more nuanced task, as just aiming for $\cepicir$ and $\epir$ might result in the auctioneer needing to pay all the agents (even the losers). In fact, we show a scenario that the welfare-optimal $\cepicir$  and $\epir$ mechanism incurs a huge revenue loss of $\Omega(n\log n)$, where the expected optimal welfare is $O(n)$. This holds even for a very simple setting where $i$'s valuation is $v_i(\s)=s_i+\frac{1}{2}\sum_{j\neq i}s_j$ and signals are sampled independently from the $U[0,1]$ distribution (see Proposition~\ref{prop:negative-rev}). 
%the auctioneer losing a huge amount of money when maximizing, due to the positive transfers the auctioneer needs to make in order to obtain $\epir$ (see Proposition~\ref{prop:negative-rev}). 
Therefore, when maximizing welfare, we add a requirement of ex-post budget-balance (\epbb); that is, the seller never has negative revenue.

A trivial way to ensure $\epbb$ is to set the threshold rule such that it never sells whenever selling implies the seller will need to make positive transfers. We present a masking operation that does exactly this. 
%Given a threshold rule, the masking of the threshold rule only allocates whenever the original threshold rule allocates, and the allocation does not require the seller to make positive transfers. 
Given a threshold rule, the masking of the threshold rule allocates in the maximal subset of cases where the original threshold rule made an allocation \textit{and} the allocating induces no positive transfer from the seller.
One might wonder whether one can design a mechanism that is $\epbb$, while still selling at scenarios that require the mechanism to make positive transfers to some bidders, increasing the expected welfare of the mechanism. We answer this question in the negative (see Theorem~\ref{thm:no-positive-transfers}).

\vspace{3mm}

\noindent\textbf{Theorem ($\epbb$ $\equiv$ no positive transfers).} Under natural conditions, a mechanism is $\epbb$ if and only if the mechanism \textit{never} makes positive transfers.

\vspace{3mm}

%---we show that a mechanism is $\epbb$ if and only if the mechanism \textit{never} makes positive transfers to any bidder (see Theorem~\ref{thm:no-positive-transfers}).
This characterization drives the following implication (Proposition~\ref{prop:gva-optimal}).

%yields the following that the welfare-optimal $\epbb$ is a masking of the socially optimal mechanism that is not $\epbb$ (see Proposition~\ref{prop:gva-optimal}).
\vspace{3mm}

\noindent\textbf{Proposition (Welfare optimal mechanism).} Under natural conditions, the welfare-optimal $\epbb$ mechanism is a result of masking the welfare-optimal allocation.

\vspace{3mm} 

Applying the above proposition to the wallet game example in the introduction implies that in that example, the seller should allocate the item only when the loser has at least $\$50$ in her wallet, since this is the case where the winner does not overestimate the amount of money in the two wallets.
%The proposition above applies to the wallet game example in the introduction. 

%\blue{For the wallet game where the value of the item is the sum of the signals of all agents, the welfare-optimal $\epbb$ mechanism is to allocate the item to the winner only when the losers also have high-end signals for the item.}

We notice that for some valuation functions, as the maximum of agents' signals, this implies an $\epbb$ mechanism can never sell the item, resulting in zero welfare. On the positive side, we introduce a new family of valuations, \textit{Concave-Sum valuations}, which generalizes well studied classes of valuations such as weighted-sum valuations\footnote{Weighted-sum valuations take the form $v_i(\s)=s_i + \beta\sum_{j\neq i} s_j$ for $\beta \in (0,1]$. Note that the valuations in the wallet game example are a special case of weighted-sum valuations.} (e.g., the wallet game) and $\ell_p$-norms of signals for a finite $p$. We show that for this class of valuations, the optimal $\epbb$ mechanism approximates the optimal welfare (Theorem~\ref{thm_masked_SW}). 

\vspace{3mm}

\noindent\textbf{Theorem (Welfare approximation).} For concave-sum valuations, the welfare-optimal $\epbb$ mechanism gets at least half of the fully efficient allocation as the number of agents grows large.
%Under natural conditions, the welfare-optimal $\epbb$ mechanism is a result of masking the welfare-optimal allocation.

\vspace{3mm}

% demonstrate that for Concave-Sum valuations, a family of valuation functions that includes weighted-sum valuations\footnote{Weighted-sum valuations take the form $v_i(\s)=s_i + \beta\sum_{j\neq i} s_j$ for $\beta \in (0,1]$. Note that the valuations in the wallet game example are a special case of weighted-sum valuations.} and $\ell_p$-norms of signals for a finite $p$, the welfare-optimal $\epbb$ mechanism gets at least half of the fully efficient allocation as the number of agents grows large (see Theorem~\ref{thm_masked_SW}).

	%\subsection{Our Techniques}\label{sec:techniques}
	%\input{techniques}

\section{Related work}
Our work investigates the problem of auction design for the agents who suffer from the winner's curse. 
%The winner's curse has been regarded as a prevalent evidence that ``some subjects  frequently fail to play equilibrium strategies.''~\cite{kagel1986winner}
As a behavioral anomaly~\citep{thaler1992winner}, the winner's curse has been documented and analyzed by a large literature via both field studies and lab experiments. 
Field evidence of the presence of the winner's curse has been discovered in auction practices across a wide array of industries, which range from the book industry~\citep{mcafee1987auctions} and the market of baseball players~\citep{cassing1980implications} to the offshore oil-drilling leases~\citep{capen1971competitive,hendricks1988empirical,hendricks1987information,porter1995role}. Also, a large amount of lab experimental results~\citep{bazerman1983won,kagel1986winner,avery1997second,charness2009origin,ivanov2010can,bernheim2019handbook} show that the winner's curse occurs under various conditions, which differ in multiple dimensions, such as auction format (e.g., first-price, second-price, Dutch auction and English auction), number of participants, valuation functions and signal structures. \citet{kagel2002bidding} provided a comprehensive review for such experimental studies. Furthermore, most of these studies, such as observational ones~\citep{hendricks1988empirical,hendricks1987information} and lab experimental ones~\citep{bazerman1983won,kagel1986winner,avery1997second,charness2009origin}, demonstrate that the bidders who suffered from the winner's curse not only experienced a reduce in the profit than anticipated, but could also be worse off upon winning, i.e., receive a negative net profit. %In this paper, one of our goal is to design auction mechanisms to avoid inducing negative profit on bidders. 

As discussed in the introduction, we adopt the cursed equilibrium model introduced by~\citet{eyster2005cursed} to model the bidding behaviors of agents who suffer the winner's curse. This model suggests that  agents fail to fully appreciate the contingency between other bidders' bids and the auction item's value. This cause is supported by several experimental findings~\citep{bazerman1983won,kagel1986winner,avery1997second,charness2009origin}, and this model has been applied, generalized or analogized to different applications, including analyzing market equilibrium~\citep{eyster2019financial,eyster2011correlation}, designing financial assets~\citep{kondor2015cursed,ellis2017correlation}, unifying theoretical behavioral models~\citep{miettinen2009partially}.

In addition, there is a large literature, including~\cite{wilson1969communications,milgrom1982theory,klemperer1998auctions,bulow2002prices,RoughgardenT16}, studying the interdependent valuation auction, the type of auctions we consider in this paper. Different from our work which designs mechanism for agents who suffer the winner's curse, these papers consider the mechanism design for fully rational agents who play (Bayesian) Nash equilibrium strategies. \citet{milgrom1982theory} introduced the interdependent value model and analyzed the revenue of different auction formats when agents have correlated but private value over the item. Their results imply that fully rational agents who implicitly try to avoid the winner's curse bid more conservatively when there is less information (as in a second price auction) revealed in the auction than more information revealed (as in English auction). \citet{bulow2002prices,klemperer1998auctions} showed the anomalies in certain interdependent valuation auctions that the item price may increase in supply and decrease in the number of bidders and that the item price is sensitive to even a small asymmetry of bidders. They interpreted the anomalies in terms of fully rational bidders taking the winner's curse into consideration. \citet{RoughgardenT16} developed tools to build ex-post incentive compatible mechanisms for general interdependent valuation auctions with fully rational agents. We use their tools to build our mechanisms. 

In contrast to these works, \citet{bergemann2020countering} studied the auction design problem for agents suffering the winner's curse. Their work is the most similar to ours, but with several differences. The major difference is that they aim to achieve interim incentive guarantees, while we aim to achieve stronger ex-post incentive guarantees. As a result of sacrificing the ex-post incentive guarantees, their allocation rule needs not to be monotone, i.e., they can allocate the item to an agent with a lower signal, incurring the winner's blessing instead of the winner's curse. In contrast, when imposing ex-post incentive guarantees, we show that the allocation rule must be monotone. They also consider a common value auction, specifically, only a single function which is the maximum of signals; Therefore, their non-monotone allocation rule will not decrease revenue and social welfare. However, this is not true in the more general valuation settings that we consider. Moreover, they consider agents who are fully cursed, while we consider agents who can be partially cursed. Finally, we study a much more general setting capturing a general family of valuation functions, while they consider a single valuation function. %, defined as the maximum of all agents' private signals, while we consider a more general family of valuation functions. 
%they aim at achieving interim incentive guarantees, while we aim at achieving stronger ex-post incentive guarantees; they consider agents who are fully cursed, while we consider agents who can be partially cursed; they consider a single valuation function for the item, defined as the maximum of all agents' private signals, while we consider a more general family of valuation functions. 

The design of mechanisms when considering agents who act according to a behavioral bias, and not their objective utility, have recently gained traction. Recent examples of this line of research are  finding a market equilibrium for agents who suffer from the endowment effect~\citep{EzraFF20,BabaioffDO18} and designing revenue-maximizing auctions for agents who are uncertainty-averse~\citep{0001GMP18,LiuMP19}, among others. 
%planning for agents experiencing present bias~\cite{KleinbergOR16,KleinbergOR17}, among others.

Finally, approximately optimal mechanisms in the interdependent values model recently gained attention. Works in this domain include simple and approximately optimal revenue maximizing auctions~\cite{RoughgardenT16,li2017approximation,ChawlaFK14} and assumption-minimal welfare maximizing auctions~\cite{EdenFFG18,EdenFFGK19,EdenFTZ21,AmerT21,EdenGZ22,GkatzelisPPS21}

	\section{Model}\label{sec:model}
	
	%\red{We consider the IDV that are commonly studied for rational agents and consider the set of mechanisms know to be optimal. To compare to other interdependent setting}

    We consider \textit{interdependent} valuation (IDV)  settings commonly studied for rational agents. We consider a seller that sells a single indivisible item to a set of $n$ agents with \textit{interdependent} valuations. 
    %single-item multiple-bidder interdependent valuation auction setting. In this setting, there is one indivisible item for sale and a set agents (bidders), denoted as $\N=(1,...,n)$, bidding for the item. 
    Each agent $i$ has a signal $s_i$ as private information. The agents' signals are drawn from a joint distribution $F$ with density $f$ over the support $S_i^n$, where $S_i=[0, \bar{s}_i]$.
    %The agents have \emph{privately-known} signals $s_1,...,s_n$ drawn from a joint distribution $F$ with density $f$ over the support $S_i^n$, where $S_i=[0, \bar{s}_i]$. 
    We use $\s$ to denote the agents' signal profile $s_1,...,s_n$ and use $\s_{-i}$ to denote the signal profile of all agents except agent $i$. We impose the standard assumptions that $f$ is continuous and nowhere zero on the signal space. Each agent $i$ also has a {\it publicly-known} valuation function $v_i:[0,\bar{s}_i]^n\mapsto \mathbb{R}$, which represents the value received by agent $i$ upon winning the auction as a function of all bidders' signals.
    %the joint signal information obtained by all agents collectively. 
    We adopt the following standard assumptions in the valuation function $v_i(\cdot)$:
	\begin{itemize}
	    \item[--] Non-negative and normalized, i.e., $\forall \s, v_i(\s)\ge 0$ and $v_i(\mathbf{0})=0$.
	    \item[--] Continuously differentiable. 
	    \item[--] Monotone-non-decreasing in all signals and monotone-increasing in agent $i$'s signal $s_i$.
%	    \item[--] Finite expectation $\E_\s[v_i(\s)]<+\infty$. \jt{Do we need this?}
	\end{itemize}
	%\AE{I removed the mention of virtual values as it is mentioned in the related works section.}
	%We define the \emph{conditional virtual value} of the item to  agent $i$ as $\phi_i(\s)=v_i(\s)-\frac{\partial v_i(\s)}{\partial s_i}\cdot\frac{1-F(s_i|\s_{-i})}{f(s_i|\s_{-i})}$. 
	%$\phi_i(\s)=v_i(\s)-v_i'(\s)\cdot\frac{1-F(s_i|\s_{-i})}{f(s_i|\s_{-i})}$, where $v_i'(\s)$ refers to the partial derivative of $v_i$ to $s_i$, i.e., $v_i'(\s)=\frac{\partial v_i(\s)}{\partial s_i}$ 
	
	We consider the following properties, introduced in the seminal paper of \citet{milgrom1982theory}:
	\begin{enumerate}
	    \item[--] (Signal symmetry) $S_1=...=S_n=S$, where $S=[0, \bar{s}]$ for some $\bar{s}$, and $f(\s)=f(\t)$ for any signal profile $\s$ and its arbitrary permutation $\t$. 
	    \item[--] (Valuation  symmetry) For any $i,j$,  $v_i(s_i, \s_{-i})=v_j(t_j, \t_{-j})$ as long as $s_i=t_j$ and $\t_{-j}$ is a permutation of $\s_{-i}$. 
	    
	    \item[--] 
	    (Signal affiliation) For any pair of signal profiles $\s$ and $\s'$, it always holds that   $f(\s\vee\s')f(\s\wedge\s')\ge f(\s)f(\s')$, where
	    $(\s\vee\s')$ is the component-wise maximum, and $(\s\wedge\s')$ is the component-wise minimum.\footnote{Affiliation is a common form of positive correlation, which generalizes the common case of independent distributions. }
	\end{enumerate}
These conditions are standard, and were considered by many papers in the literature, including \cite{eyster2005cursed,RoughgardenT16}. For welfare maximization, we focus on a special form of affiliation, and consider a case where all signals are sampled i.i.d.

%We adopt the standard signal affiliation when considering revenue maximization, while focus on a special form of signal affiliation--all signals are i.i.d., when considering the social welfare maximization.

	%Some of our results require the single crossing 
	%\jt{Add single-cross condition for symmetric agents: agent with higher signal have higher valuations.} %\AE{I think single crossing follows from value symmetry and monotonicity in signal.}
	
	By the revelation principle, we consider without loss of generality
	%\footnote{The only exception is the English auction, which is not strategically equivalent to   See an argument by ~\citet{RoughgardenT16}.}
	direct mechanisms, in which agents directly report their private signals $\s$ and then the auctioneer determines the auction outcome according to a pre-announced mechanism $M=\{(x_i,p_i)\}_{i\in\N}$. Here, $x_i:[0,\bar{s}_i]^n\mapsto [0,1]$ is the  allocation rule, specifying agent $i$'s winning probability, and $p_i:[0,\bar{s}_i]^n\mapsto \mathbb{R}$ is the payment rule, specifying agent $i$'s payment. 
	
	We study deterministic and anonymous mechanism, as these are optimal for rational agents in IDV settings~\citep{RoughgardenT16,ausubel1999generalized,maskin1992auctions}. This allows an easy comparison to existing results in the literature.  
 
	A mechanism is {\it deterministic} if $x_i(\s)\in\{0,1\}$ for any $i$ and $\s$. A mechanism is {\it anonymous} if for any $\s$ and any permutation $\t$ of $\s$, $x_i(\s)=x_j(\t), p_i(\s)=p_j(\t)$ whenever $s_i=t_j$.

	We make two technical assumptions  about allocation rule $x_i$ for simplicity of exposition: 
	First, we do not allocate the item to an agent who reports a zero signal. 
	
	\begin{assumption} \label{assump:zero-no-alloc} 
	$x_i(0,\s_{-i})=0$ for every $i$ and $\s_{-i}$.
	\end{assumption}
	
	Second, we do not allocate when there is a tie in the highest reported signals. 
	
	\begin{assumption} \label{assump:tie-no-alloc}
	$x_i(\s)=0$ for every $i$ whenever $|\arg\max_i \{s_i\}|>1$. 
	\end{assumption}
	
	Since $f$ is continuous, the events in both assumptions have zero probability measure, and therefore, can be ignored without affecting the expected social welfare or revenue of the mechanism. Moreover, these assumptions are without loss for the mechanisms we consider as implied by Lemma~\ref{lem:max-alloc} in Section~\ref{sec_det_cepicir}.

% 	We assume the following two assumptions. The first assumption is without loss of generality, and the second assumption is without loss of generality for deterministic and anonymous mechanisms \cepicir\ mechanisms.
	
% 	\noindent\textbf{Assumption 1.} $x_i(0,\s_{-i})=0$ for every $i$ and $\s_{-i}$.
	
% 	\noindent\textbf{Assumption 2.} $x_i(\s)=0$ for every $i$ whenever $|\arg\max_i \{v_i(\s)\}|>1$.
    
%     We first notice that since $f$ is continuous, the events in Assumptions 1 and 2 have zero measure, and therefore, can be ignored without affecting the welfare or revenue of the mechanism. Moreover, given a monotone allocation rule $x$, setting $x_i(0,\s_{-i}$ preserves monotonicity.
%     Given a deterministic monotone allocation rule $x$, setting $x_i(\s)=0$ whenever $|\arg\max_i \{v_i(\s)\}|>1$ preserves monotonicity for deterministic, anonymous $\cepicir$ mechanisms according to Lemma~\ref{lem:max-alloc}.

	We use $\b=(b_1,...,b_n)$ to denote the reported signal profile (bid profile) of agents. Agents have quasilinear utilities---the utility of each agent $i$ given private signal profile $\s$ and bid profile $\b$ under mechanism $M$ is 
	$$u_i(\b,\s)=x_i(\b)v_i(\s)-p_i(\b).$$
	
		\subsection{The Winner's Curse---A Behavioral Model}% \& Main Desirable Incentive Properties}
	\label{sec:cursed-eq}

% 	To illustrate, let $\sigma(\b|\s)$ denote the probability of bidding $\b$ given signal profile $\s$ under certain bidding strategy $\sigma$. A fully rational (Bayesian) agent selects a bid maximizing her expected utility \begin{align*}
% 	\E_{\b_{-i},\s_{-i}}[u_i(\b,\s)] 
% 	=  & 
% 	\int_{\s_{-i}\in S^{n-1}} f(\s_{-i}|s_i)\cdot\left(\int_{\b_{-i}\in S^{n-1}}\sigma(\b_{-i}\vert\s_{-i})u_i(\b,\s)d\b_{-i}\right)d\s_{-i}\\
% 	= &  
% 	\int_{\s_{-i}\in S^{n-1}} f(\s_{-i}|s_i)\cdot\left(\int_{\b_{-i}\in S^{n-1}}\sigma(\b_{-i}\vert\s_{-i})(v_i(\b,\s)-p_i(\b))d\b_{-i}\right)d\s_{-i}
% 	\end{align*}

% 	\begin{align*}
% 	    & \int_{\s_{-i}\in S^{n-1}} f(\s_{-i}|s_i)\cdot\left(\int_{\b_{-i}\in S^{n-1}}\sigma(\b_{-i}\vert\s_{-i})u_i(\b,\s)d\b_{-i}\right)d\s_{-i}\\
% 	    = &  \int_{\s_{-i}\in S^{n-1}} f(\s_{-i}|s_i)\cdot\left(\int_{\b_{-i}\in S^{n-1}}\sigma(\b_{-i}\vert\s_{-i})(v_i(\b,\s)-p_i(\b))d\b_{-i}\right)d\s_{-i}
% 	\end{align*}
	
% 	For example, in a two-player second price auction with an interdependent valuation 

	We adopt the widely studied behavioral model, namely the {\it cursed equilibrium model}, introduced by~\citet{eyster2005cursed} to explain the occurrence of the winner's curse. In this model, agents fail to incorporate the contingency between the other bidders' actions and their signals, which determine the value of the auctioned item, but succeed in reasoning other parts of the game. To illustrate, let $\sigma$ denote a bidding strategy profile of agents and $f_\sigma$ denote the probability density of bids and signals under strategy profile $\sigma$, e.g., $f_{\sigma}(\b_{-i}|\s_{-i})$ represents the probability density of other bidders bidding $\b_{-i}$ when having signals $\s_{-i}$.
	Given the strategy profile $\sigma$, a fully rational %(Bayesian) 
	agent with signal $s_i$ estimates the probability density of other agents receiving $\s_{-i}$ and bidding $\b_{-i}$ as $$f_\sigma(\b_{-i},\s_{-i}|s_i) = f(\s_{-i}|s_i)f_\sigma(\b_{-i}|\s_{-i}).$$ Consequently, suppose the other agents follow strategy $\sigma_{-i}$, such an agent estimates her expected utility when having signal $s_i$ and bidding $b_i$ as follows:
	\begin{align*}
	EU_i(b_i,s_i;\sigma_{-i})
	=  
	& 
	\int_{\s_{-i}\in S^{n-1}}\int_{\b_{-i}\in S^{n-1}} f(\s_{-i}|s_i)\cdot f_\sigma(\b_{-i}\vert\s_{-i})\big(x_i(\b)v_i(\s)-p_i(\b)\big)d\b_{-i}d\s_{-i}
	\end{align*}
	In contrast, an agent who fully neglects the contingency between other agents' actions and their signals estimates, as the na\"ive agent in the wallet game example, the counterpart probability density as if $\s_{-i}$ and $\b_{-i}$ are independent conditioned on her knowledge $s_i$:
	$$\tilde{f}_\sigma(\b_{-i},\s_{-i}|s_i)=f(\s_{-i}|s_i)f_\sigma(\b_{-i}|s_i),$$
	where $f_\sigma(\b_{-i}|s_i)=\int_{\s_{-i}}f(\s_{-i}|s_{i})f_\sigma(\b_{-i}|\s_{-i})d\s_{-i}.$\footnote{\citet{eyster2005cursed} suggested that the agents succeed in reasoning or perceiving all other parts of the game, except the contingency between other agents' signals and actions. Therefore, agents get the correct $f_\sigma(\b_{-i}|s_i)$, a key assumption made in Eyster and Rabin's behavioral model.} \citet{eyster2005cursed} further introduce a cursedness parameter $\chi$ to model the case where an agent partially neglects this contingency such that she considers the counterpart probability density as 
	$${f}_\sigma^\chi(\b_{-i},\s_{-i}|s_i)=(1-\chi)f_\sigma(\b_{-i},\s_{-i}|s_i)+\chi\tilde{f_\sigma}(\b_{-i},\s_{-i}|s_i).$$ 
	
	An agent with $\chi=0$ is a fully rational agent, and as we will see later, an agent with $\chi>0$ is possible to experience the winner's curse. We refer to such an agent with $\chi>0$ as a \emph{cursed agent} for short, and to an agent with $\chi=1$ as a \emph{fully cursed agent}. By analogy with a fully rational agent estimating her expected utility, an agent with parameter $\chi$ (falsely) estimates her expected utility given $\sigma_{-i}$ as:
	\begin{equation}EU^\chi_i(b_i, s_i;\sigma_{-i})=
	\int_{\s_{-i}\in S^{n-1}}\int_{\b_{-i}\in S^{n-1}} {f_\sigma}^\chi(\b_{-i},\s_{-i}|s_i) \big(x_i(\b)v_i(\s)-p_i(\b)\big)d\b_{-i}d\s_{-i}
	\label{eq_raw_EU}
	\end{equation}
	
	To explain and predict the winner's curse phenomenon, \citet{eyster2005cursed} suggested that agents generally play the equilibrium strategy with respect to this misperceived utility $EU^\chi_i(b_i,s_i;\sigma)$ for some parameter $\chi>0$, instead of $EU_i(b_i,s_i;\sigma)$. They referred to this equilibrium as the {\it $\chi$-cursed equilibrium}. Formally, a strategy profile $\sigma$ forms a $\chi-$cursed equilibrium if it holds that for every agent $i$, $$\sigma(b_i|s_i)>0 \iff 
	b_i\in\text{arg}\max  EU_i^\chi(b_i, s_i;\sigma).$$
	In the above definition, $\chi=0$ gives the definition of the classic {\it Bayes-Nash equilibrium} (BNE). The Na\"ive strategy of the aforementioned wallet game is an example of a cursed equilibrium of fully cursed agents ($\chi=1$). 
	We refer the reader to Appendix~\ref{sec:curse-example} for an illustrative example of how a $\chi$-cursed equilibrium leads to a winner's curse in the wallet game.

	Next, we illustrate how $\chi$-cursed equilibrium relates to the winner's curse. Note that Eq.~(\ref{eq_raw_EU}) can be rewritten\footnote{This rewriting result is given by~\citet{eyster2005cursed}. We present a derivation in the Appendix for completeness.} as follows:
	\begin{equation}EU_i^\chi(b_i,s_i;\sigma_{-i})
	=  
	\int_{\s_{-i}\in S^{n-1}}\int_{\b_{-i}\in S^{n-1}} f(\s_{-i}|s_i)\cdot f_\sigma(\b_{-i}\vert\s_{-i})\left(x_i(\b)v^\chi_i(\s)-p_i(\b)\right)d\b_{-i}d\s_{-i},
	\label{eq_rewrite_EU}
	\end{equation}
	where 
	\begin{equation}
	\label{eq_cursed_valution}
	v_i^\chi(\s):=(1-\chi)v_i(\s)+\chi\E_{\tilde{\s}_{-i}}[v_i(\tilde{\s}_{-i},s_i)].\end{equation}
	This rewriting shows that 
	the utility $EU^\chi(b_i,s_i;\sigma_{-i})$ optimized by an agent with valuation function $v_i$ in the $\chi$-cursed equilibrium  is the same as the expected utility optimized by a fully rational agent with valuation function $v_i^\chi$ in a BNE. Thus, we have the following proposition from \cite{eyster2005cursed}. \begin{proposition} [\cite{eyster2005cursed}]
	In the IDV setting, the $\chi$-cursed equilibrium strategy profile of agents with valuation function $v_i$ is the same to the BNE strategy profile of agents with a modified valuation function $v_i^\chi$.
	\label{prp_CE}
	\end{proposition}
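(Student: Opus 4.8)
The plan is to prove Proposition~\ref{prp_CE} by establishing the rewriting of Eq.~\eqref{eq_raw_EU} into Eq.~\eqref{eq_rewrite_EU}, since once the cursed utility $EU_i^\chi(b_i,s_i;\sigma_{-i})$ is shown to coincide with the genuine expected utility of a fully rational agent whose valuation is $v_i^\chi$, the equivalence of the defining $\arg\max$ conditions—and hence of the equilibrium strategy profiles—is immediate. First I would substitute the definition ${f}_\sigma^\chi = (1-\chi)f_\sigma + \chi\tilde{f}_\sigma$ into the integral~\eqref{eq_raw_EU} and split it by linearity into a $(1-\chi)$-weighted term and a $\chi$-weighted term. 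The $(1-\chi)$ term is already in the rational form $\int\int f(\s_{-i}|s_i)f_\sigma(\b_{-i}|\s_{-i})(x_i(\b)v_i(\s)-p_i(\b))$, so the work concentrates entirely on the $\chi$ term built from $\tilde{f}_\sigma(\b_{-i},\s_{-i}|s_i)=f(\s_{-i}|s_i)f_\sigma(\b_{-i}|s_i)$.

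The key step is to handle this $\chi$ term, which carries the factor $f_\sigma(\b_{-i}|s_i)$ rather than $f_\sigma(\b_{-i}|\s_{-i})$. I would separate the integrand into the valuation part $x_i(\b)v_i(\s)$ and the payment part $p_i(\b)$. For the payment part, since $p_i(\b)$ does not depend on $\s_{-i}$, I would integrate out $\s_{-i}$ first: using $\int_{\s_{-i}} f(\s_{-i}|s_i)\,d\s_{-i}=1$, the payment contribution collapses to $\int_{\b_{-i}} f_\sigma(\b_{-i}|s_i)p_i(\b)\,d\b_{-i}$, which by the definition $f_\sigma(\b_{-i}|s_i)=\int_{\s_{-i}}f(\s_{-i}|s_i)f_\sigma(\b_{-i}|\s_{-i})d\s_{-i}$ (Fubini) equals $\int\int f(\s_{-i}|s_i)f_\sigma(\b_{-i}|\s_{-i})p_i(\b)$. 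Thus the payment terms from both $\chi$ and $(1-\chi)$ pieces recombine into the single payment term appearing in~\eqref{eq_rewrite_EU}, confirming that only the valuation term is modified.

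The main obstacle—and the crux of the argument—is the valuation part of the $\chi$ term, namely $\chi\int_{\s_{-i}}\int_{\b_{-i}} f(\s_{-i}|s_i)f_\sigma(\b_{-i}|s_i)\,x_i(\b)v_i(\s)\,d\b_{-i}d\s_{-i}$. Here $\s_{-i}$ appears only through $v_i(\s)=v_i(s_i,\s_{-i})$ while $\b$ (hence $x_i(\b)$) is coupled to $\s_{-i}$ only via the spurious independent factor $f_\sigma(\b_{-i}|s_i)$. The plan is to exploit this decoupling: since $f_\sigma(\b_{-i}|s_i)$ does not depend on $\s_{-i}$, I would perform the $\s_{-i}$-integration against $f(\s_{-i}|s_i)$ on $v_i(s_i,\s_{-i})$ alone, producing exactly the conditional expectation $\E_{\tilde{\s}_{-i}}[v_i(\tilde{\s}_{-i},s_i)]$ that defines the second term of $v_i^\chi$ in~\eqref{eq_cursed_valution}. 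This turns the $\chi$ valuation term into $\chi\int_{\b_{-i}} f_\sigma(\b_{-i}|s_i)\,x_i(\b)\,\E_{\tilde{\s}_{-i}}[v_i(\tilde{\s}_{-i},s_i)]\,d\b_{-i}$, which I would then re-expand using $f_\sigma(\b_{-i}|s_i)=\int_{\s_{-i}}f(\s_{-i}|s_i)f_\sigma(\b_{-i}|\s_{-i})d\s_{-i}$ to restore the $\int\int f(\s_{-i}|s_i)f_\sigma(\b_{-i}|\s_{-i})$ measure. Combining with the $(1-\chi)v_i(\s)$ valuation term yields the integrand $x_i(\b)\big((1-\chi)v_i(\s)+\chi\E_{\tilde{\s}_{-i}}[v_i(\tilde{\s}_{-i},s_i)]\big)=x_i(\b)v_i^\chi(\s)$, completing Eq.~\eqref{eq_rewrite_EU}. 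The delicate point throughout is the careful bookkeeping of which variable each density is conditioned on and justifying the interchange of integration order (Fubini), which is licensed by the continuity and boundedness assumptions on $f$ and $v_i$ over the compact signal space. With the rewriting in hand, Proposition~\ref{prp_CE} follows since the two optimization problems defining the respective equilibria have identical objectives.
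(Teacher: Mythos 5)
Your proposal is correct and follows essentially the same route as the paper: the paper proves Proposition~\ref{prp_CE} exactly by asserting the rewriting of Eq.~\eqref{eq_raw_EU} into Eq.~\eqref{eq_rewrite_EU} (citing Eyster and Rabin, with a closely related derivation in the appendix) and then observing that identical objectives force identical equilibria. Your contribution is simply to carry out that rewriting explicitly—splitting into the $(1-\chi)$ and $\chi$ terms, integrating out $\s_{-i}$ against $f(\s_{-i}|s_i)$ in the valuation and payment pieces, and re-expanding $f_\sigma(\b_{-i}|s_i)$ to restore the common measure—and every step of that computation is sound.
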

	We name the expression $v_i^\chi$ as the {\it cursed valuation function} of $v_i$. It reflects the hypothetical value of the item to the cursed agent. It contains two part---the $(1-\chi)v_i(\s)$ part reflects the part of the item's value which the agent perceives through successful contingent thinking and the $\chi\E_{\tilde{\s}_{-i}}[v_i(\tilde{\s}_{-i},s_i)]$ part reflects the part of the item's value which the agent perceives when she fully ignore the contingency between other bidders' signals and bids. Therefore, a winner $i$ faces the winner's curse whenever $\E_{\tilde{\s}_{-i}}[v_i(\tilde{\s}_{-i},s_i)]>v_i(\s)$; i.e., bidder $i$ might get an item of which the value is less than $i$ anticipated. Moreover, $i$ might suffer a negative utility when the payment, which can be as high as $\E_{\tilde{\s}_{-i}}[v_i(\tilde{\s}_{-i},s_i)]$ for a fully cursed agent, turns out larger than $v_i(\s)$. We refer to the conceptual utilities built upon the cursed valuation functions  $v_i^\chi$ as {\it cursed utilities.}

	\citet{eyster2005cursed} showed with empirical wallet game data that $\chi$ has a 95\% confidence interval of [0.59, 0.67] with 0.63 the optimal fit. Any $\chi>0$ predicts agents' bids better than the BNE strategy.
    
    We make the following key assumption:	
    \begin{assumption}[Seller knows $\chi$]
        We assume the seller knows the value of $\chi$, that is, the seller knows the extent of which the agents exhibit the cursedness bias. 
    \end{assumption}
	The above assumption can be justified by the following: (i) empirical studies discussed above, showing one can accurately estimate the value of $\chi$; (ii) moreover, the seller can observe the practical behavior of the agents, and their profit, in order to adjust the value of $\chi$, and update the devised auction, if the estimated value of $\chi$ seems to be inaccurate.
	If the assumption does not hold, it is still worthy studying the problem under this assumption for following reasons.  First, we show that  misestimating the value of $\chi$ by $\epsilon$ still leads to an %$\epsilon\cdot v_i(\bar{s},...,\bar{s})$-
	approximate $\cepicir$ mechanism (Proposition~\ref{prp_robust}), and thus using $\chi$ with a small estimation error still preserves some degree of incentive compatibility. Second, devising mechanisms when assuming knowing the value of $\chi$ leads to many interesting theoretical findings, which have implications on designing mechanisms for agents who suffer from winner's curse.
	An example of one such implication is that there exists a tension between ensuring that the agents would not experience negative profit due to their inability to reason about their utility and the revenue of the mechanism. In order words, to ensure non-negative utility for agents who may suffer from the winner's curse, the mechanism has to sacrifice some portion of the revenue. %Therefore, in order to increase its revenue, a mechanism designer should exert effort into explaining to the bidders  the contingency between other agents' actions and their information.
	\ycfixed{I didn't understand the last part. Does explaining to the bidders the contingency mean that the mechanism designer will help agents to form the correct beliefs (and becoming rational)? This doesn't seem to be what you want to express. Also, how does this relates to knowing the cursed bias?}

	\ycfixed{The previous sentence or so doesn't read right to me. Was it finished?} \ycfixed{I don't think we need to go into details of the wallet game example here. We can stop here and just explain that bidding according to cursed valuation may lead to bid that is higher than $v_i(\s)$ and hence winner's curse. Then, it's useful to give the empirical findings about the value of $\chi$ briefly.}

    \subsection{Incentive Properties for Cursed Agents and Other Desirable Properties}
    
	Bearing above behavioral implications of  agents with parameter $\chi$ in mind, a natural generalization of the interim IC concept from fully rational agents to agents with parameter $\chi$ is the following. %should be as follows.
	\begin{definition}
	A mechanism $M=\{(x_i,p_i)\}_{i\in\N}$ is {\it interim incentive compatible for agents with parameter $\chi$}, if for all $i,b_i, s_i$, and for the truth-telling strategy $\sigma^*$, it holds that  
	$$EU^\chi(s_i, s_i;\sigma^*_{-i})\ge EU^\chi(b_i, s_i;\sigma^*_{-i}).$$
	\end{definition}
	In other words, a mechanism is interim incentive compatible for agents with parameter $\chi$ if truthful-reporting is a $\chi$-cursed equilibrium. 
	For fully rational agents ($\chi$=0), the above definition coincides with the standard interim IC definition~\citep{RoughgardenT16}, where truthful-reporting forms a BNE (that is, a $0$-cursed equilibrium).\ycfixed{So it coincide with Bayesian Nash Equilibrium for fully rational agents, right? If so, mention BNE.}

	We further extend this idea to obtain a \textit{stronger} IC notion.
	We consider a cursed agent's expected utility when having signal $s_i$, while the bid profile is $\b=(b_i,\b_{-i})$. A fully rational agent will correctly estimate the degree to which other agents' signals $\s_{-i}$ are contingent on their bids $\b_{-i}$, setting this probability as $f_\sigma(\s_{-i}|\b_{-i},s_i)$, while a fully cursed agent will think the true type is independent of agents' bids, estimating this probability as $f(\s_{-i}|s_i)$. Therefore, an agent with parameter $\chi$ will assess her expected utility given her signal $s_i$, bid $b_i$ and others bidding $\b_{-i}$ given strategy $\sigma_{-i}$ as:
    \begin{eqnarray*}
	EU^\chi_i(\b, s_i;\sigma_{-i})
	& = & 
	\int_{\s_{-i}\in S^{n-1}}
	\left(
	(1-\chi)f_\sigma(\s_{-i}|\b_{-i},s_i)u_i(\b,\s)+\chi f(\s_{-i}|s_i)u_i(\b,\s)\right)d\s_{-i}.
	\end{eqnarray*}

    Note that we have the following relationship between $EU_i^\chi(\b,s_i;\sigma_{-i})$ and $EU_i^\chi(b_i,s_i;\sigma_{-i})$:\footnote{We present the derivation of Equation~\eqref{eq_EPEU_ICEU} in Appendix~\ref{sec:bic-epic-der}.}

 	\begin{equation}EU_i^\chi(b_i,s_i;\sigma_{-i})=\int_{\b_{-i}\in S^{n-1}}f_\sigma(\b_{-i}|s_i)EU_i^\chi(\b,s_i;\sigma_{-i})d\b_{-i}.\label{eq_EPEU_ICEU}\end{equation}
 	
 	Therefore, we can naturally define the ex-post incentive properties for agents with parameter $\chi$ as follows. \begin{definition}[Cursed ex-post incentive compatibility and individually rationality (\cepicir)]
 	\label{def_cepicir}
	Given a cursedness parameter $\chi$, a mechanism is cursed ex-post incentive compatible (\cepic) if for every $i$, $s_i$ and $\s_{-i}$, and truthfully-reporting strategy $\sigma^*$,
	$$EU_i^\chi(\b=\s,s_i;\sigma_{-i}^*)\ \ge\  EU_i^\chi((\b_{-i}=\s_{-i},b_i),s_i;\sigma_{-i}^*),\quad \forall b_i.$$

	A mechanism is cursed ex-post individually rational (\cepir) if for every $i, \s,$
	$$EU^\chi_i(\b=\s,s_i)\ge 0.$$ 
	A mechanism that is both \cepic\ and \cepir\ is denoted by \cepicir.
	\end{definition}
	Obviously, \cepic\ implies the interim IC for agents with parameter $\chi$.

	Lemma~\ref{lem_cepicir} introduces an equivalent definition of \cepicir, which simplifies the analysis of whether a mechanism satisfies \cepicir\ or not.
	
	\begin{lemma}
	\label{lem_cepicir}
	%[Cursed ex-post incentive compatibility and individually rationality (\cepicir)]
	A mechanism is \cepic\ if and only if for every $i$, $s_i$ and $\s_{-i}$, $$x_i(\s)v^\chi_i(\s)-p_i(\s)\ge x_i(b_i, \s_{-i})v^\chi_i(\s)-p_i(b_i, \s_{-i}) \quad \forall b_i.$$ 
	A mechanism is \cepir\ if and only if for every $i, \s.$
	$$x_i(\s)v^\chi_i(\s)-p_i(\s)\ge 0.$$ 
	\end{lemma}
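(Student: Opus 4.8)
The plan is to prove Lemma~\ref{lem_cepicir} by showing that the two key quantities defining \cepic\ and \cepir---namely $EU_i^\chi(\b=\s,s_i;\sigma^*_{-i})$ and $EU_i^\chi((\b_{-i}=\s_{-i},b_i),s_i;\sigma^*_{-i})$---reduce, under the truth-telling strategy $\sigma^*$, to the simple expression $x_i(\cdot)v_i^\chi(\s)-p_i(\cdot)$ involving the cursed valuation $v_i^\chi$ defined in Equation~\eqref{eq_cursed_valution}. Once this reduction is established, both stated equivalences follow immediately by comparing the reduced forms. The main engine is the definition of $EU_i^\chi(\b,s_i;\sigma_{-i})$ given just before the lemma, specialized to the truthful strategy.

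First I would specialize the strategy profile to truth-telling $\sigma^*$, under which agent $j$ always bids $b_j=s_j$. The crucial simplification is that under $\sigma^*$ the conditional density $f_{\sigma^*}(\s_{-i}\mid \b_{-i},s_i)$ collapses: since bids equal signals deterministically, observing $\b_{-i}$ is equivalent to observing $\s_{-i}=\b_{-i}$, so this conditional becomes a point mass at $\s_{-i}=\b_{-i}$. Plugging $\b_{-i}=\s_{-i}$ (the truthful report of others) into the displayed formula for $EU_i^\chi(\b,s_i;\sigma_{-i})$, the first term $(1-\chi)f_{\sigma^*}(\s_{-i}\mid\b_{-i},s_i)$ integrates against $u_i(\b,\s)$ to give exactly $(1-\chi)u_i((b_i,\s_{-i}),\s)=(1-\chi)\bigl(x_i(b_i,\s_{-i})v_i(\s)-p_i(b_i,\s_{-i})\bigr)$, while the second term $\chi f(\s_{-i}\mid s_i)$ integrates $u_i((b_i,\tilde\s_{-i}),(s_i,\tilde\s_{-i}))$ over $\tilde\s_{-i}\sim f(\cdot\mid s_i)$. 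Here I must be careful: in the second (cursed) term the allocation and payment are evaluated at the reported profile $(b_i,\s_{-i})$ while the value is averaged over the hypothetical true signals $\tilde\s_{-i}$.

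Next I would combine the two terms. Because the allocation $x_i(b_i,\s_{-i})$ and payment $p_i(b_i,\s_{-i})$ depend only on the reported bids and not on the integration variable $\tilde\s_{-i}$, they factor out of the $\chi$-term's integral, leaving $x_i(b_i,\s_{-i})\,\chi\,\E_{\tilde\s_{-i}}[v_i(\tilde\s_{-i},s_i)]-\chi\,p_i(b_i,\s_{-i})$. Adding this to the $(1-\chi)$-term and recognizing the definition $v_i^\chi(\s)=(1-\chi)v_i(\s)+\chi\E_{\tilde\s_{-i}}[v_i(\tilde\s_{-i},s_i)]$ from Equation~\eqref{eq_cursed_valution}, the payment terms recombine to a full $p_i(b_i,\s_{-i})$ and the value terms recombine to $x_i(b_i,\s_{-i})v_i^\chi(\s)$. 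This yields the clean identity $EU_i^\chi((\b_{-i}=\s_{-i},b_i),s_i;\sigma^*_{-i})=x_i(b_i,\s_{-i})v_i^\chi(\s)-p_i(b_i,\s_{-i})$, and setting $b_i=s_i$ gives the truthful counterpart. Substituting these into Definition~\ref{def_cepicir} delivers both claimed equivalences.

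The step I expect to require the most care is justifying the collapse of $f_{\sigma^*}(\s_{-i}\mid\b_{-i},s_i)$ to a point mass under deterministic truth-telling, and correctly tracking which arguments of $x_i$, $p_i$, and $v_i$ are the \emph{reported} profile versus the \emph{hypothetical true} signals $\tilde\s_{-i}$ inside the cursed expectation---this bookkeeping is exactly where the winner's-curse intuition (value averaged over others' signals that are falsely treated as independent of their bids) enters, and it is easy to conflate the two roles of $\s_{-i}$. Everything else is linearity of the integral and direct substitution of the definition of $v_i^\chi$.
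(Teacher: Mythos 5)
Your proposal is correct and takes essentially the same route as the paper's own proof: both expand $EU_i^\chi((\b_{-i}=\s_{-i},b_i),s_i;\sigma^*_{-i})$ under truth-telling, collapse the $(1-\chi)$-term's conditional density to a point mass at $\tilde{\s}_{-i}=\s_{-i}$, factor the report-dependent $x_i(b_i,\s_{-i})$ and $p_i(b_i,\s_{-i})$ out of the cursed expectation, and recombine via Equation~(\ref{eq_cursed_valution}) to get $EU_i^\chi = x_i(b_i,\s_{-i})v_i^\chi(\s)-p_i(b_i,\s_{-i})$, after which both equivalences follow from Definition~\ref{def_cepicir}. The only blemish is a notational slip where you write the cursed term as integrating $u_i((b_i,\tilde{\s}_{-i}),(s_i,\tilde{\s}_{-i}))$ (the report argument should stay fixed at $(b_i,\s_{-i})$ rather than vary with the integration variable), but your next sentence states the correct bookkeeping and the computation you carry out is the right one.
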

	\begin{proof}
	To see this lemma holds, we only need to plug the following expression of the expected utility of bidders into Definition~\ref{def_cepicir}: $\forall \s, b_i$
	\begin{equation}
	\begin{aligned}
	& EU^\chi_i((\b_{-i}=\s_{-i}, b_i), s_i;\sigma_{-i}^*)\\
	= & 
	\int_{\tilde{\s}_{-i}\in S^{n-1}}
	\bigg(
	(1-\chi){f_\sigma}(\tilde{\s}_{-i}|\s_{-i}, s_i)u_i((b_i,\s_{-i}),(s_i,\tilde{s}_i))\\
	&\quad\quad\quad\quad\quad +
	\chi f(\tilde{\s}_{-i}|s_i)u_i((b_i,\s_{-i}),(s_i,\tilde{\s}_{-i}))
	\bigg)
	d\tilde{\s}_{-i}\\
	= & 
	(1-\chi)u_i((b_i,\s_{-i}),\s)\\
	&+
	\chi \int_{\tilde{\s}_{-i}}f(\tilde{\s}_{-i}|s_i)\big(v_i(\tilde{\s}_{-i}, s_i)x_i(b_i,\s_{-i})-p_i(b_i,\s_{-i})\big)d\tilde{\s}_{-i}\\
	= & x_i(b_i,\s_{-i})\left((1-\chi)v_i(\s)+\chi\int_{\tilde{\s}_{-i}}f(\tilde{\s}_{-i}|s_i)v_i(s_i,\tilde{\s}_{-i})d\tilde{\s}_{-i}\right) -p_i(\b_i,\s_{-i})\\
	= & x_i(b_i,\s_{-i})\left((1-\chi)v_i(\s)+\chi\E_{\tilde{\s}_{-i}\sim F|_{s_i}}[v_i(\tilde{\s}_{-i},s_i)]\right) -p_i(b_i,\s_{-i})\\
	= & x_i(b_i, \s_{-i})v_i^\chi(\s) -p_i(b_i,\s_{-i}),
	\end{aligned}
	\end{equation}
    where $v_i^\chi(\s)$ in the last equation is the cursed valuation function of the item, as defined in Eq~(\ref{eq_cursed_valution}).
	\end{proof}

	Setting $\chi=0$ in the definition of \cepic\ gives us the definition of ex-post IC (\epic), where bidders truthfully reporting their signals forms an ex-post Nash equilibrium w.r.t. their true ex-post utilities. It is the strongest incentive guarantee one can hope for in the IDV setting. Similarly, \cepic\ is also the strongest incentive notion we can hope for with cursed agents in the IDV setting. Furthermore, Proposition~\ref{prp_robust} shows that \cepic\ is robust to small estimation errors of the $\chi$ parameter.
	
	\begin{proposition}
    \label{prp_robust}
    Let mechanism $M$ be \cepic\ under cursedness parameter $\chi$, and let agent $i$'s be a $\chi_i$-cursed agent, where  $\chi_i=\chi+\epsilon_i$. The truthful-reporting strategy $\sigma^*$ forms an approximate ex-post Nash equilibrium for agent $i$ with parameter $\chi_i$ in the sense that 
    $$EU_i^{\chi_i}(\b=\s,s_i;\sigma_{-i}^*)\ge EU_i^{\chi_i}((\b_{-i}=\s_{-i}, b_i),s_i;\sigma_{-i}^*)-\epsilon_i\cdot v_i(\bar{s},...,\bar{s})\quad\forall i, \s, b_i.$$
    \end{proposition}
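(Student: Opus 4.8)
The plan is to reduce the claim to the closed form for the ex-post cursed utility derived in the proof of Lemma~\ref{lem_cepicir}, and then to exploit that this utility is \emph{affine} in the cursedness parameter. Recall that under the truthful strategy $\sigma_{-i}^*$ of the other agents,
\[EU_i^{\chi}((\b_{-i}=\s_{-i},b_i),s_i;\sigma_{-i}^*)=x_i(b_i,\s_{-i})\,v_i^{\chi}(\s)-p_i(b_i,\s_{-i}),\]
and that $v_i^{\chi}(\s)=(1-\chi)v_i(\s)+\chi\,\E_{\tilde{\s}_{-i}}[v_i(\tilde{\s}_{-i},s_i)]$. Since the payment $p_i(b_i,\s_{-i})$ does not depend on the parameter, replacing $\chi$ by $\chi_i=\chi+\epsilon_i$ perturbs the utility by exactly $\epsilon_i\,x_i(b_i,\s_{-i})\,\Delta_i(\s)$, where I abbreviate $\Delta_i(\s):=\E_{\tilde{\s}_{-i}}[v_i(\tilde{\s}_{-i},s_i)]-v_i(\s)$.

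First I would apply this perturbation identity at the truthful report $b_i=s_i$ and at an arbitrary deviation $b_i$, writing each $EU_i^{\chi_i}$ as the matching $EU_i^{\chi}$ plus its $\epsilon_i x_i \Delta_i$ correction. I would then invoke the hypothesis that $M$ is \cepic\ under $\chi$, namely $EU_i^{\chi}(\s,s_i;\sigma_{-i}^*)\ge EU_i^{\chi}((\s_{-i},b_i),s_i;\sigma_{-i}^*)$, to discard the two $\chi$-utilities. What survives is a single product,
\[EU_i^{\chi_i}(\s,s_i;\sigma_{-i}^*)-EU_i^{\chi_i}((\s_{-i},b_i),s_i;\sigma_{-i}^*)\ \ge\ \epsilon_i\,\Delta_i(\s)\,\big(x_i(\s)-x_i(b_i,\s_{-i})\big),\]
so the whole statement reduces to lower-bounding this one term.

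Finally I would bound the residual using only the standing assumptions on $v_i$. Non-negativity and monotonicity give $0\le\E_{\tilde{\s}_{-i}}[v_i(\tilde{\s}_{-i},s_i)]\le v_i(\bar{s},\dots,\bar{s})$ and $0\le v_i(\s)\le v_i(\bar{s},\dots,\bar{s})$, hence $|\Delta_i(\s)|\le v_i(\bar{s},\dots,\bar{s})$; and because the mechanism is deterministic, $|x_i(\s)-x_i(b_i,\s_{-i})|\le 1$. Thus the residual is at least $-|\epsilon_i|\,v_i(\bar{s},\dots,\bar{s})$, which is the asserted bound (and coincides with $-\epsilon_i\,v_i(\bar{s},\dots,\bar{s})$ for $\epsilon_i\ge 0$, the natural direction of overestimating cursedness). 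There is no genuinely hard step here: the entire content is the observation that the cursed utility depends linearly on $\chi$ through the single additive valuation shift $\Delta_i$, so the \cepic\ inequality at $\chi$ transfers to $\chi_i$ up to an error controlled uniformly by the valuation range $v_i(\bar{s},\dots,\bar{s})$. The only point requiring care is the sign of $\epsilon_i$, which I handle by bounding the product in absolute value.
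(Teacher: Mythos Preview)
Your proposal is correct and follows essentially the same route as the paper's own proof: both express the ex-post cursed utility via Lemma~\ref{lem_cepicir} as $x_i v_i^\chi - p_i$, exploit the affine dependence on $\chi$ to isolate the residual $\epsilon_i\,(x_i(\s)-x_i(b_i,\s_{-i}))\,\Delta_i(\s)$, and then bound each factor by the valuation range. Your treatment is in fact slightly cleaner in two respects: you handle the sign of $\epsilon_i$ explicitly via $|\epsilon_i|$ (the paper's final inequality tacitly assumes $\epsilon_i\ge 0$), and you need not invoke determinism for the bound $|x_i(\s)-x_i(b_i,\s_{-i})|\le 1$, since $x_i\in[0,1]$ already suffices.
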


 	 %\paragraph{Ex-post IC (\epic) and Ex-post IR (\epir)} Setting $\chi=0$ in the definition of \cepic\ and \cepir\ gives us the \emph{standard definition} of \epic\ and \epir, respectively.  \epic\ ensures that each agent truthfully reporting their true signal forms an ex-post Nash equilibrium, i.e., no agent can improve their utility by changing their bid given all other bidders bid truthfully. It is the strongest incentive guarantee one can hope for in the IDV setting. By analogy, \cepic\ is also the strongest incentive notion we can hope for with cursed agents in the IDV setting. \epir\ 
 	  %ensures that no bidder will get a negative utility at the truthful equilibrium. 

 	  \paragraph{Ex-post IR (\epir)} 
 	  Setting $\chi=0$ in the definition of \cepir\  gives us the standard definition of \epir, which ensures that no bidder will get a true negative ex-post utility at the truthful-reporting equilibrium.
 	  A mechanism that is \cepicir\ has the outcome that every agent bidding their true signal is a cursed equilibrium (or an ex-post equilibrium in terms of their cursed utilities) with each agent obtaining a non-negative  utility based on their cursed valuation functions.
 	  However, although the agents think their utility will be non-negative for any possible realization of signals $\s$ according to their belief, they might end up paying more than their value for the item leading to a negative utility, because their belief is inaccurate. Therefore, in addition to requiring $\cepicir$, we further consider designing mechanisms that are \epir. Such mechanisms guarantee the agent will not experience \textit{actual} negative utility upon receiving an item, therefore, such an agent would not regret participating in the auction in hindsight.%  \red{such that the winner will not regret joining in the auction and paying the price, even if all the information is revealed after the auction is realized.}

	 \paragraph{Ex-post budget balance (\epbb)} In order to achieve the $\epir$ property, the mechanism might need to make positive transfers since the agents over-estimate their value for the item sold. In order to ensure the seller does not end up with negative revenue, we may also want to require that the mechanisms will satisfy the ex-post budget balance constraint.
	
	\begin{definition}[Ex-post budget-balance]
		A mechanism $M=(x,p)$ is \textit{ex-post budget-balanced} (\epbb) if for every  signal profile $\s$, $\sum_i p_i(\s)\ge 0.$
	\end{definition}
	
	A more relaxed 	requirement is Ex-ante budget-balance, where the mechanism does not lose money \textit{in expectation}.
	%\begin{definition}[Ex-ante Budget-balance]
	%	A mechanism $M=(x,p)$ is budget-balanced if $$\int_{\s}f(\s)\sum_i p_i(\s)d\s \ \geq\  0.$$
	%\end{definition}
	
%	A natural way to ensure budget-balance is to require that the mechanism is \textit{ex-post} budget-balanced.

	When devising a mechanism that satisfies \cepicir, there is a natural tension between \epir\ and budget-balance. The socially optimal mechanism might have negative revenue when satisfying $\epir$ (see Section~\ref{sec:no-budget-balance})). Moreover, while typical mechanisms usually have more revenue with cursed agents (without imposing \epir)~\citep{eyster2005cursed}, when requiring the mechanism to satisfy \epir, the revenue only decreases (see Proposition~\ref{prop:rev-mon}).
	%might shrinks significantly (see Section TBD).
	
	\section{Preliminaries}
	\label{sec:prelims}
	\subsection{\cepicir\ Mechanisms and Virtual Valuations}\label{sec:virtual-value}
	
	\citet{RoughgardenT16} extend Myerson's Lemma and payment identity for the IDV model.  Whenever $v_i^\chi$ is monotone, a simple adaptation of their results characterizes the space of \cepicir\ mechanisms. The proof is omitted, as it is identical to the one in~\cite{RoughgardenT16} for the case of non-cursed agents. 
	\begin{proposition}
		\label{thm:CEPICIR}
		A mechanism $M=(x,p)$ is \cepicir\ if and only if for every $i$, $s_{-i}$, the allocation rule $x_i$ is monotone non-decreasing in the signal $s_i$, and the following payment identity and payment inequality hold:
		\begin{align}p_i(\s) = x_i(\s)v_i^\chi(\s)-\int_{v_i^\chi(0,\s_{-i})}^{v_i^\chi(\s)}x_i((v_i^\chi)^{-1}(t|\s_{-i}), \s_{-i})dt - (x_i(0,\s_{-i})v_i^\chi(0,\s_{-i}) - p_i(0, \s_{-i}))\label{eq:payment identity}
    	\end{align}
		\begin{align}
			p_i(0, \s_{-i})\le  x_i(0,\s_{-i})v^\chi_i(0,\s_{-i})%\jt{(\text{SHOULD BE JUST }	p_i(0, \s_{-i})\le0)}
			\label{eq:cepir}
		\end{align}
	\end{proposition}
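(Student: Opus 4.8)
The plan is to reduce the statement to the Myerson-style payment characterization of \cite{RoughgardenT16}, using the fact (Lemma~\ref{lem_cepicir}) that the cursed incentive and rationality constraints are exactly the \emph{ordinary} ex-post constraints in which agent $i$'s effective value is the cursed valuation $v_i^\chi(\s)$. Concretely, I would fix an agent $i$ and a profile $\s_{-i}$ of the others and reparametrize $i$'s report by the induced value $t = v_i^\chi(s_i,\s_{-i})$. Since $v_i^\chi$ is assumed monotone increasing in $s_i$, the map $s_i \mapsto v_i^\chi(s_i,\s_{-i})$ is a strictly increasing bijection onto its range, with inverse $(v_i^\chi)^{-1}(\cdot\,|\,\s_{-i})$. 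Writing $X(t) := x_i((v_i^\chi)^{-1}(t|\s_{-i}),\s_{-i})$, $P(t) := p_i((v_i^\chi)^{-1}(t|\s_{-i}),\s_{-i})$, and $t_0 := v_i^\chi(0,\s_{-i})$, Lemma~\ref{lem_cepicir} turns \cepic\ into the statement that, for each true value $t$, truthful reporting maximizes $X(b)\,t - P(b)$ over all reports $b$, and turns \cepir\ into $X(t)\,t - P(t) \ge 0$ for all $t$. This is precisely a one-dimensional screening problem in value space, to which Myerson's argument applies verbatim.

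For the \textbf{forward direction}, I would first establish monotonicity by adding the two \cepic\ inequalities for a pair of values $t<t'$: this yields $(X(t')-X(t))(t'-t)\ge 0$, hence $X$, and therefore $x_i$, is non-decreasing in $s_i$. Next, the payment identity follows from the standard envelope computation: the truthful utility $U(t):=X(t)t-P(t)$ is convex (a supremum of affine functions of $t$) with $U'(t)=X(t)$ almost everywhere, so $U(t)=U(t_0)+\int_{t_0}^t X(\tau)\,d\tau$; rearranging with $U(t_0)=x_i(0,\s_{-i})v_i^\chi(0,\s_{-i})-p_i(0,\s_{-i})$ and translating back to $s_i$ gives exactly Equation~\eqref{eq:payment identity}. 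Finally, since $X\ge 0$, the utility $U(t)=U(t_0)+\int_{t_0}^t X$ is non-decreasing in $t$ and hence minimized at $t=t_0$; thus \cepir\ holds for all $\s$ if and only if $U(t_0)\ge 0$, which is Equation~\eqref{eq:cepir}.

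For the \textbf{reverse direction}, I would assume monotonicity together with \eqref{eq:payment identity} and \eqref{eq:cepir} and verify both constraints directly. Substituting the payment identity into the gain from deviating from the truthful report to a report $b$ gives $U(t)-\big(X(b)t-P(b)\big)=\int_b^t\big(X(\tau)-X(b)\big)\,d\tau$, which is non-negative both when $t>b$ and when $t<b$ by monotonicity of $X$; this establishes \cepic. The same envelope formula gives $U(t)=U(t_0)+\int_{t_0}^t X \ge U(t_0)\ge 0$ using \eqref{eq:cepir}, establishing \cepir.

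The only place requiring care, and thus the main obstacle, is the reparametrization itself: one must check that $s_i\mapsto v_i^\chi(s_i,\s_{-i})$ is a strictly increasing bijection so that $(v_i^\chi)^{-1}(\cdot\,|\,\s_{-i})$ and the changes of variables in the integrals are well defined, and that the envelope identity $U(t)=U(t_0)+\int_{t_0}^t X$ holds \emph{without} differentiability assumptions on $x_i$ (the standard monotone-function argument, valid because $X$ is bounded and monotone). Given the monotonicity hypothesis on $v_i^\chi$ already stated in the proposition, these reduce to routine facts, and the remainder is identical to the non-cursed analysis of \cite{RoughgardenT16}.
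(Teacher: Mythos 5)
Your proposal is correct and follows exactly the route the paper intends: the paper omits the proof, stating it is identical to the argument of \cite{RoughgardenT16} once Lemma~\ref{lem_cepicir} recasts the cursed constraints as ordinary ex-post constraints under the (monotone, by Lemma~\ref{lemma_monotone}) effective valuation $v_i^\chi$, and your reparametrization $t=v_i^\chi(s_i,\s_{-i})$ followed by the standard Myerson envelope argument is precisely that adaptation spelled out. No gaps; the care you flag about the strictly increasing bijection and the differentiability-free envelope identity is exactly what makes the cited reduction legitimate.
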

	
	 We show that indeed $v_i^\chi$ is monotone in our setting (affiliated signals and monotone $v_i$). We defer the proof to Appendix~\ref{sec:missing-proofs}.
		\begin{lemma}
			\label{lemma_monotone}
			The cursed valuation for agent $i$, $v_i^\chi(\s)$, is monotone-non-decreasing in all agents' signals and monotone-increasing in $s_i$.
		\end{lemma}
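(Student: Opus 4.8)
The plan is to exploit the additive structure of the cursed valuation exposed in Equation~\eqref{eq_cursed_valution}. Writing $g_i(s_i) := \E_{\tilde{\s}_{-i}\sim F|_{s_i}}[v_i(\tilde{\s}_{-i},s_i)]$, we have $v_i^\chi(\s) = (1-\chi)v_i(\s) + \chi\, g_i(s_i)$, where the second summand depends on $\s$ only through $s_i$. Since $\chi\in[0,1]$, both coefficients are non-negative, so monotonicity of $v_i^\chi$ follows once it is established for each summand separately.

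For the signals $s_j$ of agents $j\neq i$, only the first summand $(1-\chi)v_i(\s)$ varies, and it is monotone-non-decreasing in every such $s_j$ because $v_i$ is monotone-non-decreasing in all signals and $1-\chi\ge 0$. This immediately yields monotonicity of $v_i^\chi$ in $\s_{-i}$, so the substantive task is the (strict) monotonicity in $s_i$.

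For $s_i$ I would fix $s_i' > s_i$ and split $g_i(s_i') - g_i(s_i)$ into a \emph{direct} effect (through the argument of $v_i$) and an \emph{indirect} effect (through the conditioning distribution):
\begin{align*}
g_i(s_i') - g_i(s_i) &= \underbrace{\big(\E_{F|_{s_i'}}[v_i(\tilde{\s}_{-i},s_i')] - \E_{F|_{s_i'}}[v_i(\tilde{\s}_{-i},s_i)]\big)}_{\text{direct}} \\
&\quad + \underbrace{\big(\E_{F|_{s_i'}}[v_i(\tilde{\s}_{-i},s_i)] - \E_{F|_{s_i}}[v_i(\tilde{\s}_{-i},s_i)]\big)}_{\text{indirect}}.
\end{align*}
The direct term is strictly positive because $v_i(\tilde{\s}_{-i},\cdot)$ is monotone-increasing in $s_i$ for every fixed $\tilde{\s}_{-i}$ and $f$ is nowhere zero, so the strict pointwise inequality survives integration. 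The indirect term compares the expectation of the fixed non-decreasing function $\tilde{\s}_{-i}\mapsto v_i(\tilde{\s}_{-i},s_i)$ under the two conditional laws $F|_{s_i'}$ and $F|_{s_i}$, and is non-negative provided $F|_{s_i'}$ first-order stochastically dominates $F|_{s_i}$. Hence $g_i$ is strictly increasing, and combining with the first summand (which is non-decreasing, and strictly increasing whenever $\chi<1$) shows $v_i^\chi$ is monotone-increasing in $s_i$ for every $\chi\in[0,1]$.

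The crux is therefore the indirect term, i.e.\ showing that the conditional distribution of $\s_{-i}$ is monotone in $s_i$ in the first-order-stochastic-dominance sense. This is precisely where signal affiliation enters: affiliation is log-supermodularity of $f$, and a standard consequence is that $\E_{\tilde{\s}_{-i}\sim F|_{s_i}}[h(\tilde{\s}_{-i})]$ is non-decreasing in $s_i$ for every non-decreasing $h$ \citep{milgrom1982theory}. I would either invoke this result directly, or reprove the needed consequence by noting that affiliation implies $f(\s_{-i}\mid s_i)$ has the monotone likelihood ratio property in $s_i$, which forces FOSD-monotonicity of $F|_{s_i}$ and thus non-negativity of the indirect term since $v_i(\cdot,s_i)$ is non-decreasing in $\tilde{\s}_{-i}$. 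This affiliation-to-FOSD step is the main obstacle; the remainder is bookkeeping on the decomposition and the sign of each piece.
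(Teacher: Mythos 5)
Your proposal is correct, and its skeleton matches the paper's: both use the decomposition in Equation~\eqref{eq_cursed_valution} to dispose of monotonicity in $\s_{-i}$ trivially and reduce the real work to showing that $\E_{\tilde{\s}_{-i}\sim F|_{s_i}}[v_i(\tilde{\s}_{-i},s_i)]$ is monotone in $s_i$, and both locate the crux in the fact that affiliation makes conditional expectations of non-decreasing functions non-decreasing in the conditioning signal. The difference is how that crux is discharged. You invoke it as a black box (the result you cite from \citet{milgrom1982theory}, equivalently multivariate likelihood-ratio dominance implying multivariate first-order stochastic dominance), whereas the paper reproves it from scratch: it integrates out one coordinate of $\s_{-i}$ at a time, showing by induction that each partial conditional expectation $g^{(k)}$ stays coordinate-wise non-decreasing, where each inductive step is exactly your direct/indirect split but in a single variable, so that only the elementary univariate consequence of pairwise affiliation (FOSD of $s_j|s_i=x$ over $s_j|s_i=y$ for $x>y$) is ever needed. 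One caution about your fallback option: ``affiliation gives MLR, which forces FOSD'' is itself a genuinely multivariate theorem in dimension $n-1$ --- it is not the one-line argument it is on the real line, and its standard proof is essentially the coordinate-by-coordinate induction the paper carries out --- so you should either cite it outright or expect to reproduce that induction. On the plus side, your handling of strictness is cleaner than the paper's: by isolating the direct term and using that $v_i$ is strictly increasing in $s_i$ with $f$ nowhere zero, you obtain strict monotonicity of $v_i^\chi$ in $s_i$ even when $\chi=1$, a case where the paper's proof, which only establishes that the conditional-expectation term is non-decreasing, leaves the strict part of the lemma implicit.
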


		%To prove Lemma~\ref{lemma_monotone}, 
		%we only need to show that $\E_{\s_{-i}|s_i}[v_i(\s)]$ is non-decreasing in $s_i$, because  $v_i^\chi(\s) = (1-\chi)v_i(\s)+\chi\E_{\s_{-i}|s_i}[v_i(\s)]$ and $v_i(\s)$ satisfies the monotonicity condition.  $\E_{\s_{-i}|s_i}[v_i(\s)]$ is non-decreasing in $s_i$ due to 
		%that $v_i(\s)$ is non-decreasing in all signals and signals are also affiliated (positive correlated). We give a formal proof in the appendix.} %Given Lemma~\ref{lemma_monotone}, 
		% We directly give the characterization result as follows.}

	In the setting where valuations are not cursed, setting $p_i(0, \s_{-i})=0$ maximizes the seller's revenue, and makes sure that the seller never has to pay the buyers participating in the auction, therefore ensures that the mechanism is budget-balanced. However, for cursed agents, even though $v_i^\chi(\s)\geq p_i(\s)$, it might as well be the case that $v_i(\s) < p_i(\s),$ resulting in negative utility, and breaching the $\epir$ property. Therefore, fixing a mechanism, one might want to set $p_i(0, \s_{-i})$ to be strictly smaller than zero for some values of $\s_{-i}$, which means the mechanism might pay agents for participating. 
	Thus, in designing a mechanism to guarantee $\epir$, one must take care in order not to violate budget balance.

		\citet{RoughgardenT16} extend the definition of a virtual valuation to interdependent values setting. Given $\s_{-i}$, they define a function
	$$\varphi_i(s_i | \s_{-i})  = v_i(\s) - {v_i}'(s_i, \s_{-i})\frac{1-F(s_i\ |\ \s_{-i})}{f(s_i\ |\ \s_{-i})},$$
	and show that similarly to the private value setting, revenue maximization reduces to virtual welfare maximization. The definition of virtual valuations and formulating revenue maximization as virtual welfare maximization naturally to the case of cursed bidders.
	\begin{definition}[Cursed virtual value] The cursed virtual valuation of agent $i$ conditioned on $\s_{-i}$ is defined as 
		$$\varphi_i^\chi(s_i\ |\ \s_{-i})  = v_i^\chi(\s) - {v_i^\chi}'(s_i, \s_{-i})\frac{1-F(s_i\ |\ \s_{-i})}{f(s_i\ |\ \s_{-i})}.$$
	\end{definition}
	
	The next proposition follows the exact same derivation as the one in~\cite{RoughgardenT16,myerson1981optimal} for non-cursed agents.
	\begin{proposition}[Follows from~\cite{RoughgardenT16,myerson1981optimal}]
		For every interdependent values setting, the expected revenue of a \cepicir\ mechanism equals its expected conditional cursed virtual surplus, up to an additive factor:
		$$\E_{\s}\left[\sum_i p_i(\s)\right]=\E_{\s}\left[\sum_i x_i(\s)\varphi^\chi(s_i|\s_{-i})\right]-\sum_{i}\E_{\s_{-i}}[x_i(0,\s_{-i})v_i^\chi(0,\s_{-i}) - p_i(0, \s_{-i})]$$\label{thm:virtual-welfare}
	\end{proposition}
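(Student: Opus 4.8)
The plan is to reduce to Myerson's integration-by-parts argument, applied separately to each agent $i$ after conditioning on the other agents' signals $\s_{-i}$. By linearity of expectation it suffices to show, for each fixed $i$ and (almost every) $\s_{-i}$, that $\E_{s_i\mid\s_{-i}}[p_i(\s)]$ equals $\E_{s_i\mid\s_{-i}}[x_i(\s)\varphi_i^\chi(s_i\mid\s_{-i})]$ minus the constant $x_i(0,\s_{-i})v_i^\chi(0,\s_{-i})-p_i(0,\s_{-i})$; taking expectation over $\s_{-i}$ and summing over $i$ then yields the claim. I would start by substituting the payment identity~\eqref{eq:payment identity} from Proposition~\ref{thm:CEPICIR}. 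Its last parenthesized term does not depend on $s_i$, so it survives the inner expectation untouched and, after averaging over $\s_{-i}$ and summing over $i$, produces exactly the advertised additive factor. What remains is to show that the conditional expectation of the first two terms, $x_i(\s)v_i^\chi(\s)-\int_{v_i^\chi(0,\s_{-i})}^{v_i^\chi(\s)}x_i((v_i^\chi)^{-1}(t\mid\s_{-i}),\s_{-i})\,dt$, equals $\E_{s_i\mid\s_{-i}}[x_i(\s)\varphi_i^\chi(s_i\mid\s_{-i})]$.

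For this core step I would first invoke that $v_i^\chi(\cdot,\s_{-i})$ is strictly increasing in $s_i$ (Lemma~\ref{lemma_monotone}), so its inverse is well defined and I may change variables $t=v_i^\chi(z,\s_{-i})$, $dt={v_i^\chi}'(z,\s_{-i})\,dz$, rewriting the integral term as $\int_0^{s_i}x_i(z,\s_{-i}){v_i^\chi}'(z,\s_{-i})\,dz$. Taking the conditional expectation over $s_i$ and swapping the order of integration over the triangular region $\{0\le z\le s_i\le\bar s\}$ (justified by nonnegativity of the integrand via Tonelli) turns the inner integral over $s_i$ into $1-F(z\mid\s_{-i})$, giving $\int_0^{\bar s}x_i(z,\s_{-i}){v_i^\chi}'(z,\s_{-i})(1-F(z\mid\s_{-i}))\,dz$. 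Multiplying and dividing by $f(z\mid\s_{-i})$ rewrites this as $\E_{s_i\mid\s_{-i}}[x_i(\s){v_i^\chi}'(s_i,\s_{-i})\frac{1-F(s_i\mid\s_{-i})}{f(s_i\mid\s_{-i})}]$. Subtracting this from $\E_{s_i\mid\s_{-i}}[x_i(\s)v_i^\chi(\s)]$ and collecting terms reproduces exactly the definition of $\varphi_i^\chi(s_i\mid\s_{-i})$, completing the step.

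I expect the only genuine subtleties to be the two analytic manipulations: the change of variables (which needs the strict monotonicity of $v_i^\chi$ in $s_i$, supplied by Lemma~\ref{lemma_monotone}, so that $(v_i^\chi)^{-1}(\cdot\mid\s_{-i})$ exists and is absolutely continuous) and the Fubini/Tonelli exchange in the double integral. Everything else is bookkeeping, and indeed the argument is identical in structure to the Myerson and \citet{RoughgardenT16} derivation for non-cursed agents, with $v_i$ and $\varphi_i$ simply replaced throughout by their cursed counterparts $v_i^\chi$ and $\varphi_i^\chi$; the cursedness parameter $\chi$ enters only through these substituted objects and does not otherwise affect the algebra, which is why the proposition can be asserted to follow the same derivation.
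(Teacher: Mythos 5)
Your proposal is correct and is precisely the standard Myerson/Roughgarden--Talgam-Cohen derivation that the paper itself invokes (the paper omits the proof, stating only that it ``follows the exact same derivation'' as in the non-cursed case). Your agent-by-agent conditioning on $\s_{-i}$, the change of variables $t=v_i^\chi(z,\s_{-i})$, and the Tonelli swap producing the $\bigl(1-F(z\mid\s_{-i})\bigr)$ factor are exactly the steps that derivation consists of, with $v_i^\chi$ and $\varphi_i^\chi$ in place of $v_i$ and $\varphi_i$.
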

	
	\subsection{Deterministic \cepicir\ Mechanisms}
	\label{sec_det_cepicir}
	
	In this paper we focus on deterministic mechanisms, as deterministic mechanisms are optimal for our setting whenever bidders are not cursed~\citep{RoughgardenT16}.
	The following is a direct corollary of the monotonicity of \cepicir\ mechanisms. 
	\begin{corollary}
		Any deterministic \cepicir\  mechanism is a threshold mechanism. i.e., for every~$i$, there exists a function $t_i(\cdot)$ such that $x_i(s_i,\s_{-i})=\begin{cases}& 1 \quad s_i> t_i(\s_{-i})\\ & 0 \quad s_i \le t_i(\s_{-i})\end{cases}.$ \label{cor:det-threshold}  
	\end{corollary}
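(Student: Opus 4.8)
The plan is to derive this corollary directly from the monotonicity property established in Proposition~\ref{thm:CEPICIR}, which states that any \cepicir\ mechanism has an allocation rule $x_i$ that is monotone non-decreasing in $s_i$ for every fixed $\s_{-i}$. First I would fix an agent $i$ and a signal profile $\s_{-i}$ of the other agents, and consider the function $s_i \mapsto x_i(s_i, \s_{-i})$. Since the mechanism is deterministic, this function takes values only in $\{0,1\}$, and by monotonicity it is non-decreasing. A non-decreasing $\{0,1\}$-valued function of a real variable is necessarily a step function that jumps from $0$ to $1$ exactly once.

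The key step is then to define the threshold $t_i(\s_{-i})$ as the supremum of the set of signals $s_i$ for which the agent still loses, i.e. $t_i(\s_{-i}) = \sup\{s_i : x_i(s_i, \s_{-i}) = 0\}$ (with the convention that this supremum is $\bar{s}_i$ if the agent never wins). By monotonicity, $x_i(s_i,\s_{-i}) = 0$ for all $s_i$ below the threshold and $x_i(s_i,\s_{-i}) = 1$ for all $s_i$ strictly above it, which is exactly the threshold form claimed in the statement. The only subtlety is the behavior precisely at the threshold point $s_i = t_i(\s_{-i})$; the corollary adopts the convention that the agent loses at the threshold (i.e. $x_i = 0$ when $s_i \le t_i(\s_{-i})$), which is consistent with Assumption~\ref{assump:tie-no-alloc} that disallows allocation at ties, and in any case affects only a measure-zero set of signals and hence does not change expected welfare or revenue.

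The main (and essentially only) obstacle is this boundary convention at the threshold itself, but it is genuinely minor: because $f$ is continuous and the single point $s_i = t_i(\s_{-i})$ has zero probability measure, the value of $x_i$ there can be fixed to $0$ without loss. Everything else is an immediate consequence of the structural characterization already proved, so the corollary really is a direct corollary rather than requiring any new technical work.
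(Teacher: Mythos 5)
Your proof is correct and follows exactly the route the paper intends: the paper states this as a direct corollary of the monotonicity in Proposition~\ref{thm:CEPICIR}, and your argument (deterministic $\{0,1\}$-valued $+$ monotone non-decreasing $\Rightarrow$ step function, with $t_i(\s_{-i})=\sup\{s_i : x_i(s_i,\s_{-i})=0\}$) is precisely that reasoning spelled out. Your explicit treatment of the boundary point $s_i=t_i(\s_{-i})$ (a measure-zero convention under continuous $f$, at which the winner is in any case indifferent under the payment identity) is a detail the paper leaves implicit, and it is handled correctly.
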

	We refer to $t_i(\s_{-i})$ as the \textit{critical bid} for agent $i$.  Note that when $t_i(\s_{-i})=\bar{s}$, we never allocate to agent $i$.
	The following lemma restricts the set of allocation rules we inspect.
		%The following lemma shows that we can use Lemma~\ref{lem:epir-implies-cepir} in the setting studied in this section.
	\begin{lemma}
		For every deterministic, anonymous $\cepic$ mechanism and for every $\s$, if the items is allocated, it is allocated to a bidder in $\mathrm{argmax}_i\{s_i\}$. \label{lem:max-alloc}
	\end{lemma}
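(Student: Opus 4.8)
The plan is to argue by contradiction, using only the own-signal monotonicity guaranteed by \cepic\ together with anonymity and single-item feasibility, and deliberately \emph{not} Assumption~\ref{assump:tie-no-alloc} (so as to avoid circularity, since this lemma is meant to justify that that assumption is without loss). Suppose toward contradiction that at some profile $\s$ the item is allocated to agent $i$, i.e.\ $x_i(\s)=1$, yet $i\notin\mathrm{argmax}_j\{s_j\}$, so there is an agent $k\neq i$ with $s_k>s_i$.

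First I would recall that a deterministic \cepic\ mechanism induces a monotone (threshold) allocation rule: by the characterization in Lemma~\ref{lem_cepicir}, $x_i(\cdot,\s_{-i})$ is monotone non-decreasing in agent $i$'s own signal with $\s_{-i}$ held fixed (this is the monotonicity underlying Corollary~\ref{cor:det-threshold}). Since $x_i(s_i,\s_{-i})=1$ and $s_i<s_k$, raising agent $i$'s reported signal from $s_i$ up to $s_k$ while keeping $\s_{-i}$ (and in particular $s_k$) fixed preserves the allocation, giving $x_i(s_k,\s_{-i})=1$. Denote this new profile by $\s'=(s_k,\s_{-i})$; note that in $\s'$ agents $i$ and $k$ report the \emph{same} signal $s_k$, while every other coordinate is unchanged.

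The key step is to exploit anonymity at $\s'$. Because $s_i'=s_k'=s_k$, the transposition of $i$ and $k$ maps $\s'$ to itself, so anonymity (applied with $j=k$, since $s_i'=t_k$) yields $x_i(\s')=x_k(\s')$. Single-item feasibility forces $x_i(\s')+x_k(\s')\le 1$, and since these two quantities are equal and $\{0,1\}$-valued they must both equal $0$. This contradicts $x_i(\s')=1$. Hence no such $k$ exists, i.e.\ $s_i\ge s_j$ for every $j$; and because each inequality is in fact strict, the winner is the unique highest bidder, which in particular proves the claim.

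I expect the only real subtlety -- the ``main obstacle'' -- to be bookkeeping rather than a genuine difficulty: one must check that the monotonicity step holds $\s_{-i}$ (and hence $s_k$) fixed while raising \emph{only} the coordinate $s_i$, and that the anonymity argument is applied to the \emph{post-increase} profile $\s'$, where the top-tie is exactly between $i$ and $k$. It is worth remarking that the same argument, read contrapositively, shows the item is never awarded on a top-tie (and, since all signals are non-negative, never to a zero-signal bidder), which is precisely why Assumptions~\ref{assump:zero-no-alloc}--\ref{assump:tie-no-alloc} are without loss for the class of mechanisms we study.
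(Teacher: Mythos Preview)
Your argument is correct and uses the same ingredients as the paper---own-signal monotonicity, anonymity, and single-item feasibility---but arranged slightly differently: the paper first swaps the winner with a higher-signal agent via anonymity and then applies monotonicity twice (once to the winner in the original profile, once to the swapped agent) to reach a common profile where both simultaneously win; you instead apply monotonicity once to raise the winner $i$'s signal up to $s_k$, and then invoke anonymity at the resulting tied profile to conclude $x_i(\s')=x_k(\s')$, whence feasibility forces both to vanish. This is a touch more economical. One minor quibble: the clause ``because each inequality is in fact strict, the winner is the unique highest bidder'' is not supported by the preceding argument (you only ruled out $s_k>s_i$, not $s_k=s_i$ for $k\neq i$), but this overreach is harmless since the lemma only asserts membership in $\mathrm{argmax}_i\{s_i\}$, which you have already established by that point.
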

	\begin{proof}
		Assume the item is given to an agent $j$ such that there exists $i$ for which $s_i$ is strictly bigger than $s_j$. By anonymity, there exists some $i\in \mathrm{argmax}_\ell\{s_\ell\}$ such that if we switch $i$ and $j$'s signal, $i$ wins the item. Since $j$ wins at $\s$, $j$ also wins at $\s'=(s'_j=s_i, \s'_{-j} = \s_{-j})$ by monotonicity of \cepic\ mechanisms. Since $i$ wins at $\s^*=(s_i^*=s_j, s_j^*=s_i, \s_{-ij})$, by monotonicity, $i$ also wins at $\s^*=(s_i^*=s_i, s_j^*=s_i, \s_{-ij})=\s'$, a contradiction.   
	\end{proof}
	
	The above lemma implies that when dealing with such mechanisms, assuming Assumptions~\ref{assump:zero-no-alloc} and~\ref{assump:tie-no-alloc} are without loss. By the above lemma, a zero signal cannot win unless it is not the signal, therefore Assumption~\ref{assump:zero-no-alloc} follows from Assumption~\ref{assump:tie-no-alloc}. For assumption~\ref{assump:tie-no-alloc}, one can take any mechanism that violates this assumption, and set $x_i(\s)=0$ for every $\s$ where the highest bid is not unique. By the above lemma, such mechanism remains monotone non-decreasing in a bidder's own bid, therefore $\cepic$.  By continuity of the signal distribution, the resulting mechanism has the same expected revenue and social welfare as the original one.

\section{Implications of Ex-post IR}\label{sec:epir-implication}

In this section we discuss implications of imposing \epir\ on the mechanism. %Missing proofs of this section appear in Appendix~\ref{sec:epir-implication-proofs}
We first show that in order to achieve our incentive properties, it suffices to design an ex-post IR mechanism, and cursed ex-post IR will follow.

	\begin{lemma}
		For every interdependent value setting, %every mechanism in which for each $\s_{-i}$, $x_i(0,\s_{-i})=0$,
		\cepic\ and \epir\ implies \cepir.\label{lem:epir-implies-cepir}
	\end{lemma}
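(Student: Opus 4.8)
The plan is to leverage the payment identity that \cepic\ forces on any such mechanism (Proposition~\ref{thm:CEPICIR}) and then show that imposing \epir\ pins down the only free parameter in that identity, the constant $p_i(0,\s_{-i})$, to be non-positive --- which is exactly what is needed to upgrade \epir\ to \cepir. By Lemma~\ref{lem_cepicir}, proving \cepir\ amounts to showing $x_i(\s)v_i^\chi(\s) - p_i(\s) \ge 0$ for every $i$ and $\s$. Rearranging the payment identity in Equation~\eqref{eq:payment identity}, this cursed utility equals
\[
\int_{v_i^\chi(0,\s_{-i})}^{v_i^\chi(\s)} x_i\big((v_i^\chi)^{-1}(t\mid\s_{-i}),\s_{-i}\big)\,dt \;+\; \big(x_i(0,\s_{-i})v_i^\chi(0,\s_{-i}) - p_i(0,\s_{-i})\big),
\]
so it suffices to argue that each of the two summands is non-negative.

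For the integral term I would invoke Lemma~\ref{lemma_monotone}, which guarantees $v_i^\chi$ is monotone-increasing in $s_i$; hence $v_i^\chi(0,\s_{-i}) \le v_i^\chi(\s)$, the interval of integration is correctly oriented, and the integrand is an allocation probability and thus non-negative. For the constant term I would use Assumption~\ref{assump:zero-no-alloc} to set $x_i(0,\s_{-i}) = 0$, reducing the term to $-p_i(0,\s_{-i})$, and then apply the hypothesis \epir\ at the signal profile $(0,\s_{-i})$: since $x_i(0,\s_{-i})v_i(0,\s_{-i}) - p_i(0,\s_{-i}) \ge 0$ and $x_i(0,\s_{-i}) = 0$, we obtain $p_i(0,\s_{-i}) \le 0$. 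Summing the two non-negative contributions gives $x_i(\s)v_i^\chi(\s)-p_i(\s)\ge 0$, i.e., \cepir.

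The conceptual crux --- and the reason \epir\ does not imply \cepir\ by a naive pointwise comparison --- is the winner's blessing regime where $v_i^\chi(\s) < v_i(\s)$: there \epir\ only certifies $p_i(\s) \le v_i(\s)$, which is strictly weaker than the $p_i(\s) \le v_i^\chi(\s)$ that \cepir\ demands. The step I expect to carry the real weight is therefore using the \cepic\ payment identity to express the payment \emph{exactly} rather than merely bounding it through \epir; once the payment is written this way, monotonicity of $v_i^\chi$ makes the integral term manifestly non-negative, so that \epir\ is only needed to supply the single sign condition $p_i(0,\s_{-i}) \le 0$. A minor point to confirm is that the payment identity holds for \emph{every} \cepic\ mechanism (it is the incentive-compatibility content of Proposition~\ref{thm:CEPICIR}, separate from the individual-rationality inequality~\eqref{eq:cepir}), so that we are entitled to use it before \cepir\ has been established.
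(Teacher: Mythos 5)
Your proof is correct and follows essentially the same route as the paper's: both rearrange the \cepic\ payment identity of Proposition~\ref{thm:CEPICIR}, use Assumption~\ref{assump:zero-no-alloc} together with \epir\ evaluated at the zero signal to obtain $p_i(0,\s_{-i})\le 0$ (equivalently, Equation~\eqref{eq:cepir}), and conclude \cepir. The only cosmetic difference is that you verify the inequality of Lemma~\ref{lem_cepicir} pointwise by checking both terms of the decomposition are non-negative (in effect re-deriving the sufficiency direction of Proposition~\ref{thm:CEPICIR}, using Lemma~\ref{lemma_monotone} for the integral term), whereas the paper stops once it recovers Equation~\eqref{eq:cepir} and invokes the characterization.
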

	\begin{proof}
		For a mechanism to be \epir, we need that the actual value an agent gets from an allocation is higher then the price she pays. That is,  $x_i(\s)v_i(\s) \geq p_i(\s)$ for every $\s$. Using equation~\eqref{eq:payment identity} for \cepic\ mechanisms, and rearranging, we get that for every $i$ and every $\s$
		\begin{eqnarray}
			p_i(0, \s_{-i}) \le x_i(\s)\left(v_i(\s) - v_i^\chi(\s)\right)
			+ \int_{v_i^\chi(0,\s_{-i})}^{v_i^\chi(\s)}x_i((v_i^\chi)^{-1}(t|\s_{-i}), \s_{-i})dt  + x_i(0,\s_{-i})v_i^\chi(0,\s_{-i}). \label{eq:epir}
		\end{eqnarray}
		
		Specifically, fixing $\s_{-i}$ and setting $s_i=0$, we get 
		\begin{eqnarray*}
			p_i(0, \s_{-i}) &\le& x_i(0, \s_{-i})\left(v_i(0,\s_{-i}) - v_i^\chi(0,\s_{-i})\right)
			+ \int_{v_i^\chi(0,\s_{-i})}^{v_i^\chi(0,\s_{-i})}x_i((v_i^\chi)^{-1}(t|\s_{-i}), \s_{-i})dt  + x_i(0,\s_{-i})v_i^\chi(0,\s_{-i})\\
			& = & x_i(0,\s_{-i})v_i^\chi(0,\s_{-i}),
		\end{eqnarray*}
		where the inequality used Assumption~\ref{assump:zero-no-alloc}, %\AE{Currently, assumption 1 is removed. Should we make it more visible?}, 
		that agents aren't allocated at their lowest signal ($x_i(0,\s_{-i})=0$). This coincides with Equation~\eqref{eq:cepir}, implying \cepir.
	\end{proof}

	According to Corollary~\ref{cor:det-threshold}, every deterministic allocation rule is equivalent to a set of threshold functions $\{t_i(\cdot)\}_i$. As noted before, the only freedom one have in setting payments of \cepic\ mechanisms is by setting the term $p_i(0,\s_{-i})$. We show that when maximizing revenue subject to \cepic\ and \epir\ constraints, there is a single optimal way to set $p_i(0,\s_{-i})$. % Recall the definition of critical and critical cursed valuations (Definition~\ref{def:critical-values}), we have the following.

	\begin{lemma} 
		Fixing threshold functions $\{t_i(\cdot)\}_i$ of a deterministic anonymous mechanism, the revenue optimal \cepicir\ and \epir\ mechanism sets  $$p_i(0,\s_{-i}) =
		\begin{cases}\min\left\{0, v_i(t_{i}(\s_{-i}),\s_{-i})-v_i^\chi(t_i(\s_{-i}),\s_{-i})\right\} \qquad & \text{if } t_i(\s_{-i})<\bar{s},\\
		0 \qquad & \text{otherwise}.
		
		\end{cases}
		$$  (and therefore, the payment is uniquely defined using Equation~\eqref{eq:payment identity}.)\label{lem:compensation}
	\end{lemma}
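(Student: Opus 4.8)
The plan is to exploit the fact that, once the threshold functions (hence the allocation rule) are fixed, the only remaining degree of freedom in a \cepic\ mechanism is the constant $p_i(0,\s_{-i})$, and the payment identity~\eqref{eq:payment identity} shows this constant enters every payment $p_i(\s)$ additively with coefficient $+1$. Consequently $\E_\s[\sum_i p_i(\s)]$ is monotone increasing in each $p_i(0,\s_{-i})$, and the revenue-maximization problem decomposes across agents $i$ and other agents' signals $\s_{-i}$: for each such pair we simply want $p_i(0,\s_{-i})$ as large as the feasibility constraints allow. By Lemma~\ref{lem:epir-implies-cepir}, enforcing \cepic\ (which is automatic from the payment identity together with the monotonicity of the threshold rule, Corollary~\ref{cor:det-threshold}) and \epir\ already yields \cepir; hence the binding constraints are exactly those of \epir, namely inequality~\eqref{eq:epir} required to hold for every $s_i$.

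First I would write $g(s_i)$ for the right-hand side of~\eqref{eq:epir}, which, using Assumption~\ref{assump:zero-no-alloc} to kill the $x_i(0,\s_{-i})v_i^\chi(0,\s_{-i})$ term, reads $g(s_i)=x_i(\s)\big(v_i(\s)-v_i^\chi(\s)\big)+\int_{v_i^\chi(0,\s_{-i})}^{v_i^\chi(\s)}x_i\big((v_i^\chi)^{-1}(t\mid\s_{-i}),\s_{-i}\big)\,dt$. Since the \epir\ constraint must hold simultaneously for all $s_i$, while $p_i(0,\s_{-i})$ is a single number, the revenue-optimal feasible choice is $p_i(0,\s_{-i})=\inf_{s_i\in[0,\bar s]}g(s_i)$, and the remaining work is to evaluate this infimum.

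To compute it I would use the threshold structure of $x_i$ together with the monotonicity of $v_i^\chi(\cdot\mid\s_{-i})$ from Lemma~\ref{lemma_monotone}. Writing $\tau:=v_i^\chi(t_i(\s_{-i}),\s_{-i})$, monotonicity gives $x_i\big((v_i^\chi)^{-1}(t\mid\s_{-i}),\s_{-i}\big)=\mathbf 1[t>\tau]$ throughout the integration range, so the integral equals $0$ when $s_i\le t_i(\s_{-i})$ (the agent loses) and equals $v_i^\chi(\s)-\tau$ when $s_i>t_i(\s_{-i})$ (the agent wins). Substituting back, $g(s_i)=0$ on the losing region, whereas on the winning region $x_i(\s)=1$ and the two $v_i^\chi(\s)$ contributions cancel, leaving $g(s_i)=v_i(s_i,\s_{-i})-\tau$.

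Finally I would take the infimum over the two regions. On the losing region the value is identically $0$. On the winning (open) region $s_i>t_i(\s_{-i})$, the expression $g(s_i)=v_i(s_i,\s_{-i})-\tau$ is increasing in $s_i$ by strict monotonicity of $v_i$, so by continuity its infimum is the limiting value $v_i(t_i(\s_{-i}),\s_{-i})-\tau=v_i(t_i(\s_{-i}),\s_{-i})-v_i^\chi(t_i(\s_{-i}),\s_{-i})$. Hence $\inf_{s_i}g(s_i)=\min\{0,\,v_i(t_i(\s_{-i}),\s_{-i})-v_i^\chi(t_i(\s_{-i}),\s_{-i})\}$, the claimed value; a short check confirms it is feasible, since it lies weakly below $g(s_i)$ for every $s_i$ (strictly so on the winning region). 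The degenerate case $t_i(\s_{-i})=\bar s$ means the agent never wins, so $g\equiv 0$ and the optimum is $p_i(0,\s_{-i})=0$, matching the second branch. The main obstacle I anticipate is the careful evaluation of the integral via the indicator $\mathbf 1[t>\tau]$ and the handling of the open boundary at the threshold: one must argue, via continuity of $v_i$ and $v_i^\chi$, that the infimum over the winning region is realized in the limit $s_i\to t_i(\s_{-i})^+$ even though the threshold point itself belongs to the losing region.
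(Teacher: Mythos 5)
Your proposal is correct and follows essentially the same route as the paper's proof: both reduce the problem to maximizing the additive constant $p_i(0,\s_{-i})$ subject to the \epir\ inequality~\eqref{eq:epir} holding for all $s_i$, evaluate that constraint as $0$ on the losing region and $v_i(s_i,\s_{-i})-v_i^\chi(t_i(\s_{-i}),\s_{-i})$ on the winning region, and take the infimum at the threshold via monotonicity, treating $t_i(\s_{-i})=\bar s$ separately. Your explicit handling of the open-boundary limit and the feasibility check is in fact slightly more careful than the paper's, which states the minimum over $s_i>t_i(\s_{-i})$ directly.
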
	
	\begin{proof}
		Since $x_i(0,\s_{-i})=0$ by Assumption~\ref{assump:zero-no-alloc}, we have to have $p_i(0,\s_{-i})\leq 0$, otherwise an agent pays without getting allocated, which leads to negative utility. Consider $\s_{-i}$ such that $t_i(\s_{-i})<\bar{s}$. By Equation~\eqref{eq:epir} for \epir\ (which also implies \cepir), we get that in order to maximize revenue, one should set
		\begin{eqnarray*}
			p_i(0,\s_{-i}) &= & \min\left\{0, \min_{s_i}\left\{x_i(\s)\left(v_i(\s) - v_i^\chi(\s)\right)
			+\int_{v_i^\chi(0,\s_{-i})}^{v_i^\chi(\s)}x_i((v_i^\chi)^{-1}(t|\s_{-i}), \s_{-i})dt \right\} \right\}\\
			&= & \min\left\{0, \min_{s_i>t_i(\s_{-i})}\left\{v_i(\s)-v_i^\chi(\s) + \int_{v_i^\chi(t_i(\s_{-i}),\s_{-i})}^{v_i^\chi(\s)}1 dt \right\} \right\}\\
			& = & \min\left\{0, \min_{s_i>t_i(\s_{-i})}\left\{v_i(\s)-v_i^\chi(\s) + (v_i^\chi(\s)-v_i^\chi(t_i(\s_{-i}),\s_{-i}))\right\}\right\}\\
			&= & \min\left\{0, \min_{s_i>t_i(\s_{-i})}\left\{v_i(\s)-v_i^\chi(t_i(\s_{-i}),\s_{-i})\right\}\right\}\\
			&= & \min\left\{0, v_i(t_{i}(\s_{-i}),\s_{-i})-v_i^\chi(t_i(\s_{-i}),\s_{-i})\right\}\label{eq:compensation}%\\
			%	& = & \min\{0, c_i(\s_{-i})-c_i^\chi(\s_{-i})\},\label{lem:compensation}
		\end{eqnarray*}
		where the second equality follows from the fact that $x_i(\s)=1$ only if $s_i> t_i(\s_{-i})$.
		
		For $t_i(\s_{-i})=\bar{s}$,  agent $i$ will never get the item and therefore, will not incur the winner's curse. Hence, we can set $p_i(0,\s_{-i})=0$.
		%, and the last inequality follows Definition~\ref{def:critical-values}.
	\end{proof}

	We get that when agents experience the winner's curse at the ``critical value'' (meaning their real value is smaller than their perceived value), they pay their real value at the critical bid, while if they do not experience the winner's curse, they pay their cursed value.
	We get the following corollary.
    \begin{corollary}
	\label{lem_price_shorthand}
	Fixing an anonymous, deterministic, \cepicir, \epir\ mechanism with threshold function $t(\cdot)$, the revenue optimal way to set $p_i(0,\s_{-i})$ gives the following payment function:
	$$p_i(\s)=\begin{cases}  p_i(0,\s_{-i})\qquad & s_i\le t_i(\s_{-i})\\  v_i(t_i(\s_{-i}), \s_{-i})\qquad & s_i>t_i(\s_{-i})\text{ and } v_i(t_{i}(\s_{-i}),\s_{-i})-v_i^\chi(t_i(\s_{-i}),\s_{-i})\le0\\  v_i^\chi(t_i(\s_{-i}), \s_{-i})\qquad & s_i>t_i(\s_{-i})\text{ and } v_i(t_{i}(\s_{-i}),\s_{-i})-v_i^\chi(t_i(\s_{-i}),\s_{-i})> 0
		\end{cases}.$$
	\end{corollary}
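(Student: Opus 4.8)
The plan is to obtain the payment function by plugging the deterministic threshold allocation rule (Corollary~\ref{cor:det-threshold}) together with the optimal compensation term supplied by Lemma~\ref{lem:compensation} into the payment identity~\eqref{eq:payment identity}, and then simplifying. Since Assumption~\ref{assump:zero-no-alloc} forces $x_i(0,\s_{-i})=0$, the parenthesized term in~\eqref{eq:payment identity} reduces to $-p_i(0,\s_{-i})$, so that
$$p_i(\s)=x_i(\s)v_i^\chi(\s)-\int_{v_i^\chi(0,\s_{-i})}^{v_i^\chi(\s)}x_i\big((v_i^\chi)^{-1}(t\mid\s_{-i}),\s_{-i}\big)\,dt+p_i(0,\s_{-i}).$$
Everything then reduces to evaluating the integral, which I would do using the threshold structure.

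First I would handle the integral. By Lemma~\ref{lemma_monotone}, $v_i^\chi(\cdot,\s_{-i})$ is strictly increasing in $s_i$, so the inverse $(v_i^\chi)^{-1}(\cdot\mid\s_{-i})$ is well defined and increasing; the threshold rule then gives that $x_i\big((v_i^\chi)^{-1}(t\mid\s_{-i}),\s_{-i}\big)=1$ precisely when $(v_i^\chi)^{-1}(t\mid\s_{-i})>t_i(\s_{-i})$, i.e. when $t>v_i^\chi(t_i(\s_{-i}),\s_{-i})$. Thus the integrand is the indicator of $t>v_i^\chi(t_i(\s_{-i}),\s_{-i})$ on the range of integration. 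When $s_i\le t_i(\s_{-i})$ we have $x_i(\s)=0$ and, by monotonicity, $v_i^\chi(\s)\le v_i^\chi(t_i(\s_{-i}),\s_{-i})$, so the integrand vanishes over the entire interval and $p_i(\s)=p_i(0,\s_{-i})$, which is the first branch. When $s_i>t_i(\s_{-i})$ we have $x_i(\s)=1$ and the integral equals $\int_{v_i^\chi(t_i(\s_{-i}),\s_{-i})}^{v_i^\chi(\s)}1\,dt=v_i^\chi(\s)-v_i^\chi(t_i(\s_{-i}),\s_{-i})$; the $v_i^\chi(\s)$ terms cancel and $p_i(\s)=v_i^\chi(t_i(\s_{-i}),\s_{-i})+p_i(0,\s_{-i})$.

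It remains to substitute the value of $p_i(0,\s_{-i})$ from Lemma~\ref{lem:compensation}, using that $t_i(\s_{-i})<\bar{s}$ in the winning case. If $v_i(t_i(\s_{-i}),\s_{-i})-v_i^\chi(t_i(\s_{-i}),\s_{-i})\le 0$, then $p_i(0,\s_{-i})$ equals this difference, the $v_i^\chi(t_i(\s_{-i}),\s_{-i})$ terms cancel, and $p_i(\s)=v_i(t_i(\s_{-i}),\s_{-i})$, matching the second branch; otherwise $p_i(0,\s_{-i})=0$ and $p_i(\s)=v_i^\chi(t_i(\s_{-i}),\s_{-i})$, matching the third branch.

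The computation is essentially mechanical, so there is no serious obstacle; the one step requiring care is the integral evaluation, namely recognizing the integrand as the indicator $\mathbf{1}[t>v_i^\chi(t_i(\s_{-i}),\s_{-i})]$, which relies on the strict monotonicity of $v_i^\chi$ in $s_i$ from Lemma~\ref{lemma_monotone}. This is the same manipulation already carried out in the proof of Lemma~\ref{lem:compensation}, so the corollary follows by combining that computation with the three cases for $p_i(0,\s_{-i})$.
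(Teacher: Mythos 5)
Your proof is correct and follows essentially the same route as the paper: plug the threshold allocation rule and the compensation term from Lemma~\ref{lem:compensation} into the payment identity~\eqref{eq:payment identity}, then split into the three cases. The only differences are that you evaluate the integral explicitly as an indicator (the paper simply cites the identity and Lemma~\ref{lem:compensation} for the winning-case formula $p_i(\s)=v_i^\chi(t_i(\s_{-i}),\s_{-i})+p_i(0,\s_{-i})$), and your final case analysis is stated correctly, whereas the paper's concluding sentence accidentally swaps the two winning cases (its displayed $\min$ formula, like yours, gives $p_i(\s)=v_i(t_i(\s_{-i}),\s_{-i})$ when $v_i(t_i(\s_{-i}),\s_{-i})-v_i^\chi(t_i(\s_{-i}),\s_{-i})\le 0$).
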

	
	\begin{proof}
		If $s_i \leq t_i(\s_{-i})$, then $i$ does not get the item, and by Equation~\eqref{eq:payment identity}, $p_i(\s)=p_i(0,\s_{-i})$. If $s_i \geq t_i(\s_{-i})$, then $i$ gets the item, and according to Equation~\eqref{eq:payment identity} and Lemma~\ref{lem:compensation}, 
		\begin{eqnarray*}
    		p_i(\s) & = & v_i^\chi(t_i(\s_{-i}), \s_{-i}) + p_i(0,\s_{-i})\\
    		& = & v_i^\chi(t_i(\s_{-i}), \s_{-i}) + \min\{0, v_i(t_i(\s_{-i}), \s_{-i}) - v_i^\chi(t_i(\s_{-i}), \s_{-i})\}.
		\end{eqnarray*}
		Therefore, if $v_i(t_{i}(\s_{-i}),\s_{-i})-v_i^\chi(t_i(\s_{-i}),\s_{-i})\le 0$, $p_i(\s)= v_i^\chi(t_i(\s_{-i}), \s_{-i}) + 0= v_i^\chi(t_i(\s_{-i}), \s_{-i})$, and if $v_i(t_{i}(\s_{-i}),\s_{-i})-v_i^\chi(t_i(\s_{-i}),\s_{-i})\ge 0$, then $p_i(\s)= v_i^\chi(t_i(\s_{-i}) + v_i(t_i(\s_{-i})-v_i^\chi(t_i(\s_{-i})=v_i(t_i(\s_{-i}).$
	\end{proof}

	An interesting implication of this corollary is that as opposed to the case where we do not require \epir, the welfare and revenue in the case of non-cursed agents (i.e., $\chi=0$) is at least that of the case where agents are cursed ($\chi>0$). We show this more generally by showing that the welfare and revenue are monotonically non-increasing in $\chi$. In order to show this, we first show that for a fixed mechanism, $\epir$ implies that the revenue decreases as $\chi$ increases.
	% We first prove a lemma that will be useful in showing the claim.
	\begin{lemma} \label{lem:payment-mon}
	    Fix a threshold rule $t$. Then for every $\s$, every $i$, and every $0\leq \chi \leq \chi'\leq 1$, $p_i^{\chi}(\s,t)\ge p_i^{\chi'}(\s,t)$, where $p_i^{\chi}(\s,t)$ is the optimal payment an agent $i$ with cursedness parameter $\chi$ has with signals $\s$.
	\end{lemma}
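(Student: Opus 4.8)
The plan is to work directly from the explicit optimal payment derived in Corollary~\ref{lem_price_shorthand} (equivalently Lemma~\ref{lem:compensation}), specializing it to the fixed threshold rule $t$ and tracking how it varies with $\chi$. The crucial observation is that, with $t$ held fixed, the only way $\chi$ enters the payment is through the cursed valuation evaluated at the critical bid, $v_i^\chi(t_i(\s_{-i}),\s_{-i})$. Writing $a := v_i(t_i(\s_{-i}),\s_{-i})$ and $b := \E_{\tilde{\s}_{-i}}[v_i(\tilde{\s}_{-i}, t_i(\s_{-i}))]$---both determined by the fixed threshold rule, the valuation, and the distribution, hence independent of $\chi$---the definition in Equation~\eqref{eq_cursed_valution} gives $v_i^\chi(t_i(\s_{-i}),\s_{-i}) = (1-\chi)a + \chi b = a + \chi(b-a)$, an affine function of $\chi$ interpolating between the two $\chi$-independent endpoints $a$ and $b$.

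Next I would split into two cases according to whether agent $i$ wins at $\s$. Because $t$ is fixed, whether $s_i > t_i(\s_{-i})$ does not depend on $\chi$, so the same case applies simultaneously to $\chi$ and $\chi'$. In the winning case, Corollary~\ref{lem_price_shorthand} yields $p_i^\chi(\s,t) = \min\{a,\, v_i^\chi(t_i(\s_{-i}),\s_{-i})\} = a + \min\{0,\, \chi(b-a)\} = a + \chi\min\{0,\, b-a\}$, where the last equality uses $\chi \ge 0$. In the losing case, $p_i^\chi(\s,t) = p_i(0,\s_{-i}) = \min\{0,\, a - v_i^\chi(t_i(\s_{-i}),\s_{-i})\} = \min\{0,\, \chi(a-b)\} = \chi\min\{0,\, a-b\}$.

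In both cases the payment has the form of a $\chi$-independent constant plus $\chi$ times a \emph{non-positive} quantity (namely $\min\{0,b-a\}$ or $\min\{0,a-b\}$), and is therefore non-increasing in $\chi$. Hence $p_i^\chi(\s,t) \ge p_i^{\chi'}(\s,t)$ whenever $0 \le \chi \le \chi' \le 1$, as claimed. The degenerate case $t_i(\s_{-i}) = \bar{s}$, where the agent never wins and $p_i(0,\s_{-i})=0$ for all $\chi$, falls under the losing case and gives equality.

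The proof is essentially mechanical once the payment from Corollary~\ref{lem_price_shorthand} is plugged in; the only points requiring care are recognizing that $a$ and $b$ are genuinely $\chi$-free (so the $\chi$-dependence is purely the affine term), and the sign bookkeeping in the identity $\min\{0,\chi(b-a)\} = \chi\min\{0,b-a\}$, which relies on $\chi \ge 0$. I expect no real obstacle beyond keeping the two cases and the signs straight. This pointwise statement is exactly what is needed downstream: integrating these payments over $\s$ will give the revenue monotonicity in $\chi$, and it can then be combined with the welfare accounting to obtain the full revenue and welfare monotonicity claims.
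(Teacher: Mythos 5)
Your proof is correct and takes essentially the same route as the paper: both plug the fixed threshold rule into the optimal payment from Corollary~\ref{lem_price_shorthand} and exploit the fact that the cursed valuation at the critical bid is a convex combination $(1-\chi)a+\chi b$ of two $\chi$-free quantities. The only difference is presentational: you collapse the paper's three-case comparison (its chain of implications in Equation~\eqref{eq:cursed_deriv}) into the closed forms $a+\chi\min\{0,b-a\}$ for a winner and $\chi\min\{0,a-b\}$ for a loser, from which monotonicity in $\chi$ is immediate.
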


	\begin{proof}
        We prove by the three cases of Corollary~\ref{lem_price_shorthand}.   
	    
	    \textbf{Case 1:} $s_i\leq t(\s_{-i})$. Then, $p_i^{\chi}(\s,t) = \min\{0, v_i(t(\s_{-i}), \s_{-i}) - v_i^\chi(t(\s_{-i}), \s_{-i})\}$, and $p_i^{\chi'}(\s,t) = \min\{0, v_i(t(\s_{-i}), \s_{-i}) - v_i^{\chi'}(t(\s_{-i}), \s_{-i})\}$. Notice that if  $p_i^{\chi}(\s,t) < 0$, then
	    \begin{eqnarray}
	        v_i(t(\s_{-i}), \s_{-i}) & < & v_i^\chi(t(\s_{-i}), \s_{-i})\nonumber\\
	        & = &  (1-\chi)v_i(t(\s_{-i}), \s_{-i}) + \chi\E_{\t_{-i}\sim F_{|t(\s_{-i})}}[v_i(t(\s_{-i}),\t_{-i})]\nonumber\\
	        \iff  v_i(t(\s_{-i}), \s_{-i}) &< & E_{\t_{-i}\sim F_{|t(\s_{-i})}}[v_i(t(\s_{-i}),\t_{-i})]\nonumber\\
	        \iff v_i^\chi(t(\s_{-i}), \s_{-i})
	        & = &  (1-\chi)v_i(t(\s_{-i}), \s_{-i}) + \chi\E_{\t_{-i}\sim F_{|t(\s_{-i})}}[v_i(t(\s_{-i}),\t_{-i})]\nonumber\\
	        &\leq & (1-\chi')v_i(t(\s_{-i}), \s_{-i}) + \chi'\E_{\t_{-i}\sim F_{|t(\s_{-i})}}[v_i(t(\s_{-i}),\t_{-i})]\nonumber\\ & = &
	        v_i^{\chi'}(t(\s_{-i}), \s_{-i}) \nonumber\\ 
	        \iff p_i^{\chi}(\s,t) & = & v_i(t(\s_{-i}), \s_{-i}) - v_i^\chi(t(\s_{-i}), \s_{-i}) \nonumber\\& \ge & v_i(t(\s_{-i}), \s_{-i}) - v_i^{\chi'}(t(\s_{-i}), \s_{-i})\nonumber\\
	        & = & p_i^{\chi'}(\s,t), \label{eq:cursed_deriv}
	    \end{eqnarray}
	    which means that either $p_i^{\chi}(\s,t)=p_i^{\chi'}(\s,t) = 0$, or $p_i^{\chi}(\s,t) \ge p_i^{\chi'}(\s,t)$.
	    
	    \textbf{Case 2:} $s_i > t_i(\s_{-i})$ and $v_i(t_{i}(\s_{-i}),\s_{-i})-v_i^\chi(t_i(\s_{-i}),\s_{-i})\le0$. In this case, according to Corollary~\ref{lem_price_shorthand}, we have $p_i^\chi(\s,t)=v_i(t(\s_{-i}),\s_{-i})$. According to Equation~\eqref{eq:cursed_deriv}, $v_i(t_{i}(\s_{-i}),\s_{-i})-v_i^\chi(t_i(\s_{-i}),\s_{-i})\le0$ implies  $v_i(t_{i}(\s_{-i}),\s_{-i})-v_i^{\chi'}(t_i(\s_{-i}),\s_{-i})\le v_i(t_{i}(\s_{-i}),\s_{-i})-v_i^{\chi}(t_i(\s_{-i}),\s_{-i}) \le 0$, and therefore, $p_i^{\chi'}(\s,t)=v_i(t(\s_{-i}),\s_{-i})$ as well.
	    
	    \textbf{Case 3:} $s_i > t_i(\s_{-i})$ and $v_i(t_{i}(\s_{-i}),\s_{-i})-v_i^\chi(t_i(\s_{-i}),\s_{-i})\ge0$. According to Equation~\eqref{eq:cursed_deriv}, we also have $v_i(t_{i}(\s_{-i}),\s_{-i})-v_i^{\chi'}(t_i(\s_{-i}),\s_{-i})\ge0$, which implies that $p_i^\chi(\s,t)= v_i^\chi(t_i(\s_{-i}),\s_{-i})$ and $p_i^{\chi'}(\s,t)=v_i^{\chi'}(t_i(\s_{-i}),\s_{-i})$. Since by Equation~\eqref{eq:cursed_deriv}, $v_i(t_{i}(\s_{-i}),\s_{-i})-v_i^\chi(t_i(\s_{-i}),\s_{-i})\ge0$ implies $v_i^\chi(t_i(\s_{-i}),\s_{-i})>v_i^{\chi'}(t_i(\s_{-i}),\s_{-i})$, the lemma follows.
	\end{proof}

	We get the following.
	
	\begin{proposition}[Revenue monotonicity] \label{prop:rev-mon}
	    For any $0\leq \chi\leq \chi'\leq 1$ the revenue optimal anonymous deterministic \cepicir\ \epir\ mechanism for $\chi$-cursed agents has revenue at least as high as the revenue optimal anonymous deterministic \cepicir\ \epir\ mechanism for $\chi'$-cursed agents.
	\end{proposition}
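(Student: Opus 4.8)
The plan is to leverage Lemma~\ref{lem:payment-mon}, which already establishes the key pointwise payment comparison, and to combine it with the observation that the space of feasible threshold rules does not depend on $\chi$. First I would recall that by Corollary~\ref{cor:det-threshold} every anonymous deterministic \cepicir\ \epir\ mechanism is fully described by a threshold rule $t=\{t_i(\cdot)\}_i$ together with a payment rule. By Lemma~\ref{lemma_monotone} the cursed valuation $v_i^\chi$ is monotone for \emph{every} $\chi$, so any threshold rule induces a \cepic\ allocation regardless of $\chi$; and Lemma~\ref{lem:compensation} shows that, for each fixed threshold rule, there is a unique revenue-optimal way to set the free term $p_i(0,\s_{-i})$ (hence, via Equation~\eqref{eq:payment identity}, the whole payment) so that the mechanism is \cepicir\ and \epir. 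Writing $R^\chi(t):=\E_{\s}\left[\sum_i p_i^\chi(\s,t)\right]$ for the resulting expected revenue, the revenue-optimal mechanism for $\chi$-cursed agents attains $\max_t R^\chi(t)$, where the maximum ranges over the \emph{same} collection of threshold rules for every value of $\chi$.

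The second step is to integrate the pointwise inequality of Lemma~\ref{lem:payment-mon}, which gives $p_i^\chi(\s,t)\ge p_i^{\chi'}(\s,t)$ for every $i$, every $\s$, and every fixed threshold rule $t$ whenever $\chi\le\chi'$. Taking the expectation over $\s$ and summing over $i$ preserves this inequality, yielding $R^\chi(t)\ge R^{\chi'}(t)$ for every threshold rule $t$.

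Finally I would chain the two facts through a common feasible point. Let $t^{\chi'}$ be a revenue-optimal threshold rule for the $\chi'$-cursed problem, so that $R^{\chi'}(t^{\chi'})=\max_t R^{\chi'}(t)$. Because feasibility is $\chi$-independent, $t^{\chi'}$ is also a (possibly suboptimal) feasible threshold rule for the $\chi$-cursed problem, and therefore
\[
\max_t R^\chi(t)\ \ge\ R^\chi(t^{\chi'})\ \ge\ R^{\chi'}(t^{\chi'})\ =\ \max_t R^{\chi'}(t),
\]
where the first inequality is optimality of $\max_t R^\chi(\cdot)$ over the feasible set, and the second is the integrated inequality from Lemma~\ref{lem:payment-mon}. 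This is exactly the claimed revenue monotonicity.

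I do not expect a genuine obstacle, since Lemma~\ref{lem:payment-mon} does the heavy lifting; the only point requiring care is the ``feasible set is $\chi$-independent'' observation, i.e.\ that the optimal threshold rule for $\chi'$ may be reused verbatim for $\chi$ with its payments recomputed via Lemma~\ref{lem:compensation}. This is precisely what allows the two distinct optimization problems to be compared at a shared feasible point, sidestepping any need to track how the optimal threshold itself shifts as $\chi$ varies.
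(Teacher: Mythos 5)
Your proposal is correct and takes essentially the same route as the paper's proof: fix the revenue-optimal threshold rule $t^{\chi'}$ for the $\chi'$-cursed problem, integrate the pointwise payment inequality of Lemma~\ref{lem:payment-mon} to obtain $\REV^\chi(t^{\chi'})\ge\REV^{\chi'}(t^{\chi'})$, and conclude by optimality of the $\chi$-cursed mechanism over the same set of threshold rules. Your explicit remark that the feasible set of threshold rules is $\chi$-independent (via monotonicity of $v_i^\chi$ for every $\chi$) is a detail the paper leaves implicit, but it is the same argument.
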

	\begin{proof}
		Let $t^{\chi'}$ be the threshold rule of the revenue optimal deterministic \cepicir\ \epir\ mechanism for $\chi$-cursed agents, and let $\REV^\chi(t)$ be the optimal revenue of $\chi$-cursed agents with threshold rule $t$. Then
		\begin{eqnarray*}
			\REV^\chi(t^{\chi'})=\int_{\s}f(\s)\sum_i p_i^\chi(\s,t^{\chi'}) d\s \ge \int_{\s}f(\s)\sum_i p_i^{\chi'}(\s,t^{\chi'}) = \REV^{\chi'}(t^{\chi'}),
		\end{eqnarray*}
		where the inequality follows Lemma~\ref{lem:payment-mon}. Therefore, for the optimal deterministic mechanism for agents with cursedness parameter $\chi$, the revenue can only be larger.
	\end{proof}

	\begin{proposition}[Welfare monotonicity]\label{prop:welfare-mon}
		For any $0\leq \chi\leq \chi'\leq 1$, (a) the welfare optimal deterministic \cepicir\ \epir\ mechanism for $\chi$-cursed agents that satisfies ex-post (ex-ante) budget-balance has  welfare at least as high as the welfare optimal deterministic \cepicir\ \epir\ mechanism for $\chi'$-cursed agents that satisfies ex-post (ex-ante) budget-balance.
	\end{proposition}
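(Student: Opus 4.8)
The plan is to exploit the fact that the social welfare of a deterministic mechanism is entirely determined by its allocation rule---equivalently, by its threshold rule $t$---and is independent of the payments and of $\chi$: the welfare equals $\E_\s[\sum_i x_i(\s)v_i(\s)]$, the \emph{true} value of the item to whoever receives it. Consequently, the welfare-optimal budget-balanced mechanism for a given cursedness parameter is obtained by maximizing this allocation-only objective over the set of threshold rules $t$ that admit a \cepicir, \epir\ payment rule satisfying budget-balance. It therefore suffices to show that this feasible set of threshold rules for $\chi$ \emph{contains} the feasible set for $\chi'$ whenever $\chi\le\chi'$; since we then optimize the same objective over a larger set, the optimal welfare for $\chi$ can only be larger.

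First I would argue that, when checking budget-balance feasibility of a fixed threshold rule $t$, we may without loss of generality use the revenue-optimal payments of Lemma~\ref{lem:compensation}. The only freedom in a \cepic\ payment rule is the constant $p_i(0,\s_{-i})$, and by the payment identity~\eqref{eq:payment identity} increasing $p_i(0,\s_{-i})$ increases $p_i(\s)$ for every $s_i$; hence the revenue-optimal choice of $p_i(0,\s_{-i})$ yields the pointwise-largest \cepicir, \epir\ payment $p_i^\chi(\s,t)$ at every $\s$. A threshold rule $t$ thus admits \emph{some} budget-balanced payment rule if and only if its revenue-optimal payments $p_i^\chi(\cdot,t)$ are themselves budget-balanced, so feasibility can be tested on these payments alone.

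Next I would invoke Lemma~\ref{lem:payment-mon}, which gives $p_i^\chi(\s,t)\ge p_i^{\chi'}(\s,t)$ for every $i$ and $\s$ whenever $\chi\le\chi'$. Summing over $i$ yields $\sum_i p_i^\chi(\s,t)\ge\sum_i p_i^{\chi'}(\s,t)$ pointwise in $\s$, and taking expectations yields the analogous inequality for $\E_\s[\sum_i p_i(\s)]$. Hence if $t$ is budget-balanced for $\chi'$---either ex-post, i.e.\ $\sum_i p_i^{\chi'}(\s,t)\ge 0$ for all $\s$, or ex-ante, i.e.\ $\E_\s[\sum_i p_i^{\chi'}(\s,t)]\ge 0$---then the corresponding inequality holds for $\chi$ as well, so $t$ is budget-balanced for $\chi$. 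This establishes the desired containment of feasible sets for both notions of budget-balance at once.

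Finally, letting $t^{\chi'}$ denote the threshold rule of the welfare-optimal budget-balanced mechanism for $\chi'$, the containment shows $t^{\chi'}$ is feasible for $\chi$; since welfare depends only on the threshold rule, the welfare-optimal budget-balanced mechanism for $\chi$ attains welfare at least that of $t^{\chi'}$, which equals the optimal welfare for $\chi'$. The argument is short once Lemma~\ref{lem:payment-mon} is in hand, and I expect the only point requiring genuine care to be the reduction in the second paragraph---justifying that budget-balance feasibility can be tested using the revenue-maximizing payments, so that the pointwise payment domination of Lemma~\ref{lem:payment-mon} translates directly into a nesting of feasible allocation rules rather than merely a comparison at a fixed $t$.
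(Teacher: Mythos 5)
Your proposal is correct and follows essentially the same route as the paper's own proof: both reduce the claim to a containment of feasible threshold rules, established pointwise via Lemma~\ref{lem:payment-mon}, and then use the fact that welfare depends only on the allocation rule. The only difference is that you spell out the reduction to testing budget-balance with the revenue-optimal payments of Lemma~\ref{lem:compensation}, a step the paper leaves implicit in its use of $p_i^\chi(\s,t)$.
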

	\begin{proof}
		We show that every threshold rule that is ex-post (ex-ante) budget-balanced for $\chi'$-cursed agents is also ex-post (ex-ante) budget-balanced for $\chi$-cursed agents. Since the space of mechanisms is larger for $\chi$-cursed agents, the proposition follows. Fix a threshold rule $t$ and signal profile $\s$. By Lemma~\ref{lem:payment-mon}, we have that $\sum_i p_i^\chi(\s,t) \geq \sum_i p_i^{\chi'}(\s,t)$ and $\int_{\s}f(\s)\sum_i p_i^\chi(\s,t) d\s \ge \int_{\s}f(\s)\sum_i p_i^{\chi'}(\s,t)$. Therefore, if $\sum_i p_i^{\chi'}(\s,t)$ or $\int_{\s}f(\s)\sum_i p_i^{\chi'}(\s,t)$ are non-negative, so are $\sum_i p_i^{\chi}(\s,t)$ or $\int_{\s}f(\s)\sum_i p_i^{\chi}(\s,t)$.
	\end{proof}

	\section{Revenue Maximization}\label{sec:revenue}
	
	In this section, we devise a mechanism that maximizes revenue among all deterministic, anonymous, $\cepicir$ and $\epir$ mechanisms.
	According to Lemma~\ref{lem:max-alloc}, we only consider rules for which $t_i(\s_{-i})\geq \max_{j\neq i} s_j$. We will use $s_{-i}^* = \max_{j\neq i} s_j$
	to denote the second highest signal. We notice that every such mechanism is feasible by definition.
	
	\begin{observation}
		Every deterministic mechanism such that $x_i(\s)=1$ implies that $s_i> s^*_{-i}$ is feasible, meaning that for every $\s$, $\sum_i x_i(\s)\leq 1$. \label{obs:feas}
	\end{observation}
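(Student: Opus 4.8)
The plan is to argue by contradiction, exploiting the strictness of the inequality $s_i > s^*_{-i}$ in the hypothesis. Since the mechanism is deterministic, each $x_i(\s)\in\{0,1\}$, so the claim $\sum_i x_i(\s)\le 1$ is equivalent to asserting that at most one agent has $x_i(\s)=1$ at any fixed profile $\s$. Thus it suffices to rule out the possibility that two distinct agents are simultaneously allocated.

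First I would suppose, toward a contradiction, that there exist two distinct agents $i\neq j$ with $x_i(\s)=x_j(\s)=1$. Applying the hypothesis to agent $i$ gives $s_i > s^*_{-i} = \max_{k\neq i} s_k$; since $j\neq i$, the maximum on the right is at least $s_j$, so in particular $s_i > s_j$. By the symmetric application of the hypothesis to agent $j$, we obtain $s_j > \max_{k\neq j} s_k \ge s_i$, hence $s_j > s_i$. Combining the two strict inequalities $s_i > s_j$ and $s_j > s_i$ yields the desired contradiction.

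Therefore at most one agent can have $x_i(\s)=1$ at every profile $\s$, which together with determinism (so that the allocation indicators are $0/1$-valued) gives $\sum_i x_i(\s)\le 1$, i.e.\ feasibility. I expect there to be essentially no obstacle here: the entire argument hinges on the observation that the strict inequality $s_i > s^*_{-i}$ forces any allocated agent to be the \emph{unique} strict maximizer of the signal profile, and a profile admits at most one such maximizer. Had the hypothesis used a weak inequality ($s_i \ge s^*_{-i}$), the argument would break down precisely on tied top signals, which is why the strictness is doing all the work.
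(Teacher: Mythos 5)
Your proof is correct and is essentially the paper's own argument spelled out in full: the paper's one-line proof ("there cannot be two bidders with a signal strictly larger than all other signals") is exactly the contradiction $s_i > s_j$ and $s_j > s_i$ that you derive. No differences in approach.
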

	\begin{proof}
		This is simply because there cannot be two bidders with a signal strictly larger than all other signals.
	\end{proof}

	Therefore, when designing a revenue optimal deterministic anonymous mechanism, one needs to only care about a threshold rule $t(\cdot)$ satisfying $t(\s_{-i})> s^*_{-i}$ (the same for all bidders due to anonymity), and not worry about feasibility or how to set $p_i(\s_{-i})$, as those are set according to Lemma~\ref{lem:compensation}.
	We now show how to devise the optimal anonymous deterministic mechanism.

	\begin{theorem}
		Let \begin{eqnarray}
			r_i(t,\s_{-i})= \begin{cases}
				\min\{v_i^\chi(t,\s_{-i}), v_i(t,\s_{-i})\} - v_i^\chi(t,\s_{-i})F(t|\s_{-i}) \qquad & t\in[s_{-i}^*,\bar{s})\\
				 0 \qquad & t=\bar{s} 
			\end{cases}
		\end{eqnarray}
	
		The optimal deterministic anonymous mechanism sets a threshold $$t(\s_{-i}) \in \arg\max_{t\in[s_{-i}^*,\bar{s}]} r_i(t,\s_{-i}).$$
		
		%$$r_i(t,\s_{-i})=\min\{v_i^\chi(t,\s_{-i}), v_i(t,\s_{-i})\} - v_i^\chi(t,\s_{-i})F(t|\s_{-i}).$$
		%The optimal deterministic anonymous mechanism sets a threshold $$t(\s_{-i}) \in \arg\max_{t\in[s_{-i}^*,\bar{s}]}~  r_i(t,\s_{-i}) \text{ if } \max_{t\in[s_{-i}^*,\bar{s}]} r_i(t, \s_{-i})>0,$$ otherwise, sets $t(\s_{-i})=\bar{s}$
		\label{thm:rev-optimal}
	\end{theorem}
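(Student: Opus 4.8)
The plan is to reduce the global revenue-maximization problem to a family of one-dimensional pointwise optimizations, one for each agent $i$ and each realization of the competing signals $\s_{-i}$. By Corollary~\ref{cor:det-threshold} and Lemma~\ref{lem:max-alloc}, every deterministic anonymous \cepicir\ \epir\ mechanism is described by a single threshold function $t(\cdot)$ with $t(\s_{-i})\in[s_{-i}^*,\bar{s}]$, and by Observation~\ref{obs:feas} any such rule is automatically feasible, so there are no cross-agent constraints to reconcile. Having fixed the threshold rule, Lemma~\ref{lem:compensation} pins down the revenue-optimal payments, so the only remaining degree of freedom is the choice of $t(\s_{-i})$ for each $\s_{-i}$. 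Writing the expected revenue via the tower rule as $\E_{\s}[\sum_i p_i(\s)]=\sum_i\E_{\s_{-i}}[\E_{s_i\mid\s_{-i}}[p_i(\s)]]$, I would observe that the inner conditional expectation depends on the threshold rule only through the single number $t(\s_{-i})$: by Corollary~\ref{lem_price_shorthand}, $p_i(\s)$ is determined entirely by $t(\s_{-i})$ and by whether $s_i$ exceeds it. Hence the objective decouples across $(i,\s_{-i})$, and it suffices to maximize $\E_{s_i\mid\s_{-i}}[p_i(\s)]$ separately for each $(i,\s_{-i})$.

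Second, I would compute this conditional expected revenue as a function of the threshold $t=t(\s_{-i})$ and identify it with $r_i(t,\s_{-i})$. For $t\in[s_{-i}^*,\bar{s})$, split on the sign of $v_i(t,\s_{-i})-v_i^\chi(t,\s_{-i})$ exactly as in Corollary~\ref{lem_price_shorthand}. When $v_i(t,\s_{-i})\le v_i^\chi(t,\s_{-i})$, with probability $F(t\mid\s_{-i})$ agent $i$ loses and the mechanism pays her $p_i(0,\s_{-i})=v_i(t,\s_{-i})-v_i^\chi(t,\s_{-i})$, while with the complementary probability she wins and pays $v_i(t,\s_{-i})$; the two terms sum to $v_i(t,\s_{-i})-v_i^\chi(t,\s_{-i})F(t\mid\s_{-i})$. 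When $v_i(t,\s_{-i})>v_i^\chi(t,\s_{-i})$, a losing agent pays $0$ and a winning agent pays $v_i^\chi(t,\s_{-i})$, giving $v_i^\chi(t,\s_{-i})(1-F(t\mid\s_{-i}))$. In both regimes the result is exactly $\min\{v_i^\chi(t,\s_{-i}),v_i(t,\s_{-i})\}-v_i^\chi(t,\s_{-i})F(t\mid\s_{-i})=r_i(t,\s_{-i})$, and the degenerate choice $t=\bar{s}$ never allocates and collects nothing, matching $r_i(\bar{s},\s_{-i})=0$.

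Finally, combining the two steps gives $\E_{\s}[\sum_i p_i(\s)]=\sum_i\E_{\s_{-i}}[r_i(t(\s_{-i}),\s_{-i})]$, which is maximized by choosing $t(\s_{-i})\in\arg\max_{t\in[s_{-i}^*,\bar{s}]}r_i(t,\s_{-i})$ for each $\s_{-i}$, as claimed. Two points deserve care and are where I expect the real work to lie. First, I should check that this pointwise optimum is consistent with anonymity: since the symmetry assumptions make $r_i$ the same function for every agent and the law of $\s_{-i}$ identical across $i$, the pointwise maximizer yields a single symmetric threshold function, so the unconstrained (per-agent) optimum is already anonymous and no revenue is lost by the anonymity restriction. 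Second, I should verify that the $\arg\max$ is attained: $r_i(\cdot,\s_{-i})$ is continuous on $[s_{-i}^*,\bar{s})$, and since $\min\{v_i^\chi,v_i\}\le v_i^\chi$ one has $\lim_{t\to\bar{s}^-}r_i(t,\s_{-i})\le 0=r_i(\bar{s},\s_{-i})$, so $r_i$ is upper semicontinuous on the compact interval $[s_{-i}^*,\bar{s}]$ and the maximum exists (and is nonnegative, since $t=\bar{s}$ always yields $0$).
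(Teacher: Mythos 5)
Your proposal is correct, and it reaches the same pointwise reduction as the paper --- but the key computational step takes a genuinely different and more elementary route. The paper starts from the virtual-welfare identity (Proposition~\ref{thm:virtual-welfare}), writes the per-$(i,\s_{-i})$ objective as $\int_{t}^{\bar{s}} f(s_i|\s_{-i})\varphi_i^\chi(s_i|\s_{-i})\,ds_i + \min\{0, v_i(t,\s_{-i})-v_i^\chi(t,\s_{-i})\}$, and then uses integration by parts to collapse the virtual-surplus integral to $(1-F(t|\s_{-i}))v_i^\chi(t,\s_{-i})$, after which the case split yields $r_i$. You instead bypass virtual values entirely: from Corollary~\ref{lem_price_shorthand} the winner always pays $\min\{v_i(t,\s_{-i}),v_i^\chi(t,\s_{-i})\}$ and the loser pays $\min\{0,v_i(t,\s_{-i})-v_i^\chi(t,\s_{-i})\}$, so the conditional expected payment is a two-line computation giving $r_i(t,\s_{-i})$ directly; the tower rule plus Observation~\ref{obs:feas} and Lemma~\ref{lem:compensation} then decouple the problem exactly as in the paper. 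The two middle steps are mathematically equivalent (the paper's integration by parts is precisely what recovers your direct payment accounting), but your version is shorter and makes the economic content --- winner's price times winning probability, plus the compensation paid to losers --- transparent, whereas the paper's version buys the explicit connection to the Myerson/Roughgarden--Talgam-Cohen virtual-surplus framework, which it exploits immediately after the theorem to compare against the optimal \cepicir-only and non-cursed \epicir\ mechanisms. You also supply two details the paper leaves implicit: that the symmetric setting makes the pointwise maximizer consistent with anonymity, and that the $\arg\max$ is attained because $r_i(\cdot,\s_{-i})$ is upper semicontinuous on the compact interval $[s_{-i}^*,\bar{s}]$ (using $\lim_{t\to\bar{s}^-} r_i(t,\s_{-i})\le 0 = r_i(\bar{s},\s_{-i})$); both checks are correct and strengthen the argument.
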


% 	\begin{theorem}
% 		The optimal deterministic anonymous mechanism sets a threshold 
% 		\begin{eqnarray}
% 			t(\s_{-i})=\mathrm{argmax}_{s_{-i}^*\le t<\bar{s} } \min\{v_i^\chi(t,\s_{-i}), v_i(t,\s_{-i})\} - v_i^\chi(t,\s_{-i})F(t|\s_{-i}).\label{eq:opt_threshold}
% 		\end{eqnarray}
	
% 		\label{thm:rev-optimal}
% 	\end{theorem}
	\begin{proof}
	    Using Assumption~\ref{assump:zero-no-alloc} to simplify the expression in Proposition~\ref{thm:virtual-welfare}, our objective is to maximize the following expected revenue.
	    
	    $$\E_{\s}\left[\sum_i x_i(\s)\varphi_i^\chi(s_i|\s_{-i})\right]+\sum_{i}\E_{\s_{-i}}[p_i(0, \s_{-i})].$$
	    
	    Since we are interested in anonymous mechanisms, this is equivalent to maximizing 
	    $$\E_{\s}\left[ x_i(\s)\varphi_i^\chi(s_i|\s_{-i})\right]+\E_{\s_{-i}}[p_i(0, \s_{-i})]$$
	    for a given agent $i$.
	    
        Applying Lemma~\ref{lem:compensation}, we first consider the case $t_i(\s_{-i})<\bar{s}$ and aim to find a threshold function $t(\cdot)$ such that the following is maximized, and then consider the case that $t_i(\s_{-i})=\bar{s}$.
		\begin{eqnarray}
			\int_\s f(\s) x_i(\s)\varphi_i^\chi(s_i|\s_{-i})d\s+ \int_{\s_{-i}}f(\s_{-i})\min\left\{0, v_i(t(\s_{-i}),\s_{-i})-v_i^\chi(t(\s_{-i}),\s_{-i})\right\}d\s_{-i} \nonumber\\
			=  \int_{\s_{-i}}f(\s_{-i})\left(\int_{t(\s_{-i})}^1 f(s_i|\s_{-i})\varphi_i^\chi(s_i|\s_{-i})d s_i + \min\left\{0, v_i(t(\s_{-i}),\s_{-i})-v_i^\chi(t(\s_{-i}),\s_{-i})\right\}\right)d\s_{-i}.\label{eq:objective}
		\end{eqnarray} 
		
		We use the following expansion.
		\begin{eqnarray}
			\int_{t(\s_{-i})}^{\bar{s}} f(s_i|\s_{-i})\varphi_i^\chi(s_i|\s_{-i})d s_i  
			& = & \int_{t(\s_{-i})}^{\bar{s}} f(s_i|\s_{-i})\left(v_i^\chi(\s)-{v_i^\chi}'(\s)\frac{1-F(s_i|\s_{-i})}{f(s_i|\s_{-i})}\right)d s_i\nonumber\\
			& = & \int_{t(\s_{-i})}^{\bar{s}} f(s_i|\s_{-i})v_i^\chi(\s)+F(s_i|\s_{-i}){v_i^\chi}'(\s) d s_i - \int_{t(\s_{-i})}^{\bar{s}} {v_i^\chi}'(\s)d s_i\nonumber\\
			& = & \left(F(s_i|\s_{-i})-1\right)v_i^\chi(\s)\Big\vert_{t(\s_{-i})}^{\bar{s}}\nonumber\\
			& = & \left(F(\bar{s}|\s_{-i})-1\right)v_i^\chi(\bar{s},\s_{-i}) - \left(F(t(\s_{-i})|\s_{-i})-1\right)v_i^\chi(t(\s_{-i}),\s_{-i})\nonumber\\
			& = & (1-F(t(\s_{-i})|\s_{-i}))v_i^\chi(t(\s_{-i}),\s_{-i}),\label{eq:virtual-expansion}
		\end{eqnarray}
		where the third equality uses integration by parts, and the last inequality is due to $F(\bar{s}|\s_{-i})=1$. Plugging Equation~\eqref{eq:virtual-expansion} into Equation~\eqref{eq:objective}, we get:
		\begin{eqnarray}
			\int_{\s_{-i}}f(\s_{-i})\left((1-F(t(\s_{-i})|\s_{-i}))v_i^\chi(t(\s_{-i}),\s_{-i})+ \min\left\{0, v_i(t(\s_{-i}),\s_{-i})-v_i^\chi(t(\s_{-i}),\s_{-i})\right\}\right)d\s_{-i}\label{eq:objective-expanded}
		\end{eqnarray}
		
		Notice that for each $\s_{-i}$, the choice of threshold $t(\s_{-i})$ is independent of a different $\s'_{-i}$'s threshold (we are guaranteed to be feasible by Observation~\ref{obs:feas}). Let $t=t(\s_{-i})$. We wish to choose $t\geq s_{-i}^*$ that maximizes:
		\begin{eqnarray}
			(1-F(t|\s_{-i}))v_i^\chi(t,\s_{-i})+ \min\left\{0, v_i(t,\s_{-i})-v_i^\chi(t,\s_{-i})\right\}\nonumber
		\end{eqnarray}
		
		For $t$ such that $v_i(t,\s_{-i})\geq v_i^\chi(t,\s_{-i})$, we get
		\begin{eqnarray*}
			(1-F(t|\s_{-i}))v_i^\chi(t,\s_{-i}) + v_i(t,\s_{-i}) - v_i^\chi(t,\s_{-i}) = v_i(t,\s_{-i}) - v_i^\chi(t,\s_{-i})F(t|\s_{-i}),
		\end{eqnarray*}
		while for $t$ satisfying $v_i(t,\s_{-i}) < v_i^\chi(t,\s_{-i})$, we have
		\begin{eqnarray*}
			(1-F(t|\s_{-i}))v_i^\chi(t,\s_{-i}) + 0 = v_i^\chi(t,\s_{-i}) - v_i^\chi(t,\s_{-i})F(t|\s_{-i}).
		\end{eqnarray*}
		
		Hence, for every $\s_{-i}$, we should choose a $t> s_{-i}^*$ that maximizes
		\begin{eqnarray}
		\label{eq_op_threshold}
			\min\{v_i^\chi(t,\s_{-i}), v_i(t,\s_{-i})\} - v_i^\chi(t,\s_{-i})F(t|\s_{-i}).
		\end{eqnarray}
		
		Second, we consider the case  $t_i(\s_{-i})=\bar{s}$. In this case, we never allocate to agent $i$ whichever $s_i$ is and $p_i(\s)=p_i(0,\s_{-i})=0$, leading to zero expected revenue. Therefore, Theorem~\ref{thm:rev-optimal} gives the  optimal threshold.

	\end{proof}
	Note that in the case we only require $\cepicir$ without $\epir$, we can set $p_i(0,\s_{-i})=0$, and the optimal mechanism %, which is deterministic and anonymous~\cite{RoughgardenT16, ??},
	chooses a $t>s_{-i}^*$  that maximizes
	\begin{eqnarray*}
		v_i^\chi(t,\s_{-i}) - v_i^\chi(t,\s_{-i})F(t|\s_{-i}),
	\end{eqnarray*}
	which coincides with the mechanism in~\cite{RoughgardenT16} for cursed valuation, while the optimal $\epicir$ mechanism in~\cite{RoughgardenT16} for non-cursed valuations will choose $t>s_{-i}^*$ that maximizes
	\begin{eqnarray*}
		v_i(t,\s_{-i}) - v_i(t,\s_{-i})F(t|\s_{-i}),
	\end{eqnarray*}
	which coincides with our mechanism when $\chi=0$.
	
	\section{Welfare Maximization}\label{sec:indep}
	
We consider the objective of maximizing welfare in an $\epir$ and $\epbb$ way. Unless stated otherwise, we assume that the bidders' signals are sampled i.i.d. In Section~\ref{sec:no-budget-balance} we demonstrate the tension that arises between devising a truthful mechanism for cursed agents, and the requirement that the agents would not experience a negative utility. We show a simple example where in the fully efficient mechanism, the mechanism pays the agents much more than it earns. In Section~\ref{sec:masked-mechanism} we present an operation that takes any deterministic mechanism, and makes it an $\epbb$ mechanism by not allocating in scenarios where the mechanisms would be required to make positive transfers. In Section~\ref{sec:epbb-implies-npt} we show that under natural assumptions on the valuation functions, every $\epbb$ mechanism does not make positive transfers. This implies that the masking operation on the socially optimal mechanism yields a welfare optimal $\epbb$ mechanism (see Proposition~\ref{prop:gva-optimal}). We then present two interesting scenarios: (i) when agents' valuation is the $\max$ function, any $\epbb$ mechanism obtains zero welfare, and (Section~\ref{sec:max}) (ii) a family of valuations including weighted-sum valuations and $\ell_p$-norm of signals, where as the number of agents grows large, the expected $\epbb$ welfare approaches $1/2$ of the expected optimal allocation (Section~\ref{sec:sum}). Missing proofs can be found in the Appendix~\ref{sec:missing-proofs}.

\subsection{Welfare Optimal Mechanism is not Budget Balanced}\label{sec:no-budget-balance}
	Consider $n$ bidders with valuations\footnote{Note these are weighted-sum valuations with $\beta=1/2$.} $v_i(\s)=s_i+\frac{1}{2}\sum_{j\neq i} s_j$ and signals drawn independently from $U[0,1]$. Suppose that $\chi=1$, that is, the agents are fully cursed. The welfare optimal $\cepicir$ mechanism gives the item to the agent with the highest value/signal, and charges payments according to Equations~\eqref{eq:payment identity} and~\eqref{eq:epir}. We show that such mechanism must incur a negative revenue of order $\Theta(n\sqrt{n})$. We note that the mechanism does not even satisfy the less restrictive requirement of ex-ante budget-balance.
	\begin{proposition}
		There exists a setting where the welfare optimal mechanism has an expected revenue loss of $\Theta(n\sqrt{n})$.
		\label{prop:negative-rev}
	\end{proposition}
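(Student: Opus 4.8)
The plan is to compute everything explicitly for this concrete instance and then reduce the revenue to a sum of truncated mean-zero random variables whose negative parts accumulate. First I would evaluate the cursed valuation. Since $\chi=1$ and the signals are i.i.d.\ $U[0,1]$, conditioning on $s_i$ does nothing, so $v_i^\chi(\s)=\E_{\tilde{\s}_{-i}}[v_i(\tilde{\s}_{-i},s_i)]=s_i+\frac12(n-1)\cdot\frac12=s_i+\frac{n-1}{4}$. The welfare-optimal allocation gives the item to the agent with the largest value, and because $v_i(\s)=\frac12 s_i+\frac12\sum_j s_j$, this is exactly the agent with the largest signal; hence for every agent $j$ the critical bid is $t_j(\s_{-j})=\max_{k\ne j}s_k$. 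The crucial observation is that $v_j(t,\s_{-j})-v_j^\chi(t,\s_{-j})=\frac12\sum_{k\ne j}s_k-\frac{n-1}{4}$ does \emph{not} depend on $t$, so by Corollary~\ref{lem_price_shorthand} every \emph{losing} agent $j$ pays exactly $p_j(\s)=\min\{0,\tfrac12 Y_j\}$, where $Y_j:=\sum_{k\ne j}s_k-\frac{n-1}{2}$, while the single winner $w$ pays a nonnegative amount bounded by $s_{(2)}+\frac{n-1}{2}=O(n)$.

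Next I would decompose the expected revenue. Writing the winner (the argmax) as $w$, we have the exact identity $\sum_j p_j(\s)=p_w(\s)+\sum_{j\ne w}\min\{0,\tfrac12 Y_j\}=p_w(\s)+\sum_{j}\min\{0,\tfrac12 Y_j\}-\min\{0,\tfrac12 Y_w\}$. Taking expectations, the middle sum is the dominant term: each $Y_j$ has mean $\frac n2-\frac12-\frac{n-1}{2}=0$ and, being a sum of the i.i.d.\ symmetric-about-zero variables $s_k-\frac12$, is itself symmetric about zero, so $\E[\min\{0,\tfrac12 Y_j\}]=-\tfrac14\E|Y_j|$, a quantity identical across $j$ by exchangeability. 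Thus $\sum_j\E[\min\{0,\tfrac12 Y_j\}]=-\tfrac n4\,\E|Y_1|$, and everything reduces to pinning down the order of $\E|Y_1|$.

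It then remains to show $\E|Y_1|=\Theta(\sqrt n)$ and that the two leftover terms are of strictly smaller order. The upper bound is immediate from Cauchy--Schwarz: $\E|Y_1|\le\sqrt{\mathrm{Var}(Y_1)}=\sqrt{(n-1)/12}=O(\sqrt n)$. For the matching lower bound I would rule out anti-concentration with a moment inequality: applying Cauchy--Schwarz to $Y_1^2=|Y_1|^{1/2}|Y_1|^{3/2}$ gives $(\E Y_1^2)^2\le \E|Y_1|\cdot\E|Y_1|^3$, and a direct fourth-moment estimate for the sum of $n-1$ bounded i.i.d.\ mean-zero terms yields $\E Y_1^4=O(n^2)$, so $\E|Y_1|^3\le(\E Y_1^4)^{3/4}=O(n^{3/2})$ by Lyapunov, while $(\E Y_1^2)^2=\Theta(n^2)$; together these force $\E|Y_1|=\Omega(\sqrt n)$. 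Consequently the dominant term equals $-\Theta(n^{3/2})$. Finally $\E[p_w(\s)]=O(n)$ and $|\min\{0,\tfrac12 Y_w\}|\le\frac{n-1}{2}$, so $\E[p_w(\s)]-\E[\min\{0,\tfrac12 Y_w\}]=O(n)$ is lower order, and hence $\REV=-\Theta(n\sqrt n)$; in particular it is negative, so even ex-ante budget balance fails.

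The step I expect to be the main obstacle is the anti-concentration lower bound $\E|Y_1|=\Omega(\sqrt n)$: the explicit payment computation and the upper bound are routine, but to obtain $\Theta$ rather than merely $O$ one must genuinely establish that the shortfall of a sum of $n$ uniforms below its mean is of order $\sqrt n$ on average, which is where the moment inequality (or, alternatively, a CLT/Berry--Esseen argument) does the real work. A secondary bookkeeping subtlety is that the winner index $w$ is itself random and correlated with the $Y_j$'s; I sidestep this by the exact identity $\sum_{j\ne w}=\sum_j-(\text{term }w)$ and bounding the single removed term crudely by $O(n)$.
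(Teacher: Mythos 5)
Your proof is correct, and it follows the paper's overall architecture: the same reduction via Lemma~\ref{lem:compensation} and Corollary~\ref{lem_price_shorthand} to the loser-compensation terms $\min\bigl\{0,\tfrac12\sum_{k\ne j}s_k-\tfrac{n-1}{4}\bigr\}$, the same $O(n)$ bound on the winner's contribution, and the same conclusion that the payouts swamp the collections. Where you genuinely diverge is in the central estimate. The paper evaluates $\E\bigl[\max\{0,\tfrac{n-1}{4}-\tfrac12\sum_{k\ne j}s_k\}\bigr]$ by an explicit CLT approximation---replacing the sum of uniforms by a Gaussian and computing the half-normal mean---a step it marks with ``$\approx$'' and which, to be made airtight, would require a uniform-integrability or Berry--Esseen argument that is left implicit. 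You instead symmetrize exactly (using that $Y_j$ is a sum of i.i.d.\ variables symmetric about zero, so $\E[\min\{0,\tfrac12 Y_j\}]=-\tfrac14\E|Y_j|$) and pin down $\E|Y_j|=\Theta(\sqrt n)$ with non-asymptotic moment inequalities: Jensen for the upper bound, and the Paley--Zygmund-style comparison $(\E Y_1^2)^2\le \E|Y_1|\cdot\E|Y_1|^3$ combined with a fourth-moment bound for the lower bound. The trade-off is clear: the paper's route identifies the leading constant (up to the rigor of the CLT step), while yours is fully rigorous for every finite $n$ but only gives the order of magnitude---which is all the $\Theta(n\sqrt n)$ statement requires. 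Your handling of the random winner index via the exact identity $\sum_{j\ne w}=\sum_j-(\text{term }w)$, with the removed term bounded crudely by $O(n)$, is also a cleaner piece of bookkeeping than the paper's implicit treatment.
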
 
	\begin{proof}
		For a signal profile $\s$, assuming that $1$ is the highest, agent $1$ pays $v_1^\chi(s_{-1}^*,\s_{-1})$, and the seller pays each bidder $i$ $-p_i(0,\s_{-i})$ subject to the $\epir$ constraints. To minimize their payment, Lemma~\ref{lem:compensation} implies that the seller sets $p_i(0,\s_{-i}) = \min\left\{0, v_i(t_{i}(\s_{-i}),\s_{-i})-v_i^\chi(t_i(\s_{-i}),\s_{-i})\right\}.$ While since the signals are bounded by 1, it is clear that the expected payment of the highest bidder is $O(n)$. We will show that 
		\begin{eqnarray*}
			\E_{\s}\Big[\sum_i-p_i(0,\s_{-i})\Big]& =&  \E_{\s}\Big[\sum_i-p_i(0,\s_{-i})\Big]
			= \E_{\s}\Big[\sum_i\max\left\{0, v_i^\chi(t_{i}(\s_{-i}),\s_{-i})-v_i(t_i(\s_{-i}),\s_{-i})\right\}\Big]\\&=&\Omega(n\sqrt{n}).
		\end{eqnarray*}
		
		Notice that for a signal profile $\s$ and a bidder $i$ with $\chi=1$, 
		\begin{eqnarray*}
			v_i^1(t_{i}(\s_{-i}),\s_{-i})-v_i(t_i(\s_{-i}),\s_{-i})  =  s_{-i}^* +\E_{\tilde{\s}}\left[\frac{1}{2}\sum_{j\neq i}\tilde{s}_{j}\right] - \left(s_{-i}^* + \frac{1}{2}\sum_{j\neq i} s_j \right) = \frac{n-1}{4}-\frac{1}{2}\sum_{j\neq i} s_j.
		\end{eqnarray*}

		We get 
		\begin{eqnarray*}
			\E_{\s}\Big[\sum_i-p_i(0,\s_{-i})\Big]
			& =& \sum_i \E_{\s}\Big[\max\left\{0,\frac{n-1}{4}-\frac{1}{2}\sum_{j\neq i} s_j\right\}\Big]\\
			& =& \sum_i \frac{1}{2}\E_{\s}\Big[\max\left\{0,\frac{n-1}{2}-\sum_{j\neq i} s_j\right\}\Big]\\
			& \approx & \sum_i \frac{1}{2}\E_{x\sim N(\frac{n-1}{2},(n-1)/12 }\Big[ \max\left\{0,\frac{n-1}{2}-x\right\}\Big] \\
			& = & \sum_i \frac{1}{2}\E_{x\sim N(0,(n-1)/12)}\Big[ \max\left\{0,x\right\}\Big] \\
			& = & \sum_i \frac{1}{2} \E_{x\sim N(0,(n-1)/12)}\Big[x\ |\ x\ge 0 \Big]\Pr_{x}[x\ge 0]\\
			%& = & \frac{n}{2}\sqrt{\frac{2(n-1)}{12\pi}}\ =\ \Theta(n\sqrt{n}). \\
			& = & \frac{n}{4}\sqrt{\frac{(n-1)}{24\pi}}\ =\ \Theta(n\sqrt{n}). 
		\end{eqnarray*}
		
		Here, the approximation follows the central limit theorem, the third equality follows by symmetry of the normal distribution, and the fourth equality follows by taking the expected value of a half-normal distribution. 
		
		Since the seller collects $O(n)$ from the buyers, but pays them $\Theta(n\sqrt{n}),$ the proof follows.
	\end{proof}

	%\subsection{$\epbb$ \cepicir\ Mechanisms}
	\subsection{Masked Mechanisms}\label{sec:masked-mechanism}
	
	We define an operation that takes a deterministic mechanism, and imposes no positive transfers ($p_i(0,\s_{-i})=0$), therefore, the masking operation outputs a mechanism that is trivially $\epbb$.
	\begin{definition} \label{def:masking}
	Given a deterministic mechanism with threshold function $t_i(\s_{-i})$ for all $i,\s_{-i}$, let $\mathcal{NC}(\s_{-i})=\{t|t\ge s_i^*, \text{and } v_i(t,\s_{-i})\ge \E_{\tilde{\s}_{-i}|s_i=st}[v_i(t,\tilde{\s}_{-i})] \}$. A masking of the mechanism is a threshold mechanism with a new threshold function 

	$$t'_i(\s_{-i})=
	\begin{cases}
	 %t_i(\s_{-i})\qquad & \text{if } v_i(t_i(\s_{-i}),\s_{-i})\ge \E_{\tilde{\s}_{-i}\sim F_{|t_i(\s_{-i})}}[v_i(t_i(\s_{-i}),\tilde{\s}_{-i})] \\
	 \inf\{t|t\ge t_i(\s_{-i}),t\in\mathcal{NC}(\s_{-i})\} & \text{if } \{t|t\ge t_i(\s_{-i}),t\in\mathcal{NC}(\s_{-i})\}\ne \emptyset\\
	 1\qquad & \text{otherwise}
	\end{cases}.
	$$
	\end{definition}

    The following lemma shows that indeed masking a mechanism results in a mechanism that is $\epbb$. We note that this implication does not assume that the signals are independent (as opposed to the rest of this section).
	\begin{lemma}
		For any deterministic mechanism that can be implemented in a $\cepicir$, $\epir$, its masking can be implemented in a $\cepicir$, $\epir$, and $\epbb$.
	\end{lemma}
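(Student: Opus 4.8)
The plan is to fix the payment rule the masking prescribes---the payment identity \eqref{eq:payment identity} together with no positive transfers, $p_i(0,\s_{-i})=0$---and verify the three properties one at a time. I would first obtain $\cepicir$ from the general characterization, then establish $\epir$ (a statement about \emph{true} rather than cursed values), and finally $\epbb$.

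I would begin with two structural checks on the masked threshold $t_i'(\cdot)$ of Definition~\ref{def:masking}. The map $t\mapsto g_i(t):=v_i(t,\s_{-i})-\E_{\tilde\s_{-i}\mid s_i=t}[v_i(t,\tilde\s_{-i})]$ is continuous because $v_i$ is continuously differentiable and $f$ is continuous, so $\mathcal{NC}(\s_{-i})=\{t\ge s_{-i}^*:g_i(t)\ge 0\}$ is closed; hence whenever the set $\{t\ge t_i(\s_{-i}):t\in\mathcal{NC}(\s_{-i})\}$ is nonempty its infimum is attained and $t_i'(\s_{-i})\in\mathcal{NC}(\s_{-i})$, and otherwise $t_i'(\s_{-i})=\bar s$ and agent $i$ never wins. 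Since every point of $\mathcal{NC}(\s_{-i})$ is at least $s_{-i}^*$, we have $t_i'(\s_{-i})\ge s_{-i}^*$, so the masked rule allocates only to a highest bidder and is feasible by Observation~\ref{obs:feas}. The masked allocation is a monotone threshold rule, and the choices $x_i(0,\s_{-i})=0$, $p_i(0,\s_{-i})=0$ satisfy both the payment identity \eqref{eq:payment identity} and the payment inequality \eqref{eq:cepir} (which reads $0\le 0$); since $v_i^\chi$ is monotone by Lemma~\ref{lemma_monotone}, Proposition~\ref{thm:CEPICIR} then yields that the masked mechanism is $\cepicir$.

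Next I would pin down the winner's payment. Substituting $x_i(0,\s_{-i})=p_i(0,\s_{-i})=0$ into \eqref{eq:payment identity} and noting that the winning region $\{s_i>t_i'(\s_{-i})\}$ corresponds to $v_i^\chi\ge v_i^\chi(t_i'(\s_{-i}),\s_{-i})$ makes the integral telescope, giving $p_i(\s)=v_i^\chi(t_i'(\s_{-i}),\s_{-i})$ for every winner and $p_i(\s)=0$ otherwise (this is the no-curse branch of Corollary~\ref{lem_price_shorthand}). Then $\epbb$ is immediate: all payments are nonnegative since $v_i^\chi\ge 0$ follows from $v_i\ge 0$, so $\sum_i p_i(\s)\ge 0$ for every $\s$. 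For $\epir$, the membership $t_i'(\s_{-i})\in\mathcal{NC}(\s_{-i})$, i.e. $v_i(t_i'(\s_{-i}),\s_{-i})\ge\E_{\tilde\s_{-i}}[v_i(t_i'(\s_{-i}),\tilde\s_{-i})]$, is equivalent via the definition \eqref{eq_cursed_valution} of $v_i^\chi$ to $v_i(t_i'(\s_{-i}),\s_{-i})\ge v_i^\chi(t_i'(\s_{-i}),\s_{-i})$; combined with monotonicity of $v_i$ in $s_i$ this gives, for any winner,
\[
v_i(\s)\ \ge\ v_i(t_i'(\s_{-i}),\s_{-i})\ \ge\ v_i^\chi(t_i'(\s_{-i}),\s_{-i})\ =\ p_i(\s),
\]
while non-winners pay $0$ and receive nothing, so no agent ever overpays.

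The step needing the most care is the interface between Definition~\ref{def:masking} and the payment: one must confirm that the infimum is genuinely attained inside $\mathcal{NC}(\s_{-i})$, so that the winner's cursed payment---evaluated \emph{exactly} at the new threshold---is dominated by the true value, and that the ``otherwise'' branch $t_i'(\s_{-i})=\bar s$ produces no allocation and hence trivially respects all three properties. I would also flag that the whole argument uses only continuity of $v_i$ and $f$ and never the independence of signals, so it holds under affiliation exactly as the statement claims.
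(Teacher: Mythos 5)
Your proof is correct and takes essentially the same route as the paper's: both arguments turn on the single observation that membership of the masked threshold in $\mathcal{NC}(\s_{-i})$ forces $v_i^\chi(t'_i(\s_{-i}),\s_{-i})\le v_i(t'_i(\s_{-i}),\s_{-i})$ (since $v_i^\chi$ is a convex combination of $v_i$ and the conditional expectation), so the compensation term of Lemma~\ref{lem:compensation} can be set to zero and the mechanism makes no positive transfers, giving \epbb\ while the threshold structure delivers \cepicir\ and \epir. The only substantive point where you go beyond the paper is the closedness/attainment check for $\inf\{t\ge t_i(\s_{-i}):t\in\mathcal{NC}(\s_{-i})\}$: the paper's proof silently assumes the masked threshold itself lies in $\mathcal{NC}(\s_{-i})$, and your continuity argument is exactly what licenses evaluating the key inequality, and hence the winner's payment $v_i^\chi(t'_i(\s_{-i}),\s_{-i})$, at that threshold rather than at nearby points.
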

	\begin{proof}
		The masking of a deterministic mechanism is still a threshold mechanism, therefore, it is can be implemented in a \cepicir\ and \epir\ given payments that satisfy Equations~\eqref{eq:cepir} and \eqref{eq:epir}. To show that the mechanism can be implemented in an $\epbb$ manner, we show that for every $i$, $\s_{-i}$ and $\chi$,
		 \begin{eqnarray}
			v_i(t_i(\s_{-i}),\s_{-i})\ge \E_{\tilde{\s}_{-i}\sim F_{|t_i(\s_{-i})}}[v_i(t_i(\s_{-i}),\tilde{\s}_{-i})]\label{eq:alloc_condition}
		\end{eqnarray}  
	implies $\min\left\{0, v_i(t_{i}(\s_{-i}),\s_{-i})-v_i^\chi(t_i(\s_{-i}),\s_{-i})\right\}=0$; Therefore, according to Lemma~\ref{lem:compensation}, we can set $p_i(0,\s_{-i})=0$.
		
		We have that
		\begin{eqnarray*}
			v_i^\chi(t_i(\s_{-i}),\s_{-i}) & = & (1-\chi)v_i(t_i(\s_{-i}),\s_{-i}) + \chi\E_{\tilde{\s}_{-i}\sim F_{|t_i(\s_{-i})}}[v_i(t_i(\s_{-i}),\tilde{\s}_{-i})]\\
			& \le & v_i(t_i(\s_{-i}),\s_{-i}),
		\end{eqnarray*}
		where the equality follows Equation~\eqref{eq_cursed_valution}, and the first inequality follows from the condition in Equation~\eqref{eq:alloc_condition}. The lemma follows.
	\end{proof}

    %\AE{Add a sentence about from here on we assume independence}	
	\subsection{Ex-post Budget-Balance Implies No Positive Transfers}\label{sec:epbb-implies-npt}
	
	In the following we show that under natural conditions, every mechanism that is deterministic, anonymous, \cepicir\ \epir\ and ex-post budget-balance has no positive transfers. This will imply that the optimal  deterministic, anonymous, \cepicir\ \epir\ and ex-post budget-balance mechanism is a masking of the generalized Vickrey auction~\citep{maskin1992auctions,ausubel1999generalized}.% for welfare and masking of the Myerson auction for interdependent values~\cite{RoughgardenT16 citations within} (ALON: check official name) for revenue. 
	We then show that for the max function, every masked mechanism  that allocates the item with zero probability must have positive transfers, implying that such a mechanism will never sell in order to impose ex-post budget-balance. This gives an unbounded gap between the optimal welfare for non-cursed agents and for cursed agents. Finally, we show some interesting families of valuations, where one can provably show that the masking of the generalized Vickrey auction approximate the optimal mechanism for non-cursed agents. 
	
	We first introduce lemmas that will be useful in proving the main result of this section.
	\begin{lemma}
	\label{lem_EPBB_1}
		For every anonymous, deterministic, \cepicir\, \epir, and $\epbb$ mechanism, for every $i$ and $\s_{-i}$,   $p_i(0, \s_{-i})<0$ implies $t_{i}(\s_{-i})=s_{-i}^*.$
	   % For anonymous, deterministic, C-EPIC-IR, EP-IR, and EP-BB mechanisms, 
	    %for each $\s_{-i}$, $p_i(0, \s_{-i})<0$ implies $t_{i}(\s_{-i})=s_{-i}^*.$
	\end{lemma}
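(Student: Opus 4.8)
The plan is to argue by contradiction, producing a single signal profile on which \epbb\ fails. Recall that by Lemma~\ref{lem:max-alloc} a threshold mechanism of this form always satisfies $t_i(\s_{-i}) \ge s_{-i}^*$, so it suffices to rule out $t_i(\s_{-i}) > s_{-i}^*$. Assume, then, that $p_i(0,\s_{-i}) < 0$ yet $t_i(\s_{-i}) > s_{-i}^*$. Since both endpoints are valid signals in $[0,\bar s]$ and the interval $(s_{-i}^*, t_i(\s_{-i}))$ is nonempty, I would fix the given $\s_{-i}$ and choose $s_i$ strictly inside it, forming the profile $\s = (s_i, \s_{-i})$.

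The crux of the construction is that in this profile nobody receives the item. Because $s_i > s_{-i}^* = \max_{j\ne i} s_j$, agent $i$ is the unique highest bidder; but $s_i < t_i(\s_{-i})$ places her below her own threshold, so she does not win. By Lemma~\ref{lem:max-alloc} the item, if allocated at all, must go to the unique top bidder, so no allocation occurs and $x_k(\s) = 0$ for every $k$.

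Next I would read off the payments. For any agent $k$ who does not win, the payment identity~\eqref{eq:payment identity} collapses (the allocation is $0$ throughout the relevant integration range, and $x_k(0,\s_{-k}) = 0$ by Assumption~\ref{assump:zero-no-alloc}) to $p_k(\s) = p_k(0,\s_{-k})$. For agent $i$ this equals $p_i(0,\s_{-i}) < 0$ by hypothesis. For each $j \ne i$, note that $s_i$ is now the largest entry of $\s_{-j}$, so $s_j \le s_{-i}^* < s_i = s_{-j}^* \le t_j(\s_{-j})$, confirming that $j$ does not win; moreover $p_j(0,\s_{-j}) \le x_j(0,\s_{-j}) v_j^\chi(0,\s_{-j}) = 0$ by the payment inequality~\eqref{eq:cepir} combined with Assumption~\ref{assump:zero-no-alloc}. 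Summing yields $\sum_k p_k(\s) \le p_i(0,\s_{-i}) < 0$, contradicting \epbb; hence $t_i(\s_{-i}) = s_{-i}^*$.

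The only point requiring care—and the only place the hypothesis is genuinely used—is the nonemptiness of the window $(s_{-i}^*, t_i(\s_{-i}))$, which is precisely the negation of the desired conclusion; once a signal is placed there, the rest is sign bookkeeping on the $p_k(0,\cdot)$ terms via~\eqref{eq:cepir}. I do not anticipate a real obstacle, but I would verify the degenerate case $t_i(\s_{-i}) = \bar s$: there $p_i(0,\s_{-i}) < 0$ still leaves $(s_{-i}^*,\bar s)$ nonempty, unless $s_{-i}^* = \bar s$, in which case the conclusion $t_i(\s_{-i}) = s_{-i}^*$ already holds, so the same argument closes every case.
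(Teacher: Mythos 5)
Your proof is correct and follows essentially the same route as the paper's: assume $t_i(\s_{-i})>s_{-i}^*$, insert a signal $s_i$ strictly between $s_{-i}^*$ and $t_i(\s_{-i})$ so that no one wins, conclude every payment equals $p_k(0,\s_{-k})\le 0$, and derive $\sum_k p_k(\s)\le p_i(0,\s_{-i})<0$, contradicting \epbb. The only differences are cosmetic: you make explicit (via Equation~\eqref{eq:cepir} and Assumption~\ref{assump:zero-no-alloc}) the sign bookkeeping that the paper leaves implicit, and you check the degenerate endpoint case, which the paper's choice of midpoint handles silently.
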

	\begin{proof}
		We prove by contradiction. Suppose there exists some $\s_{-i}$ such that $p_i(0,\s_{-i})<0$ but $t_i(\s_{-i})>s_{-i}^*$. Then, for $s_i=\frac{1}{2} (s_{-i}^*+t_i(\s_{-i}))$, we have that for the signal profile $\s$ there are no winners --- $i$ has the highest signal, but it is still lower than $i$'s threshold, and by Lemma~\ref{lem:max-alloc}, only the highest signal can win.
		Therefore, we have that $p_j(\s)= p_j(0,\s_{-j})$ for every bidder $j$, and  $\sum_j p_j(\s) = \sum_j p_j(0, \s_{-j})\le p_i(0, \s_{-i})<0$, which violates the ex-post budget-balance property.
		%$p_j(\s)=p_j(0,\s_{-j})$ for all $j$, and $REV(\s)=\sum_j p_j(\s) = \sum_j p_j(0, \s_{-j})\le p_i(0, \s_{-i})<0$, which violates the EP-BB property.
	\end{proof}
	
% 	\begin{lemma}
% 	\label{lem_EPBB_2}
% 	For every anonymous, deterministic, \cepicir\, \epir, and $\epbb$ mechanism, 
% 	    %For anonymous, deterministic, C-EPIC-IR, EP-IR, and EP-BB mechanisms, 
% 	    if there exist $i$ and $\s_{-i}$ such that  $p_i(0, \s_{-i})<0$ \blue{and $\s_{-i}$ has a unique maximum element, denoted as $s_j (j\ne i)$,  then 
% 	    %for $j={\arg\max}_{j'\ne i}\{s_{j'}\}$, %
% 	    for every $s_i\in[0,s_j)$}, we have $t_j(\s_{-ij},s_i)<s_j$. 
% 	\end{lemma}
    \begin{lemma}
	\label{lem_EPBB_2}
	For every anonymous, deterministic, \cepicir\, \epir, and $\epbb$ mechanism, 
	    if there exist $i$ and $\s_{-i}$ such that  $p_i(0, \s_{-i})<0$, then there's a unique bidder $j$ with maximum signal in $\s_{-i}$, and 
	    for every $s_i\in[0,s_j)$, we have $t_j(\s_{-ij},s_i)<s_j$.
	\end{lemma}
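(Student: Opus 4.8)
The plan is to isolate one clean fact that holds at every \cepicir, \epir\ mechanism and then use it as an engine for both parts of the claim: \emph{whenever no agent is allocated at a profile, every agent's payment equals its constant term $p_k(0,\cdot)$, and every such constant term is non-positive}. The first half follows from the payment identity of Proposition~\ref{thm:CEPICIR}: when $x_k(\s)=0$ the term $x_k(\s)v_k^\chi(\s)$ vanishes, the integral vanishes (the integrand is $0$ below the threshold), and Assumption~\ref{assump:zero-no-alloc} kills $x_k(0,\s_{-k})v_k^\chi(0,\s_{-k})$, leaving $p_k(\s)=p_k(0,\s_{-k})$. The second half follows from \cepir\ at the zero signal: since $x_k(0,\s_{-k})=0$, Equation~\eqref{eq:cepir} forces $p_k(0,\s_{-k})\le 0$. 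Combining these, at any profile with no winner we get $\sum_k p_k = \sum_k p_k(0,\cdot)\le 0$; and if one of the summands is the given strictly-negative $p_i(0,\s_{-i})$, the sum is strictly negative, contradicting \epbb. So any ``no-winner profile in which $i$ is a non-winner and sees $\s_{-i}$'' breaks budget balance.

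First I would fix notation. By Lemma~\ref{lem_EPBB_1}, $p_i(0,\s_{-i})<0$ forces $t_i(\s_{-i})=s_{-i}^*$; let $j\in\arg\max_{k\neq i}s_k$, so $s_j=s_{-i}^*$. For any $s_i<s_j$, write $\s'=(s_i,\s_{-i})$; then $\s'_{-i}=\s_{-i}$, so agent $i$'s constant term at $\s'$ is exactly $p_i(0,\s_{-i})<0$, and since $s_i<s_j=t_i(\s_{-i})$ agent $i$ never wins at $\s'$. Thus $\s'$ is always a candidate for the engine above, and the only thing left to check in each part is that \emph{no} agent wins at $\s'$.

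For uniqueness, I would suppose toward a contradiction that some $j'\neq i,j$ also attains $s_{j'}=s_{-i}^*$. Choosing any $s_i<s_{-i}^*$, the profile $\s'$ has at least two agents ($j$ and $j'$) tied for the highest signal, so $|\arg\max_k s'_k|>1$ and Assumption~\ref{assump:tie-no-alloc} gives no winner. The engine then yields $\sum_k p_k(\s')\le p_i(0,\s_{-i})<0$, contradicting \epbb; hence the maximizer $j$ in $\s_{-i}$ is unique. For the threshold claim, I fix $s_i\in[0,s_j)$ and again take $\s'=(s_i,\s_{-i})$, where now $j$ is the \emph{unique} highest bidder. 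If $t_j(\s_{-ij},s_i)\ge s_j$ then $j$ does not win at $\s'$, and since Lemma~\ref{lem:max-alloc} allows only a highest bidder to win, no agent wins at $\s'$; the engine forces $\sum_k p_k(\s')<0$, contradicting \epbb. Hence $t_j(\s_{-ij},s_i)<s_j$, as claimed.

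The argument is short precisely because the engine observation carries all the weight, so the main obstacle is really just verifying the ``no winner'' structural facts cleanly: Assumption~\ref{assump:tie-no-alloc} supplies it for the uniqueness step, and Lemma~\ref{lem:max-alloc} (together with the uniqueness of $j$ established first, which is why the order matters) supplies it for the threshold step. The only edge case to flag is $s_{-i}^*=0$, where the interval $[0,s_j)$ is empty and ``the unique maximizer'' is vacuous; this event has zero probability under the continuity and full-support assumptions on $f$ and can be set aside. I would also note that nothing here uses the \emph{revenue-optimal} choice of $p_k(0,\cdot)$ from Lemma~\ref{lem:compensation}: the non-positivity of constant terms is a consequence of \cepir\ alone, so the proof applies to every anonymous deterministic \cepicir, \epir, \epbb\ mechanism.
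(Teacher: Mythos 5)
Your proposal is correct and takes essentially the same route as the paper's proof: in both the uniqueness step (a tie at the top triggers Assumption~\ref{assump:tie-no-alloc}) and the threshold step ($j$ blocked by its threshold plus Lemma~\ref{lem:max-alloc}), you construct a no-winner profile at which every payment collapses to a non-positive constant term, so the total is at most $p_i(0,\s_{-i})<0$, contradicting \epbb. The only cosmetic difference is that the paper sets $s_i=0$ in the uniqueness step rather than an arbitrary $s_i<s_{-i}^*$, a choice that also disposes of your $s_{-i}^*=0$ edge case directly, without appealing to zero-probability events.
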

	\begin{proof}
	We prove by contradiction. 
	Suppose there are two bidders with the same highest signal in $\s_{-i}$. For signal profile $\s'=(0, \s_{-i})$, according to Assumption~\ref{assump:tie-no-alloc}, no one will be allocated with item and $p_k(\s')=p_k(0, \s'_{-k})\le 0$ for $k$. As $p_i(0,\s'_{-i})=p_i(0,\s_{-i})<0$, therefore, $\sum_k p_k(\s')<0$, violating \epbb. Thus, there is a unique bidder $j$ with the highest signal in $\s_{-i}$.

	Fix $s_i< s_j$, and suppose $t_j(\s_{-ij},s_i)\geq s_j$, then $j$ cannot win the auction since $j$'s signal is no larger than the threshold $j$ faces. By  Lemma~\ref{lem:max-alloc}, no other agents can win the item, as their signals are not the highest one.  Therefore, we have $p_{j'}(\s)=p_{j'}(0,\s_{-{j'}})$ for all ${j'}$, and $\sum_{j'} p_{j'}(\s) = \sum_{j'} p_{j'}(0, \s_{-{j'}})\le p_i(0, \s_{-i})<0$, violating the ex-post budget-balance property.
	\end{proof}

	We now prove the main result of this section, that under a natural conditions, then every $\epbb$ mechanism that satisfies our desired incentive properties makes no positive transfers. The condition fits the basic intuition that as the actual signals of all other bidders but some bidder $i$ get smaller, $i$'s cursedness increases (as typically, $i$ will overestimate others' signal according to original distribution of signals). %The second condition basically assumes $v_i(\mathbf{0})=0$.
	
	\begin{definition}[Cursedness-monotonicity condition]
		A valuation function satisfies the \textit{cursedness-monotonicity} condition if for every $i$, $\s_{-i}$ for which there exists $s_i>\max_{j\ne i}\{s_j\}$ such that $v_i(\s)-v_i^\chi(\s)<0$, then for any ${\s'}_{-i}\preceq\s_{-i}$ \footnote{For two vectors $\s,\t$, $\s\preceq\t$ if $\s$ is coordinate-wise smaller than or equal to $\t$ when we sort the entries in decreasing order.}and any $s'_i \in (\max_{j\ne i}\{s'_j\},\bar{s})$ 
		%which is strictly smaller than $\bar{s}$,
		\footnote{Recall the support of each signal $s_i$ is denoted as $[0,\bar{s}]$.},
		we also have $v_i(\s')-v_i^\chi(\s')<0$.
	\end{definition}
	
	Proposition~\ref{prp_cursedness_monotonicity_examples} below shows that cursedness-monotonicity condition holds for many widely studied valuation functions such as weighted sum valuations~\citep{RoughgardenT16, EdenFFGK19, EdenFTZ21, myerson1981optimal} and max of signals~\citep{bergemann2020countering, bulow2002prices}.
	
	\begin{proposition}
	\label{prp_cursedness_monotonicity_examples}
	The following valuation functions satisfy the cursedness-monotonicity condition:
	\begin{enumerate}
	    \item $v_i(\s)=s_i+\beta\sum_{j\ne i}s_j$. (Weighted-sum valuations.)
	    %\item $v_i(\s)=h(s_i)+g(\s_{-i})$, where $h$ is an increasing function and $g$ is a non-decreasing function in that sense that for any $\s'_{-i}\preceq \s_{-i},~ g(\s'_{-i})\le g(\s_{-i})$.
	    \item $v_i(\s)=\max_{i}\{s_i\}$. (Maximum of signals.)
	\end{enumerate}
	\end{proposition}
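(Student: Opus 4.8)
The plan is to reduce the condition to a sign analysis of a single scalar quantity. By the definition of the cursed valuation in Equation~\eqref{eq_cursed_valution}, for every $\s$ we have
\[
v_i(\s)-v_i^\chi(\s)=\chi\left(v_i(\s)-\E_{\tilde{\s}_{-i}}[v_i(\tilde{\s}_{-i},s_i)]\right).
\]
For $\chi=0$ the difference is identically zero, so the premise of the condition is never met and the implication holds vacuously; hence I may assume $\chi>0$, in which case the sign of $v_i(\s)-v_i^\chi(\s)$ equals the sign of $g_i(\s):=v_i(\s)-\E_{\tilde{\s}_{-i}}[v_i(\tilde{\s}_{-i},s_i)]$. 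Since signals are i.i.d. in this section, the expectation is over i.i.d. draws of the other signals and does not depend on $s_i$. It therefore suffices to show, in each case, that $g_i(\s)<0$ at the hypothesized profile propagates to $g_i(\s')<0$ for every $\s'_{-i}\preceq\s_{-i}$ and every admissible $s_i'$.

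For weighted-sum valuations, writing $\mu=\E[s_j]$ for a single signal, I would compute $g_i(\s)=\beta\big(\sum_{j\neq i}s_j-(n-1)\mu\big)$, which is \emph{independent of $s_i$}. Thus the premise forces $\sum_{j\neq i}s_j<(n-1)\mu$. Because $\s'_{-i}\preceq\s_{-i}$ compares sorted coordinates entrywise, it implies $\sum_{j\neq i}s'_j\le\sum_{j\neq i}s_j$, so $g_i(\s')=\beta\big(\sum_{j\neq i}s'_j-(n-1)\mu\big)<0$ for every $s_i'$, establishing the claim.

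For the maximum valuation $v_i(\s)=\max_k s_k$, note that whenever $s_i>\max_{j\neq i}s_j$ we have $v_i(\s)=s_i$, so $g_i(\s)=s_i-\E_{\tilde{\s}_{-i}}[\max(s_i,\max_{j\neq i}\tilde{s}_j)]$. Since $\max(s_i,\max_{j\neq i}\tilde{s}_j)\ge s_i$ pointwise, and since $f$ is nowhere zero on its support, $\Pr[\max_{j\neq i}\tilde{s}_j>s_i]>0$ for every $s_i<\bar{s}$, giving the strict inequality $g_i(\s)<0$ for all $s_i\in(\max_{j\neq i}s_j,\bar{s})$. The identical argument applied to $\s'$ yields $g_i(\s')<0$ for every $s_i'\in(\max_{j\neq i}s'_j,\bar{s})$; here the conclusion in fact holds regardless of the $\preceq$ relation. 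The only non-routine point, and the main thing to be careful about, is this strict-positivity-of-mass step in the max case, which is exactly where the nowhere-zero-density assumption is used.
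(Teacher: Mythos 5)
Your proposal is correct and follows essentially the same route as the paper: reduce the sign of $v_i(\s)-v_i^\chi(\s)$ to that of $v_i(\s)-\E_{\tilde{\s}_{-i}}[v_i(\tilde{\s}_{-i},s_i)]$, observe for weighted sums that this quantity is independent of $s_i$ and monotone in $\sum_{j\ne i}s_j$ (hence preserved under $\preceq$), and for the max observe that the premise's conclusion holds unconditionally since any winner with $s_i<\bar{s}$ is strictly cursed. Your added care (retaining the factor $\beta$, handling $\chi=0$ vacuously, and invoking the nowhere-zero density to justify strictness in the max case) only tightens details the paper leaves implicit.
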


	\begin{theorem} \label{thm:no-positive-transfers}
	For every anonymous, deterministic, \cepicir, \epir\ mechanisms, if the valuation function $v_i$ %is continuous, 
	satisfies cursedness-monotonicity, then a mechanism is ex-post-budget balanced if and only if for every $i$, $\s_{-i}$, $p_i(0,\s_{-i})=0$.
	\end{theorem}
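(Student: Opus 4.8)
The statement is an ``if and only if,'' so I would prove the two directions separately. The direction ``no positive transfers $\Rightarrow$ \epbb'' is immediate: if $p_i(0,\s_{-i})=0$ for all $i,\s_{-i}$, then by the payment identity~\eqref{eq:payment identity} and Assumption~\ref{assump:zero-no-alloc} every loser pays $p_i(0,\s_{-i})=0$, while the unique winner $w$ (if any) pays $v_w^\chi(t_w(\s_{-w}),\s_{-w})+p_w(0,\s_{-w})=v_w^\chi(t_w(\s_{-w}),\s_{-w})\ge 0$ since cursed valuations are non-negative; hence $\sum_i p_i(\s)\ge 0$ on every profile. The content is the converse, which I would prove by assuming the mechanism is \epbb\ and deriving a contradiction from the existence of a single $\s_{-i}$ with $p_i(0,\s_{-i})<0$.

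First I would extract the structure forced by one negative compensation. By Lemma~\ref{lem:compensation}, $p_i(0,\s_{-i})<0$ is equivalent to the winner's-curse inequality $v_i(t_i(\s_{-i}),\s_{-i})<v_i^\chi(t_i(\s_{-i}),\s_{-i})$ at the critical value, which by Lemma~\ref{lem_EPBB_1} equals $s_{-i}^*$; continuity of $v_i,v_i^\chi$ promotes this strict inequality to signals slightly above $s_{-i}^*$, so the hypothesis of cursedness-monotonicity is met for $(i,\s_{-i})$. By Lemma~\ref{lem_EPBB_2} there is a unique top bidder $j$ in $\s_{-i}$, and $j$ wins on every profile $(s_i,\s_{-i})$ with $s_i<s_j$. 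The key move is then to turn this one cursed loser into a \emph{forced} subsidy to the realized winner: since $\s_{-j}=(s_i,\s_{-ij})\preceq\s_{-i}$ (we only lowered the top coordinate from $s_j$ to $s_i$), cursedness-monotonicity together with valuation symmetry shows $j$ is itself cursed at its threshold, so the \epir\ constraint at $j$'s critical bid forces $p_j(0,\s_{-j})<0$ and caps $j$'s realized payment at its \emph{true} value $v_j(t_j(\s_{-j}),\s_{-j})$; Lemma~\ref{lem_EPBB_1} applied to $j$ then pins $t_j(\s_{-j})=\max\s_{-j}$. Meanwhile \epbb\ on $(s_i,\s_{-i})$, where $j$ is the sole winner and $i$ is a negatively-compensated loser, yields $p_j((s_i,\s_{-i}))\ge -p_i(0,\s_{-i})>0$. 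Combining the two bounds gives the running inequality $|p_i(0,\s_{-i})|\le v_j(\max\s_{-j},\s_{-j})$.

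The contradiction comes from driving the configuration to the bottom of the support. Applying Lemma~\ref{lem_EPBB_2} to the newly-compensated pair $(j,\s_{-j})$ and iterating (each step removes the current top signal by sending it to $0$, strictly decreasing the top coordinate) produces a cascade of forced-compensated winners whose top signal shrinks; throughout, cursedness-monotonicity keeps every bidder in the cascade cursed, so the inequality $|p_{\text{subsidized}}|\le v_{\text{winner}}(\max,\cdot)$ persists. As the configuration collapses toward $\mathbf{0}$, the right-hand side $v_{\text{winner}}(\max\s_{-k},\s_{-k})\to v(\mathbf 0)=0$ by normalization, whereas, using Lemma~\ref{lem:compensation}, the left-hand side equals $\chi\big(\E_{\tilde{\s}_{-i}}[v_i(s_{-i}^*,\tilde{\s}_{-i})]-v_i(s_{-i}^*,\s_{-i})\big)$ and converges to $\chi\,\E_{\tilde{\s}_{-i}}[v_i(0,\tilde{\s}_{-i})]$, which is strictly positive precisely because a winner's curse witnesses genuine interdependence (and is positive for the weighted-sum and $\max$ valuations of Proposition~\ref{prp_cursedness_monotonicity_examples}). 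For a small enough configuration this violates $|p_i(0,\s_{-i})|\le v_j(\max\s_{-j},\s_{-j})$, contradicting \epbb.

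I expect the main obstacle to be exactly this last quantitative comparison: proving that the seller's forced subsidy to a cursed bidder cannot be recouped from the winner's receipts. Two ingredients make it work and are where cursedness-monotonicity is indispensable: (i) it transports the winner's-curse inequality to every coordinate-wise-smaller configuration, guaranteeing that the displaced loser \emph{and} the realized winner are cursed simultaneously all along the cascade; and (ii) combined with the \epir\ cap it prevents the winner from ever being charged more than its true value, so that as signals shrink the winner's contribution vanishes while the subsidy it must fund stays bounded away from zero. (I note in passing that a shorter but less robust contradiction is available directly at a measure-zero profile: setting $s_i=s_j$ creates a top-tie, so by Assumption~\ref{assump:tie-no-alloc} there is no winner, and anonymity gives $p_j(0,\s_{-j})=p_i(0,\s_{-i})<0$, whence $\sum_k p_k\le 2p_i(0,\s_{-i})<0$; cursedness-monotonicity is what upgrades this to the positive-measure argument above.)
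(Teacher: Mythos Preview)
Your overall strategy coincides with the paper's: starting from one negative compensation, use Lemmas~\ref{lem_EPBB_1} and~\ref{lem_EPBB_2} together with cursedness-monotonicity to propagate $p(0,\cdot)<0$ to ever-smaller profiles (zeroing out the current top coordinate at each step via anonymity), and then extract a budget-balance violation at the bottom. The ``if'' direction, the use of the two lemmas, and the propagation step are all correct and match the paper.

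The gap is in your endgame. The cascade is not a limiting process; it is a \emph{finite} iteration that, after at most $n-2$ steps, terminates at $\s_{-i}=(0,\dots,0,z)$ with $z$ the smallest nonzero entry of the original profile. At that point your claimed limit for the left-hand side is wrong: the subsidized bidder there has $s_{-i}^*=z$, not $0$, so $|p_i|=\chi\big(\E_{\tilde\s_{-i}}[v_i(z,\tilde\s_{-i})]-v_i(z,0,\dots,0,z)\big)$, not $\chi\,\E[v_i(0,\tilde\s_{-i})]$. More importantly, your argument rests on $\E_{\tilde\s_{-i}}[v_i(0,\tilde\s_{-i})]>0$, which is neither a hypothesis of the theorem nor implied by cursedness-monotonicity (take any valuation with $v_i(0,\s_{-i})\equiv 0$). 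So as written the contradiction does not close in general.

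The fix is that you do not need to compute the left-hand side at all. One more application of Lemma~\ref{lem_EPBB_2} (at $\s_{-i}=(0,\dots,0,z)$) and Lemma~\ref{lem_EPBB_1} forces the winner's threshold at $\mathbf{0}_{-j}$ to be $0$; by Corollary~\ref{lem_price_shorthand} the winner then pays $v(\mathbf 0)=0$ at the profile $(z,0,\dots,0)$. Since the iteration has already established $p_k(0,\s_{-k})<0$ for every loser $k$ (via anonymity, as each $\s_{-k}$ is a permutation of $(z,0,\dots,0)$), the revenue is strictly negative---the desired contradiction. This is exactly the paper's endgame, and it bypasses both the spurious limit and the unneeded positivity assumption.

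A minor note on your parenthetical: the tie-profile shortcut and the main argument both yield a violation at a single (measure-zero) signal profile; cursedness-monotonicity is not what ``upgrades'' anything to positive measure, and no positive-measure violation is needed because \epbb\ is a pointwise constraint.
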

	\begin{proof}
		As the ``if'' direction is immediate, we focus on proving the ``only if'' direction. Assume that we use the optimal way to set $p_i(0,\s_{-i})$ as described in Lemma~\ref{lem:compensation}. This is without loss since if the mechanism is not budget-balanced using optimal setting of $p_i(0,\s_{-i})$, it is not budget-balance for every setting of $p_i(0,\s_{-i})$.
		
		We prove by contradiction. Suppose that there exists $\s_{-i}$ such that $p_i(0, \s_{-i})<0$, therefore, by Lemma~\ref{lem:compensation}, $v_i(t_i(\s_{-i}),\s_{-i})-v_i^\chi(t_i(\s_{-i}),\s_{-i})<0$.  Let $z$ be the smallest non-zero signal in the set of all signals but $s_i$. If there is no such signal, we have $p_i(\mathbf{0}) < 0$, and by anonymity, the revenue of the all zero signal profile is $\sum_j p_j(\mathbf{0})= n p_i(\mathbf{0}) < 0$.
		
		Assume, $z$ is not the only non-zero signal in $\s_{-i}$. 
		By Lemma~\ref{lem_EPBB_2}, let $j$ be the agent with the highest signal in $\s_{-i}$ and for any $s_i<s_{-i}^*$, we have $t_j(\s_{-j})<s_j\leq \bar{s}$. Thus, we have for any $s_i<s_{-i}^*$, 
		$p_j(0, \s_{-j})=\min\{0, v_j(t_j(\s_{-j}), \s_{-j})-v_j^\chi(t_j(\s_{-j}),\s_{-j}))\}<0$, where the inequality is due to the cursedness-monotonicity condition, since $\s_{-j}\preceq \s_{-i}$ and $v_i(t_i(\s_{-i}),\s_{-i})-v_i^\chi(t_i(\s_{-i}),\s_{-i})<0$, as stated above. 
		Therefore, we can set $s_i=0$, and have $p_j(0, 0,\s_{-ij}) < 0$. By anonymity, we also have $p_i(0, 0,\s_{-ij})<0$.  
		
		We continue the process iteratively until we reach the signal profile $\s_{-i}=(0,\ldots,0, z)$, we have $p_i(0,\s_{-i})< 0$.  Moreover, by Lemma~\ref{lem_EPBB_2}, $t_i(\mathbf{0}_{-i})<z$. Continuing with the process for another step, we also get that $p_i(\mathbf{0})<0$, which implies $t_i(\mathbf{0}_{-i})=0$ by Lemma~\ref{lem_EPBB_1}.

		Now consider the final signal profile we get, $\s=(s_1=z,0,\ldots,0)$, where agent 1 denotes the agent with the smallest non-zero signal $z$ in the original signal profile. By the above argument, agent 1 wins the auction (as $t_i(\mathbf{0}_{-i}))=0$). Moreover, by Corollary~\ref{lem_price_shorthand}, $p_1(\s)=v_1(t_i(\mathbf{0}_{-i}),\mathbf{0}_{-i})=v_1(\mathbf{0})=0$, while for all other agents $p_i(\s) = p_i(0, \s_{-i}) < 0$, therefore, $\sum_i p_i(\s) < 0$, contradicting ex-post budget-balance.	
	\end{proof}

	Next, we apply the above characterization to determine the welfare-optimal $\epbb$ mechanism. Then, we show that for some valuation function (e.g., the $\max$ function), the welfare-optimal $\epbb$  mechanism attains zero welfare; while for some other valuation functions, including the weighted sum valuations, the welfare-optimal mechanisms are simple and can approximate the full efficiency attained by an $\epbb$ mechanism for fully rational agents (i.e., $\chi=0$).

	\subsection{Optimal Mechanism}
    
    As standard in the interdependent literature, when devising the welfare-optimal mechanism, we assume that the valuations satisfy the single-crossing condition, presented here in the context of symmetric valuation functions.
    
    \begin{definition}[Single-Crossing for symmetric valuation functions]
    Symmetric valuation functions $\{v_i\}_{i\in [n]}$ satisfy the single crossing condition if for any signal profile $\s$ and agents $i,j$, $s_i\ge s_j$ if and only if $v_i(\s)\ge v_j(\s)$. 
    \label{def:single-crossing}
    \end{definition}
    
    %We use this common assumption for the results regarding the characterization and analysis of the welfare-optimal mechanism.~(Theorem~\ref{thm_masked_SW}).

	Theorem~\ref{thm:no-positive-transfers} implies that for valuations that satisfy the single-crossing condition, the following masked version of the generalized Vickrey auction (\GVA) for cursed valuations is the welfare optimal mechanism that satisfies $\epbb$. \GVA\  assigns the item to the bidder with the highest valuation given all reported signals, and charges the winner the valuation of the item at the minimum winning signal for the winner fixing others' reported signals. With the symmetry settings and single-crossing condition, \GVA\ allocates to the bidder with the highest signal. The masking of \GVA\ is defined as follows.
		
		\begin{definition}[Masked Generalized Vickrey Auction (M-GVA)]
            The \textit{masked generalized Vickrey auction} considers the threshold rule  $t'(\s_{-i})$ which is the result of taking the the threshold rule $t(\s_{-i})=\max_{j\neq i} s_j$, and masking it to ensure no positive transfers (Definition~\ref{def:masking}). The payments are set using the payment identity (Equation~\eqref{eq:payment identity}), where $p_i(0,\s_{-i})$ is set according to Lemma~\ref{lem:compensation}.    
		\end{definition}
		
		%welfare optimal mechanism that satisfies ex-post budget-balance is a masked mechanism.
	
	\begin{proposition}
	    M-GVA is the welfare-optimal mechanism among deterministic, anonymous, \cepicir, \epir\ and $\epbb$ mechanisms, for  continuous valuation functions $v_i$ that satisfies cursedness-monotonicity.  \label{prop:gva-optimal}
	    
	\end{proposition}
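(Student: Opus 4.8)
The plan is to combine the characterization of Theorem~\ref{thm:no-positive-transfers} with a pointwise welfare-monotonicity argument. First I would invoke Theorem~\ref{thm:no-positive-transfers}: since $v_i$ satisfies cursedness-monotonicity, every deterministic, anonymous, \cepicir, \epir, and \epbb\ mechanism must have $p_i(0,\s_{-i})=0$ for all $i,\s_{-i}$, i.e.\ no positive transfers are possible. I would also note that \MGVA\ is itself in the class: it is the masking of \GVA\ (a deterministic threshold mechanism implementable in a \cepicir, \epir\ way), and the masking lemma guarantees masking preserves \cepicir\ and \epir\ while ensuring \epbb.

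Next I would translate the no-positive-transfers constraint into a restriction on the threshold rule. By Corollary~\ref{cor:det-threshold} each such mechanism is a threshold mechanism with rule $t_i(\cdot)$, and by Lemma~\ref{lem:max-alloc} we may assume $t_i(\s_{-i})\ge s_{-i}^*$. Applying Lemma~\ref{lem:compensation}, $p_i(0,\s_{-i})=0$ forces either $t_i(\s_{-i})=\bar s$ (agent $i$ is never allocated for this $\s_{-i}$) or $v_i(t_i(\s_{-i}),\s_{-i})\ge v_i^\chi(t_i(\s_{-i}),\s_{-i})$; the latter is exactly the statement that $t_i(\s_{-i})\in\mathcal{NC}(\s_{-i})$, the ``no-curse'' set of Definition~\ref{def:masking}. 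Thus any admissible allocating threshold lies in $\{t\ge s_{-i}^*: t\in\mathcal{NC}(\s_{-i})\}$, and by continuity of $v_i$ (together with the continuous density) this set is closed, so its infimum---which is precisely the threshold $t'_i(\s_{-i})$ that \MGVA\ selects---is attained and itself lies in $\mathcal{NC}(\s_{-i})$.

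I would then express expected welfare and argue that it is maximized by the smallest admissible threshold. Under single-crossing (Definition~\ref{def:single-crossing}) the highest-signal bidder is the highest-value bidder, and by Lemma~\ref{lem:max-alloc} the item, if sold, always goes to this bidder; since $t_i(\s_{-i})\ge s_{-i}^*$, the allocation is feasible by Observation~\ref{obs:feas}. Hence expected welfare decomposes (with disjoint winning events across bidders) as $\sum_i \E_{\s_{-i}}\big[\int_{t_i(\s_{-i})}^{\bar s} f(s_i\mid\s_{-i})\,v_i(s_i,\s_{-i})\,ds_i\big]$, and because $v_i\ge 0$, each inner integral is non-increasing in $t_i(\s_{-i})$. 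Therefore welfare is maximized by taking, for every $\s_{-i}$ separately, the smallest admissible threshold. Since any admissible threshold is $\ge t'_i(\s_{-i})$ (the infimum of the admissible set, which \MGVA\ chooses), \MGVA\ pointwise dominates every mechanism in the class on the per-$\s_{-i}$ welfare contribution; integrating over $\s_{-i}$ yields the claim.

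The main obstacle is front-loaded into Theorem~\ref{thm:no-positive-transfers}, which is what rules out any clever mechanism that would trade small positive transfers for a lower threshold (hence more frequent, welfare-improving allocation). Once that characterization is available, the remaining work is the verification that the no-positive-transfers constraint coincides with membership in $\mathcal{NC}$, that welfare is monotone in the threshold, and that continuity makes the infimum defining \MGVA\ both attained and feasible. The only genuinely delicate step is this last continuity point: it guarantees \MGVA's threshold does not merely approach the no-curse boundary but actually sits on it, so \MGVA\ is a bona fide member of the class rather than a limit of admissible mechanisms.
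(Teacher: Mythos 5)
Your proposal is correct and follows essentially the same route as the paper's proof: invoke Theorem~\ref{thm:no-positive-transfers} to restrict attention to no-positive-transfer (masked) mechanisms, observe that this forces each allocating threshold to lie in $\mathcal{NC}(\s_{-i})$ with $t\ge s_{-i}^*$, and then use the pointwise monotonicity of welfare in the threshold to conclude that the infimum of the admissible set---which is exactly \MGVA's threshold---is optimal. In fact you are slightly more careful than the paper, which implicitly assumes the infimum is attained; your explicit appeal to continuity to show $\mathcal{NC}(\s_{-i})$ is closed fills in that step.
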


	\begin{proof}
		By Theorem~\ref{thm:no-positive-transfers}, we have the socially optimal mechanism must be a masked mechanism. Also, notice that the allocation rule of a GVA mechanism can be written as a threshold allocation rule, with threshold function $t_i^\GVA(\s_{-i})=s_i^*.$ Therefore, we only need to prove that the threshold allocation rule $t_i^\Mask(\cdot)$ masking over $t_i^\GVA(\cdot)$ maximizes the social welfare over all valid threshold allocation rules ${t_i}(\cdot)$.
		To see this, if we lower the threshold $t_i(\s_{-i})$ from $t_i^\Mask(\s_{-i})$, then we either violate the feasible constraint that $t_i(\s_{-i})\ge s_{-i}^*$ or violate no positive transfer, i.e., $p_i(0,\s_{-i})=v_i(t_i(\s_{-i}),\s_{-i})-v_i^\chi(t_i(\s_{-i}), \s_{-i})<0$. If we increase the threshold $t_i(\s_{-i})$ from $t_i^\Mask(\s_{-i})$, then the social welfare is decreased as 
		$\E_{\s}[SW(s)]=\sum_i\E_{\s_{-i}}[\E_{s_i|\s_{-i}}[SW(s)]]=\sum_i\E_{\s_{-i}}[\int_{t_i(\s_{-i})}^{\bar{s}}v_i(t,\s_{-i})f(t|\s_{-i})dt]$.

	\end{proof}

	\subsubsection{ $\max$ Function Has Zero Welfare} \label{sec:max}
	For the max function, Theorem~\ref{thm:no-positive-transfers} implies that $\epbb$ mechanisms allocate the item with zero probability, because as long as the winner's signal is not $\bar{s}$, the winner is cursed, i.e., $v_i(\s)<v_i^\chi(\s)$ for winner $i$ and $s_i<\bar{s}$.
	\begin{corollary}
	\label{cor:max}
	    Consider $v_i(\s)=\max_i\{s_i\}$ with signals drawn i.i.d. from $U[0,1]$ and $\chi$-cursed agents for $\chi>0$. Then any deterministic, anonymous, \cepicir, \epir\ and $\epbb$ mechanism allocates with 0 probability (and thus has 0 revenue and welfare).
	\end{corollary}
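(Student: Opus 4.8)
The plan is to combine the characterization from Theorem~\ref{thm:no-positive-transfers} with a short direct computation showing that, for the max valuation, every candidate winning threshold below $\bar{s}$ is cursed. First I would invoke Proposition~\ref{prp_cursedness_monotonicity_examples}, which establishes that $v_i(\s)=\max_i\{s_i\}$ satisfies cursedness-monotonicity. This lets me apply Theorem~\ref{thm:no-positive-transfers}: any deterministic, anonymous, \cepicir, \epir\ and \epbb\ mechanism must have $p_i(0,\s_{-i})=0$ for every $i$ and $\s_{-i}$. By Lemma~\ref{lem:compensation}, $p_i(0,\s_{-i})=0$ forces either $t_i(\s_{-i})=\bar{s}$ (so agent $i$ never wins) or $v_i(t_i(\s_{-i}),\s_{-i})\ge v_i^\chi(t_i(\s_{-i}),\s_{-i})$, i.e.\ the threshold must be a non-cursed point.

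The core step is to show that for the max valuation with i.i.d.\ $U[0,1]$ signals, no threshold $t<\bar{s}=1$ can be non-cursed when $\chi>0$. By Lemma~\ref{lem:max-alloc}, allocation occurs only to the highest-signal agent, so a valid threshold satisfies $t\ge s_{-i}^*$, and at the threshold the candidate winner has value $v_i(t,\s_{-i})=t$. I would then write the cursed valuation
\begin{equation*}
v_i^\chi(t,\s_{-i}) = (1-\chi)\,t + \chi\,\E_{\tilde{\s}_{-i}}\!\left[\max\{t,\textstyle\max_{j\ne i}\tilde{s}_j\}\right],
\end{equation*}
and observe that for any $t<1$ the event $\{\max_{j\ne i}\tilde{s}_j>t\}$ has strictly positive probability under i.i.d.\ $U[0,1]$ signals, so the expectation strictly exceeds $t$. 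Hence $v_i^\chi(t,\s_{-i})>t=v_i(t,\s_{-i})$ whenever $t<1$ and $\chi>0$, meaning every such threshold is cursed.

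Combining the two pieces, the non-cursed condition extracted from Lemma~\ref{lem:compensation} can only be met by setting $t_i(\s_{-i})=\bar{s}=1$, which by Corollary~\ref{cor:det-threshold} (and the accompanying remark that $t_i(\s_{-i})=\bar{s}$ means no allocation) implies agent $i$ is never allocated the item. Since this holds for every agent and every $\s_{-i}$, the mechanism allocates with probability zero on the whole signal space (the ties of Assumption~\ref{assump:tie-no-alloc} and the measure-zero event $s_i=\bar{s}$ aside), and therefore collects zero revenue and generates zero welfare.

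The main obstacle I anticipate is purely the bookkeeping around the ``winner is cursed'' computation: one must be careful that at the threshold the winning agent's own signal is exactly $t$ (so its value is $t$ rather than $s_{-i}^*$), and that the resampled other signals $\tilde{\s}_{-i}$ appearing in $v_i^\chi$ are drawn from the marginal conditioned on $s_i=t$, which under independence is simply the unconditional i.i.d.\ $U[0,1]$ law. Once these are pinned down, the strict inequality $\E[\max\{t,\max_{j\ne i}\tilde{s}_j\}]>t$ for $t<1$ is immediate, and the remainder is a direct appeal to the already-established characterization.
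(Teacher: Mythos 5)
Your proposal is correct and follows essentially the same route as the paper's proof: both establish that for the max valuation every admissible threshold $t<\bar{s}$ satisfies $v_i(t,\s_{-i})<v_i^\chi(t,\s_{-i})$ (since $\E_{\tilde{\s}_{-i}}[\max\{t,\max_{j\ne i}\tilde{s}_j\}]>t$), use Lemma~\ref{lem:compensation} to conclude this forces $p_i(0,\s_{-i})<0$, and then invoke cursedness-monotonicity (Proposition~\ref{prp_cursedness_monotonicity_examples}) together with Theorem~\ref{thm:no-positive-transfers} to rule out any such threshold, leaving only $t_i(\s_{-i})=\bar{s}$ and hence zero allocation probability. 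The only difference is cosmetic ordering—you apply the no-positive-transfers theorem first and the cursedness computation second, while the paper does the reverse—so the two arguments are logically identical.
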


\subsubsection{Approximate Efficiency} \label{sec:sum}
In contrast to the $\max$ function, many other valuation functions achieve good efficiency guarantees. To demonstrate this, we define a family of valuations functions including well studied functions such as weighted-sums valuations~\citep{RoughgardenT16,klemperer1998auctions,wilson1969communications,myerson1981optimal,EdenFFGK19,EdenFTZ21}, and $\ell_p$ norms for a finite $p$, for which the $\MGVA$ mechanism approximates the fully efficient mechanism.
\begin{definition}[Concave-Sum valuations]
    Concave-Sum valuations are valuations that can be expressed in the form of  $v_i(\s)=l(g(s_i)+\sum_{j\ne i}h(s_j))$, where $g, h, l$ are strictly increasing and bounded functions on the support of $s_i$, and $l$ is concave.
\end{definition}

\begin{theorem}
\label{thm_masked_SW}
For agents with Concave-Sum valuations, if the valuation function $v_i$ satisfies the single-crossing condition, the $\MGVA$ mechanism has welfare that approaches $\frac{1}{2}$ of the optimal social welfare as the number of agents grows large.
\end{theorem}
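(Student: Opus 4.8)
The plan is to reduce the welfare comparison to a central-limit statement about the loser-sum $W=\sum_{j\ne i^\ast}h(s_j)$, where $i^\ast$ is the (almost surely unique) bidder holding the highest signal $m=\max_j s_j$. Write $\mu_h=\E[h(s)]$ and $\sigma_h^2=\mathrm{Var}(h(s))$, both finite since $h$ is bounded on $[0,\bar{s}]$. Under single-crossing the efficient allocation and \GVA\ both give the item to $i^\ast$, so by the masking in the definition of $\MGVA$, the mechanism allocates precisely when the winner is not cursed at her own signal: $v_{i^\ast}(\s)\ge \E_{\tilde\s_{-i^\ast}}[v_{i^\ast}(m,\tilde\s_{-i^\ast})]$. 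For Concave-Sum valuations this reads $l\!\big(g(m)+W\big)\ge \E_{H'}\!\big[l(g(m)+H')\big]$, where $H'=\sum_{k=1}^{n-1}h(\tilde s_k)$ is an independent sum of $n-1$ fresh i.i.d.\ draws; since $l$ is strictly increasing it is equivalent to $W\ge\theta(m)$ with $\theta(m)=l^{-1}\!\big(\E_{H'}[l(g(m)+H')]\big)-g(m)$. Thus $\OPT_n=\E[l(g(m)+W)]$ and $\MGVA_n=\E[l(g(m)+W)\,\mathbf{1}\{W\ge\theta(m)\}]$, and the goal is $\MGVA_n/\OPT_n\to\tfrac12$.

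The first main step is to show the allocation probability tends to $\tfrac12$. Because $g,h$ are bounded, $W$ and $H'$ are sums of $n-1$ bounded i.i.d.\ terms, hence of the form $(n-1)\mu_h+\Theta(\sqrt n)$ with asymptotically Gaussian fluctuations; moreover $m=s_{(1)}\to\bar{s}$, so the fact that the losers are conditioned to lie below $m$ shifts $\E[W\mid m]=(n-1)\mu_h(m)$ from $(n-1)\mu_h$ by only $O(1)$ (the bias $(n-1)(\mu_h-\mu_h(m))$ is $O(1)$ since $1-F(m)=\Theta(1/n)$). Using concavity of $l$ to bound the Jensen gap one shows $\theta(m)=(n-1)\mu_h-o(\sqrt n)$, so $\theta(m)-\E[W\mid m]=o(\sqrt n)$. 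As $W$ has spread of order $\sqrt n$, the CLT then gives $\Pr[W\ge\theta(m)\mid m]\to\tfrac12$, and integrating over $m$ yields $\Pr[\MGVA\text{ allocates}]\to\tfrac12$.

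The second step controls the values. The crucial input is the elementary inequality $l'(x)\,x\le l(x)-l(0)\le l(x)$, valid for concave increasing $l$ with $l(0)\ge 0$, which gives $l'\!\big((n-1)\mu_h\big)\cdot\sqrt n=o\!\big(l((n-1)\mu_h)\big)$: over the $\Theta(\sqrt n)$ window in which $W$ fluctuates, the winning value $l(g(m)+W)$ changes only by a $1+o(1)$ factor. Hence both $\OPT_n$ and the conditional mean $\E[l(g(m)+W)\mid W\ge\theta(m)]$ equal $l\big((n-1)\mu_h\big)(1+o(1))$. Writing $\MGVA_n=\E[l(g(m)+W)\mid W\ge\theta(m)]\cdot\Pr[W\ge\theta(m)]$ and inserting the previous step gives $\MGVA_n=\tfrac12\,\OPT_n(1+o(1))$, which is the claim.

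The hard part is the joint control of the nonlinearity of $l$ in both steps: I must pin the no-curse threshold $\theta(m)$ to within $o(\sqrt n)$ of the mean of $W$ (so the allocation probability is exactly $\tfrac12$ in the limit, not degenerate), and simultaneously show the winning value concentrates tightly enough that conditioning on the above-median event inflates it by only $1+o(1)$. Both reduce to the single concavity inequality $l'(x)\le l(x)/x$, which is exactly where concavity of the Concave-Sum family is used, while boundedness of $g,h$ supplies the $\sqrt n$ CLT scale. A final technical point to verify is that the masked threshold's dependence on the winner's own signal is lower order, so that the allocation event coincides asymptotically with $\{W\ge\theta(m)\}$.
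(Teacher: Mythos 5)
Your overall architecture (condition on the winner's signal $m$, apply a CLT to the loser-sum $W$, use concavity to control both the no-curse threshold and the value fluctuations) can be made to work, but as written it breaks at exactly the step you flag as the hard part. You claim that concavity of $l$ pins the threshold from below, $\theta(m)\ge (n-1)\mu_h-o(\sqrt n)$. Jensen's inequality gives only the one-sided bound $\theta(m)\le (n-1)\mu_h$; the reverse direction needs curvature control that the Concave-Sum definition does not supply, and it is false in general. Concretely, take $h(x)=x$, $g(x)=2x$ (so single-crossing holds), signals $U[0,1]$, and let $l$ be piecewise linear with slope $2^{-k}$ on $[k,k+1]$: $l$ is strictly increasing, concave and bounded, with $2-l(x)=\Theta(2^{-x})$. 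Then $2-\E[l(g(m)+H')]=\Theta\bigl(2^{-g(m)}\rho^{\,n-1}\bigr)$ where $\rho=\E[2^{-h(s)}]>2^{-\mu_h}$ strictly (Jensen for the convex function $2^{-x}$), and inverting $l$ yields $\theta(m)=(n-1)\log_2(1/\rho)+O(1)=(n-1)\mu_h-\Theta(n)$. Here the allocation probability tends to $1$, not $\tfrac12$, and $\MGVA_n/\OPT_n\to 1$, so the exact limit you set as your goal ($\MGVA_n/\OPT_n\to\tfrac12$) is not a theorem. The statement should be read, as in the paper's abstract and Section~\ref{sec:results}, as an asymptotic \emph{lower} bound of $\tfrac12$ --- which is also all the paper's own proof establishes. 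A second, smaller gap: the mechanism does not allocate ``precisely when the winner is not cursed at her own signal''; the masked threshold is the infimum of the no-curse set above the second-highest signal (Definition~\ref{def:masking}), and that set need not be an up-set, so only the inclusion $\{W\ge\theta(m)\}\subseteq\{\text{allocation}\}$ (up to the boundary event $t'=m$) is available.

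The good news is that for the lower bound these one-sided facts are exactly what you need, so your proof is repairable: $\theta(m)\le(n-1)\mu_h$ (pure Jensen), the $O_p(1)$ bias of $\E[W\mid m]$, and the conditional CLT give $\Pr[W\ge\theta(m)\mid m]\ge \tfrac12-o(1)$; your Step 2 concentration argument via $x\,l'(x)\le l(x)$ then gives $\MGVA_n\ge(\tfrac12-o(1))\OPT_n$, provided you also handle the tail event that $W$ leaves the $\Theta(\sqrt n)$ window (a point you gloss over --- it needs a Hoeffding bound together with $l(\alpha x)\le\alpha l(x)$, which follows from concavity and $l(0)\ge 0$, to show the tail contributes $o(l((n-1)\mu_h))$ to both $\OPT_n$ and the conditional mean). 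With these repairs your route is genuinely different from the paper's: the paper conditions on the \emph{total} sum, $\sum_i h(s_i)\ge n\lambda+b$, an event on which every agent's loser-sum exceeds $(n-1)\lambda$ so that masking is vacuous at every threshold simultaneously, and it replaces your concentration step with an exact stochastic-dominance argument (Lemma~\ref{lem_recursive_expectation} plus an order-statistics/FOSD induction) showing $\E[SW^{\OPT}(\s)\mid E]\ge\E[SW^{\OPT}(\s)]$. The paper's second step needs no value concentration and no tail estimates; yours avoids the FOSD machinery but pays for it with the uniform-integrability bookkeeping above.
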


\begin{proof}

Let $h_i=h(s_i)$ for any $i\in\N$, and $\h=(h_1,...,h_n)$. Let $\bar{h}=\frac{\sum_i h_i}{n}$ and $\bar{h}_{-i}=\frac{\sum_{j\ne i} h_j}{n-1}$. Let $\lambda=\E_{s_i}[h_i]$, and let $b$ be the supremum of $h(\cdot)$ on the support of $s_i$. Note that as $s_i, i\in\N$ are i.i.d., $h_i,i\in\N$ are also i.i.d. Let $i^*$ denote the bidder with highest signals among all agents with tie breaking arbitrarily. Note that because of the single-crossing condition, $v_{i^*}(\s)$ is also no less than any other value $v_j(\s)$ for $j\ne i^*$.

Consider the signal profile $\s$ such that $\bar{h}\ge\lambda+\frac{b}{n}$. We know that 
$$\E_{\s}[SW^\MGVA(\s)]\ge \E\left[SW^\MGVA(\s)|\bar{h}\ge\lambda+\frac{b}{n}\right]\cdot\Pr\left[\bar{h}\ge\lambda+\frac{b}{n}\right].$$ So to prove our theorem, we only need to prove that  $$\lim_{n\rightarrow+\infty}\Pr\left[\bar{h}\ge\lambda+\frac{b}{n}\right]=\frac{1}{2}\text{ and }\E\left[SW^\MGVA(\s)|\bar{h} \ge\lambda+\frac{b}{n}\right]\ge \E\left[SW^\OPT(\s)\right].$$

First, we  prove that $\lim_{n\rightarrow+\infty}\Pr\left[\bar{h}\ge\lambda+\frac{b}{n}\right]=\frac{1}{2}.$ To see this, as $h_i,i\in\N$ are i.i.d., by the central limit theorem, we have when $n\rightarrow +\infty$,  $\sqrt{n}(\bar{h}-\lambda)\rightarrow\mathcal{N}(0,\sigma^2)$ for some fixed $\sigma$. Thus, we have $\lim_{n\rightarrow+\infty}\Pr[\bar{h}\ge\lambda+\frac{b}{n}]=\lim_{n\rightarrow+\infty}1-\Phi(\frac{b}{\sigma\sqrt{n}})=\frac{1}{2}$, where $\Phi(\cdot)$ is the CDF of the standard normal distribution. 

Second, we prove that $\E\left[SW^\MGVA(\s)|\bar{h} \ge\lambda+\frac{b}{n}\right]\ge \E\left[SW^\OPT(\s)\right].$ We also divide the proof  into two parts. In the first part, we show that $\E\left[SW^\MGVA(\s)|\bar{h} \ge\lambda+\frac{b}{n}\right]=\E\left[v_{i^*}(\s)|\bar{h} \ge\lambda+\frac{b}{n}\right]$: note that for any $\s$ such that $\bar{h}\ge\lambda+\frac{b}{n}$, we have $\forall i\in\N$, $\bar{h}_{-i}\ge\lambda$ (because $\forall i, h_i\le b$), and thus, we have that $\forall i\in N$,
%\begin{small}
\begin{align*}
v_i(t^\MGVA_i(\s_{-i}),\s_{-i}) = & l\left(g(t^\MGVA_i(\s_{-i}))+\sum_{j\ne i}h_j\right)\\
\ge & l\left(g(t^\MGVA_i(\s_{-i}))+(n-1)\lambda\right)\\
= & l\left(\E_{\tilde{\s}_{-i}}\left[g(t^\MGVA_i(\s_{-i}))+\sum_{j\ne i}h(\tilde{s}_j)\right]\right)\\
\ge & \E_{\tilde{\s}_{-i}}\left[l\left(g(t_i^\MGVA(\s_{-i}))+\sum_{j\ne i}h(\tilde{s}_j)\right)\right]\\
= & \E_{\tilde{\s}_{-i}}[v_i(t^\MGVA_i(\s_{-i}),\tilde{\s}_{-i})].
\end{align*}
%\end{small}

The last inequality is due to the concavity of $l$ and the Jensen's inequality. 
Then, according to the definition of \MGVA\ mechanism, we have that that if $\s$ satisfies $\bar{h}\ge\lambda+\frac{b}{n}$, the \MGVA\ will allocate the item to agent $i^*$ and produce social welfare $v_{i^*}(\s)$. Thus, we have $\E\left[SW^\MGVA(\s)|\bar{h} \ge\lambda+\frac{b}{n}\right]=\E\left[v_{i^*}(\s)|\bar{h} \ge\lambda+\frac{b}{n}\right]$.

In the second part, we prove that 
$\E\left[v_{i^*}(\s)|\bar{h} \ge\lambda+\frac{b}{n}\right]\ge\E\left[SW^\OPT(\s)\right].$ This proof relies on  Lemma~\ref{lem_recursive_expectation} below, whose proof is given in Appendix~\ref{sec:missing-proofs}.

\begin{lemma}
\label{lem_recursive_expectation}
If $\z$ is a vector of (possibly correlated) random variables, each with the same support $[a, b]$, and $q(\z)$ is a non-decreasing function in $z_i$ for any $i$ given any $\z_{-i}$,
then for any $i$, and a constant $d$,
$$\E_{\z}\left[q(\z)|\sum z_j \ge d\right]\ge \E_{\z_{-i}}\left[\E_{z_i|\z_{-i}}[q(\z)]|\sum_{j\ne i} z_{j}\ge d-b\right].$$
\end{lemma}

Let $\tilde{v}_i(\h)=l(g(h^{-1}(h_i))+\sum_{j\ne i}h_j)$.
Let $h_{(i)}$ be the order statistics of $\h$ in the order that $h_{(1)}\ge ...\ge h_{(n)}$. Let $\h_{(i)^+}=(h_{(i)},...,h_{(n)})$. 
Let $\xi_{(i)}(\h_{(i)^+})=\E_{\h_{(1),...,(i-1)}|\h_{(i)^+}}[l_{(1)}(g(h^{-1}(h_{i})+\sum_{j\ne i} h_j))].$ As $h_i,i\in\N$ are i.i.d., we have for any $i\in[n-1]$ and for any $j>i$, $$F(h_{(i)}|h_{(i+1)},...,h_{(j-1)},h'_{(j)},h_{(j+1)},...,h_{(n)})$$ weakly FOSD\footnote{Random variable $A$ weakly FOSD random variable $B$ if for any outcome $x$, $\Pr[A\ge x] \ge \Pr[B\ge x]$. If $A$ weakly FOSD $B$, then we have for any non-decreasing function $q(\cdot)$, $\E_A[q(A)]\ge \E_{B}[q(B)]$.} $F(h_{(j)}|\h_{(j+1)^+})$ if $h'_{(j)}>h_{(j)}$. 
As $\tilde{v}_i(\h)$ is non-decreasing with $h_i$ for any $i$ given any $\h_{-(i)}$, by mathematical induction and the first of stochastic dominance, we get that for any $i\in\N$,  $\xi_{(i)}(\h_{(i)^+})$ is non-decreasing in $h_{(j)}$ for any $j\ge i$ given any $\h_{-(j)}$. Therefore, we have 

\begin{align*}
\E_\s\left[v_{i^*}(\s)|{\bar{h}}\ge \lambda+\frac{b}{n}\right]
= &\E_{\h}\left[\tilde{v}_{(1)}(\h)\big|\sum_j h(j)\ge n\lambda+b\right]\\
\ge & \E_{\h_{(2)^+}}\left[
    \xi_{(2)^+}(\h_{(2)^+})\big| \sum_{j=2}^n h(j) \ge n\lambda
\right]\\
\ge & \ldots\\
\ge & \E_{\h_{(n)^+}}\left[
    \xi_{(n)^+}(\h_{(n)^+})\big| h_{(n)} 
    \ge n\lambda - (n-1)b
\right]\\
\ge & \E_{\h_{(n)^+}}\left[
    \xi_{(n)^+}(\h_{(n)^+})
\right]\\
= & \E_\h[\tilde{v}_{(1)}(\h)] = \E_\s[v_{i^*}(\s)] = \E_\s[SW^\OPT(\s)].
\end{align*}
The inequalities are because of the monotone-increasing property of $\xi_{(i)}(\cdot)$ and Lemma~\ref{lem_recursive_expectation}.
Combining the two parts, we finally have $\E\left[SW^\MGVA(\s)|\bar{h} \ge\lambda+\frac{b}{n}\right]\ge \E\left[SW^\OPT(\s)\right],$ which concludes our entire proof.

\end{proof}

	\section{Conclusion and Future Directions}
	
	In this work, we studied the design of mechanisms for agents who suffer from overestimating their value, where the seller tries to avoid agents from suffering from the winner's curse. We designed mechanisms that are deterministic and anonymous, while maximizing the revenue and the welfare of the seller without allowing buyers to have a negative utility. For welfare maximization, we added a requirement that the seller would never have a negative revenue \textit{ex-post}. While we devised optimal mechanisms for these settings, there are many dimensions to the problem, where relaxing any one of these dimensions might lead to a new and interesting design problem. For instance, one can ask the question of what happens if we allow for randomized mechanisms? Mechanisms that can discriminate against bidders? Mechanisms that satisfy the budget-balance constraint only ex-ante? Relaxing each of these dimensions, or a combination, will lead to a new and intricate design problem.
	
	Another interesting question one may ask is how much revenue or welfare exactly do we lose by the fact the agents are biased? Can we relate this loss to the cursedness parameter $\chi$? How much do we lose by being `nice' and helping the buyers not lose money, although they are not playing rationally?
	
	We hope this paper opens the way for other studies answering these, and other interesting and related questions.

    %\paragraph{Acknowledgement} We deeply thank Yannai Gonczarowski and Inbal Talgam-Cohen for their helpful comments. The work is partially supported by National Science Foundation under grant IIS-2007951.
	
\section{Acknowledgements}
We thank the support of National Science Foundation under Grant No. IIS-2007887.
	
	% Bibliography
	\bibliographystyle{plainnat}
	\bibliography{CursedSatisfied.bib}

\begin{thebibliography}{48}
\providecommand{\natexlab}[1]{#1}
\providecommand{\url}[1]{\texttt{#1}}
\expandafter\ifx\csname urlstyle\endcsname\relax
  \providecommand{\doi}[1]{doi: #1}\else
  \providecommand{\doi}{doi: \begingroup \urlstyle{rm}\Url}\fi

\bibitem[Amer and Talgam{-}Cohen(2021)]{AmerT21}
Ameer Amer and Inbal Talgam{-}Cohen.
\newblock Auctions with interdependence and {SOS:} improved approximation.
\newblock In Ioannis Caragiannis and Kristoffer~Arnsfelt Hansen, editors,
  \emph{Algorithmic Game Theory - 14th International Symposium, {SAGT} 2021,
  Aarhus, Denmark, September 21-24, 2021, Proceedings}, volume 12885 of
  \emph{Lecture Notes in Computer Science}, pages 34--48. Springer, 2021.
\newblock \doi{10.1007/978-3-030-85947-3\_3}.
\newblock URL \url{https://doi.org/10.1007/978-3-030-85947-3\_3}.

\bibitem[Ausubel(1999)]{ausubel1999generalized}
Lawrence~M Ausubel.
\newblock A generalized vickrey auction.
\newblock 1999.

\bibitem[Avery and Kagel(1997)]{avery1997second}
Christopher Avery and John~H Kagel.
\newblock Second-price auctions with asymmetric payoffs: An experimental
  investigation.
\newblock \emph{Journal of Economics \& Management Strategy}, 6\penalty0
  (3):\penalty0 573--603, 1997.

\bibitem[Babaioff et~al.(2018)Babaioff, Dobzinski, and Oren]{BabaioffDO18}
Moshe Babaioff, Shahar Dobzinski, and Sigal Oren.
\newblock Combinatorial auctions with endowment effect.
\newblock In {\'{E}}va Tardos, Edith Elkind, and Rakesh Vohra, editors,
  \emph{Proceedings of the 2018 {ACM} Conference on Economics and Computation,
  Ithaca, NY, USA, June 18-22, 2018}, pages 73--90. {ACM}, 2018.
\newblock \doi{10.1145/3219166.3219197}.
\newblock URL \url{https://doi.org/10.1145/3219166.3219197}.

\bibitem[Bazerman and Samuelson(1983)]{bazerman1983won}
Max~H Bazerman and William~F Samuelson.
\newblock I won the auction but don't want the prize.
\newblock \emph{Journal of conflict resolution}, 27\penalty0 (4):\penalty0
  618--634, 1983.

\bibitem[Bergemann et~al.(2020)Bergemann, Brooks, and
  Morris]{bergemann2020countering}
Dirk Bergemann, Benjamin Brooks, and Stephen Morris.
\newblock Countering the winner's curse: optimal auction design in a common
  value model.
\newblock \emph{Theoretical Economics}, 15\penalty0 (4):\penalty0 1399--1434,
  2020.

\bibitem[Bernheim et~al.(2019)Bernheim, DellaVigna, and
  Laibson]{bernheim2019handbook}
B~Douglas Bernheim, Stefano DellaVigna, and David Laibson.
\newblock \emph{Handbook of Behavioral Economics-Foundations and Applications
  2}.
\newblock Elsevier, 2019.

\bibitem[Blecherman and Camerer(1996)]{blecherman1996there}
Barry Blecherman and Colin~F Camerer.
\newblock Is there a winner's curse in the market for baseball players?
  evidence from the field.
\newblock 1996.

\bibitem[Bulow and Klemperer(2002)]{bulow2002prices}
Jeremy Bulow and Paul Klemperer.
\newblock Prices and the winner's curse.
\newblock \emph{RAND journal of Economics}, pages 1--21, 2002.

\bibitem[Capen et~al.(1971)Capen, Clapp, Campbell,
  et~al.]{capen1971competitive}
Edward~C Capen, Robert~V Clapp, William~M Campbell, et~al.
\newblock Competitive bidding in high-risk situations.
\newblock \emph{Journal of petroleum technology}, 23\penalty0 (06):\penalty0
  641--653, 1971.

\bibitem[Cassing and Douglas(1980)]{cassing1980implications}
James Cassing and Richard~W Douglas.
\newblock Implications of the auction mechanism in baseball's free agent draft.
\newblock \emph{Southern Economic Journal}, pages 110--121, 1980.

\bibitem[Charness and Levin(2009)]{charness2009origin}
Gary Charness and Dan Levin.
\newblock The origin of the winner's curse: a laboratory study.
\newblock \emph{American Economic Journal: Microeconomics}, 1\penalty0
  (1):\penalty0 207--36, 2009.

\bibitem[Chawla et~al.(2014)Chawla, Fu, and Karlin]{ChawlaFK14}
Shuchi Chawla, Hu~Fu, and Anna~R. Karlin.
\newblock Approximate revenue maximization in interdependent value settings.
\newblock In Moshe Babaioff, Vincent Conitzer, and David~A. Easley, editors,
  \emph{{ACM} Conference on Economics and Computation, {EC} '14, Stanford , CA,
  USA, June 8-12, 2014}, pages 277--294. {ACM}, 2014.
\newblock \doi{10.1145/2600057.2602858}.
\newblock URL \url{https://doi.org/10.1145/2600057.2602858}.

\bibitem[Chawla et~al.(2018)Chawla, Goldner, Miller, and
  Pountourakis]{0001GMP18}
Shuchi Chawla, Kira Goldner, J.~Benjamin Miller, and Emmanouil Pountourakis.
\newblock Revenue maximization with an uncertainty-averse buyer.
\newblock In Artur Czumaj, editor, \emph{Proceedings of the Twenty-Ninth Annual
  {ACM-SIAM} Symposium on Discrete Algorithms, {SODA} 2018, New Orleans, LA,
  USA, January 7-10, 2018}, pages 2050--2068. {SIAM}, 2018.
\newblock \doi{10.1137/1.9781611975031.134}.
\newblock URL \url{https://doi.org/10.1137/1.9781611975031.134}.

\bibitem[Dessauer et~al.(1982)Dessauer, Dunbar, Brownstone, and
  Franck]{dessauer1982book}
John~P Dessauer, Maurice Dunbar, David~M Brownstone, and Irene Franck.
\newblock Book publishing, 1982.

\bibitem[Eden et~al.(2018)Eden, Feldman, Fiat, and Goldner]{EdenFFG18}
Alon Eden, Michal Feldman, Amos Fiat, and Kira Goldner.
\newblock Interdependent values without single-crossing.
\newblock In {\'{E}}va Tardos, Edith Elkind, and Rakesh Vohra, editors,
  \emph{Proceedings of the 2018 {ACM} Conference on Economics and Computation,
  Ithaca, NY, USA, June 18-22, 2018}, page 369. {ACM}, 2018.
\newblock \doi{10.1145/3219166.3219173}.
\newblock URL \url{https://doi.org/10.1145/3219166.3219173}.

\bibitem[Eden et~al.(2019)Eden, Feldman, Fiat, Goldner, and Karlin]{EdenFFGK19}
Alon Eden, Michal Feldman, Amos Fiat, Kira Goldner, and Anna~R. Karlin.
\newblock Combinatorial auctions with interdependent valuations: {SOS} to the
  rescue.
\newblock In Anna Karlin, Nicole Immorlica, and Ramesh Johari, editors,
  \emph{Proceedings of the 2019 {ACM} Conference on Economics and Computation,
  {EC} 2019, Phoenix, AZ, USA, June 24-28, 2019}, pages 19--20. {ACM}, 2019.
\newblock \doi{10.1145/3328526.3329759}.
\newblock URL \url{https://doi.org/10.1145/3328526.3329759}.

\bibitem[Eden et~al.(2021)Eden, Feldman, Talgam{-}Cohen, and Zviran]{EdenFTZ21}
Alon Eden, Michal Feldman, Inbal Talgam{-}Cohen, and Ori Zviran.
\newblock Poa of simple auctions with interdependent values.
\newblock In \emph{Thirty-Fifth {AAAI} Conference on Artificial Intelligence,
  {AAAI} 2021, Thirty-Third Conference on Innovative Applications of Artificial
  Intelligence, {IAAI} 2021, The Eleventh Symposium on Educational Advances in
  Artificial Intelligence, {EAAI} 2021, Virtual Event, February 2-9, 2021},
  pages 5321--5329. {AAAI} Press, 2021.
\newblock URL \url{https://ojs.aaai.org/index.php/AAAI/article/view/16671}.

\bibitem[Eden et~al.(2022)Eden, Goldner, and Zheng]{EdenGZ22}
Alon Eden, Kira Goldner, and Shuran Zheng.
\newblock Private interdependent valuations.
\newblock In \emph{Proceedings of the Annual {ACM-SIAM} Symposium on Discrete
  Algorithms, {SODA} 2022, to appear}, 2022.
\newblock URL \url{https://arxiv.org/abs/2111.01851}.

\bibitem[Ellis and Piccione(2017)]{ellis2017correlation}
Andrew Ellis and Michele Piccione.
\newblock Correlation misperception in choice.
\newblock \emph{American Economic Review}, 107\penalty0 (4):\penalty0 1264--92,
  2017.

\bibitem[Eyster and Rabin(2005)]{eyster2005cursed}
Erik Eyster and Matthew Rabin.
\newblock Cursed equilibrium.
\newblock \emph{Econometrica}, 73\penalty0 (5):\penalty0 1623--1672, 2005.

\bibitem[Eyster and Weizs{\"a}cker(2011)]{eyster2011correlation}
Erik Eyster and Georg Weizs{\"a}cker.
\newblock Correlation neglect in financial decision-making.
\newblock Technical report, DIW Discussion Papers, 2011.

\bibitem[Eyster et~al.(2019)Eyster, Rabin, and Vayanos]{eyster2019financial}
Erik Eyster, Matthew Rabin, and Dimitri Vayanos.
\newblock Financial markets where traders neglect the informational content of
  prices.
\newblock \emph{The Journal of Finance}, 74\penalty0 (1):\penalty0 371--399,
  2019.

\bibitem[Ezra et~al.(2020)Ezra, Feldman, and Friedler]{EzraFF20}
Tomer Ezra, Michal Feldman, and Ophir Friedler.
\newblock A general framework for endowment effects in combinatorial markets.
\newblock In P{\'{e}}ter Bir{\'{o}}, Jason~D. Hartline, Michael Ostrovsky, and
  Ariel~D. Procaccia, editors, \emph{{EC} '20: The 21st {ACM} Conference on
  Economics and Computation, Virtual Event, Hungary, July 13-17, 2020}, pages
  499--500. {ACM}, 2020.
\newblock \doi{10.1145/3391403.3399516}.
\newblock URL \url{https://doi.org/10.1145/3391403.3399516}.

\bibitem[for the Prize in Economic Sciences in Memory~of
  Alfred~Nobel(2020)]{nobel2021considerations}
The~Committee for the Prize in Economic Sciences in Memory~of Alfred~Nobel.
\newblock Scientifc background on the sveriges riksbank prize in economic
  sciences in memory of alfred nobel 2020, 2020.
\newblock URL
  \url{https://www.nobelprize.org/uploads/2020/09/advanced-economicsciencesprize2020.pdf}.

\bibitem[Gkatzelis et~al.(2021)Gkatzelis, Patel, Pountourakis, and
  Schoepflin]{GkatzelisPPS21}
Vasilis Gkatzelis, Rishi Patel, Emmanouil Pountourakis, and Daniel Schoepflin.
\newblock Prior-free clock auctions for bidders with interdependent values.
\newblock In Ioannis Caragiannis and Kristoffer~Arnsfelt Hansen, editors,
  \emph{Algorithmic Game Theory - 14th International Symposium, {SAGT} 2021,
  Aarhus, Denmark, September 21-24, 2021, Proceedings}, volume 12885 of
  \emph{Lecture Notes in Computer Science}, pages 64--78. Springer, 2021.
\newblock \doi{10.1007/978-3-030-85947-3\_5}.
\newblock URL \url{https://doi.org/10.1007/978-3-030-85947-3\_5}.

\bibitem[Hendricks and Porter(1988)]{hendricks1988empirical}
Kenneth Hendricks and Robert~H Porter.
\newblock An empirical study of an auction with asymmetric information.
\newblock \emph{The American Economic Review}, pages 865--883, 1988.

\bibitem[Hendricks et~al.(1987)Hendricks, Porter, and
  Boudreau]{hendricks1987information}
Kenneth Hendricks, Robert~H Porter, and Bryan Boudreau.
\newblock Information, returns, and bidding behavior in ocs auctions:
  1954-1969.
\newblock \emph{The Journal of Industrial Economics}, pages 517--542, 1987.

\bibitem[Holt and Sherman(1994)]{holt1994loser}
Charles~A Holt and Roger Sherman.
\newblock The loser's curse.
\newblock \emph{The American Economic Review}, 84\penalty0 (3):\penalty0
  642--652, 1994.

\bibitem[Ivanov et~al.(2010)Ivanov, Levin, and Niederle]{ivanov2010can}
Asen Ivanov, Dan Levin, and Muriel Niederle.
\newblock Can relaxation of beliefs rationalize the winner's curse?: an
  experimental study.
\newblock \emph{Econometrica}, 78\penalty0 (4):\penalty0 1435--1452, 2010.

\bibitem[Kagel and Levin(1986)]{kagel1986winner}
John~H Kagel and Dan Levin.
\newblock The winner's curse and public information in common value auctions.
\newblock \emph{The American economic review}, pages 894--920, 1986.

\bibitem[Kagel and Levin(2002)]{kagel2002bidding}
John~H Kagel and Dan Levin.
\newblock Bidding in common-value auctions: A survey of experimental research.
\newblock \emph{Common value auctions and the winner’s curse}, 1:\penalty0
  1--84, 2002.

\bibitem[Kagel and Levin(2009)]{kagel2009common}
John~H Kagel and Dan Levin.
\newblock \emph{Common value auctions and the winner's curse}.
\newblock Princeton University Press, 2009.

\bibitem[Klemperer(1998)]{klemperer1998auctions}
Paul Klemperer.
\newblock Auctions with almost common values: Thewallet game'and its
  applications.
\newblock \emph{European Economic Review}, 42\penalty0 (3-5):\penalty0
  757--769, 1998.

\bibitem[Kondor and K{\"o}szegi(2015)]{kondor2015cursed}
P{\'e}ter Kondor and Botond K{\"o}szegi.
\newblock Cursed financial innovation.
\newblock Technical report, WZB Discussion Paper, 2015.

\bibitem[Li(2017)]{li2017approximation}
Yunan Li.
\newblock Approximation in mechanism design with interdependent values.
\newblock \emph{Games and Economic Behavior}, 103:\penalty0 225--253, 2017.

\bibitem[Liu et~al.(2019)Liu, Miller, and Psomas]{LiuMP19}
Siqi Liu, J.~Benjamin Miller, and Alexandros Psomas.
\newblock Risk robust mechanism design for a prospect theoretic buyer.
\newblock In Dimitris Fotakis and Evangelos Markakis, editors,
  \emph{Algorithmic Game Theory - 12th International Symposium, {SAGT} 2019,
  Athens, Greece, September 30 - October 3, 2019, Proceedings}, volume 11801 of
  \emph{Lecture Notes in Computer Science}, pages 95--108. Springer, 2019.
\newblock \doi{10.1007/978-3-030-30473-7\_7}.
\newblock URL \url{https://doi.org/10.1007/978-3-030-30473-7\_7}.

\bibitem[Lohrenz et~al.(1983)Lohrenz, Dougherty, et~al.]{lohrenz1983bonus}
John Lohrenz, EL~Dougherty, et~al.
\newblock Bonus bidding and bottom lines: federal offshore oil and gas.
\newblock In \emph{SPE Annual Technical Conference and Exhibition}. Society of
  Petroleum Engineers, 1983.

\bibitem[Maskin and Siebert(1992)]{maskin1992auctions}
Eric Maskin and Horst Siebert.
\newblock Auctions and privatization.
\newblock \emph{Privatization}, 1992.

\bibitem[McAfee and McMillan(1987)]{mcafee1987auctions}
R~Preston McAfee and John McMillan.
\newblock Auctions and bidding.
\newblock \emph{Journal of economic literature}, 25\penalty0 (2):\penalty0
  699--738, 1987.

\bibitem[Miettinen(2009)]{miettinen2009partially}
Topi Miettinen.
\newblock The partially cursed and the analogy-based expectation equilibrium.
\newblock \emph{Economics Letters}, 105\penalty0 (2):\penalty0 162--164, 2009.

\bibitem[Milgrom and Weber(1982)]{milgrom1982theory}
Paul~R Milgrom and Robert~J Weber.
\newblock A theory of auctions and competitive bidding.
\newblock \emph{Econometrica: Journal of the Econometric Society}, pages
  1089--1122, 1982.

\bibitem[Myerson(1981)]{myerson1981optimal}
Roger~B Myerson.
\newblock Optimal auction design.
\newblock \emph{Mathematics of operations research}, 6\penalty0 (1):\penalty0
  58--73, 1981.

\bibitem[Porter(1995)]{porter1995role}
Robert~H Porter.
\newblock The role of information in us offshore oil and gas lease auction.
\newblock \emph{Econom{\'e}trica: Journal of the Econometric Society}, pages
  1--27, 1995.

\bibitem[Roughgarden and Talgam{-}Cohen(2016)]{RoughgardenT16}
Tim Roughgarden and Inbal Talgam{-}Cohen.
\newblock Optimal and robust mechanism design with interdependent values.
\newblock \emph{{ACM} Trans. Economics and Comput.}, 4\penalty0 (3):\penalty0
  18:1--18:34, 2016.
\newblock \doi{10.1145/2910577}.
\newblock URL \url{https://doi.org/10.1145/2910577}.

\bibitem[Thaler(1992)]{thaler1992winner}
Richard~H Thaler.
\newblock The winner's curse.
\newblock \emph{Across the Board}, 29:\penalty0 30--30, 1992.

\bibitem[Wilson(1969)]{wilson1969communications}
Robert~B Wilson.
\newblock Communications to the editor—competitive bidding with disparate
  information.
\newblock \emph{Management science}, 15\penalty0 (7):\penalty0 446--452, 1969.

\bibitem[Zheng(2001)]{zheng2001high}
Charles~Z Zheng.
\newblock High bids and broke winners.
\newblock \emph{Journal of Economic theory}, 100\penalty0 (1):\penalty0
  129--171, 2001.

\end{thebibliography}
	
	% Appendix
	\appendix

\section{Cursed Equilibrium in the Wallet-Game} \label{sec:curse-example}

In this section, we demonstrate how the winner's curse naturally arises form considering the cursed-equilibrium model.

Recall the wallet game example and suppose that Alice and Bob have $\chi=1$. Then in the $\chi$-cursed equilibrium, Alice bids as if she is a fully rational agent who values the item by $\E_{s_{Bob}\sim U[0,100]}[s_{Alice}+s_{Bob}|s_{Alice}=\$30]=\$80$. Thus, under the second price auction, she bids \$80 and experiences the winner's curse upon winning. \citet{avery1997second} conducted a lab experiment about this wallet game with each agent's wallet money drawn from $U[1,4]$. The best linear regressor of agents' strategy shows that agents bid by $2.64 + 1.13si$, close to the expected valuation $\E_{s_{-i}\sim U[1,4]}[s_i+s_{-i}]=2.5 + s_i$, instead of the BNE strategy $2s_i$, indicating the agents have a $\chi>0$. \citet{eyster2005cursed} further showed that any $\chi>0$ fits data better than the fully rational case ($\chi=0$) and with a 95\% confidence interval of $[0.59, 0.67]$.

\section{Missing proofs}\label{sec:missing-proofs}
\subsection{Proof of Proposition~\ref{prp_robust}}
\begin{proof}
    As the mechanism is \cepic\ under parameter $\chi$, we have 
    $$x_i(\s)v_i(\s)-p_i(\s)\ge x_i(b_i, \s_{-i})v_i(\s)-p_i(b_i,\s_{-i})\quad \forall i, \s, b_i,$$
    Therefore, we have $\forall i, \s, b_i:$
    \begin{align*}
        EU_i^{\chi_i}(\b=\s,s_i;\sigma_{-i}^*) 
        = & 
        x_i(\s)v_i^{\chi'}(\s)-p_i(\s) \\
        = & 
        x_i(\s)((1-\chi-\epsilon)v_i(\s)-(\chi+\epsilon) \E_{\tilde{\s}_{-i}|s_i}[v_i(s_i,\tilde{\s}_{-i})])-p_i(\s)\\
        = & 
        x_i(\s)v_i^\chi(\s) -p_i(\s) +\epsilon\cdot x_i(\s)\left(\E_{\tilde{\s}_{-i}|s_i}[v_i(s_i,\tilde{\s}_{-i})]-v_i(\s)\right)\\
        \ge & 
        x_i(b_i,\s_{-i})v_i^\chi(\s) -p_i(b_i,\s_{-i}) +\epsilon\cdot x_i(\s)\left(\E_{\tilde{\s}_{-i}|s_i}[v_i(s_i,\tilde{\s}_{-i})]-v_i(\s)\right)\\
        = & 
        x_i(b_i,\s_{-i})v_i^{\chi'}(\s) -p_i(b_i,\s_{-i}) \\
        & +\epsilon\cdot x_i(\s)\left(\E_{\tilde{\s}_{-i}|s_i}[v_i(s_i,\tilde{\s}_{-i})]-v_i(\s)\right)
        -\epsilon\cdot x_i(b_i,\s_{-i})\left(\E_{\tilde{\s}_{-i}|s_i}[v_i(s_i,\tilde{\s}_{-i})]-v_i(\s)\right)\\
        = &
        x_i(b_i,\s_{-i})v_i^{\chi'}(\s) -p_i(b_i,\s_{-i}) + \epsilon \big(x_i(\s)-x_i(b_i,\s_{-i})\big)\left(\E_{\tilde{\s}_{-i}|s_i}[v_i(s_i,\tilde{\s}_{-i})]-v_i(\s)\right)\\
        \ge & 
        x_i(b_i,\s_{-i})v_i^{\chi'}(\s) -p_i(b_i,\s_{-i})-\epsilon\cdot v_i(\bar{s},...,\bar{s})\\
        = & 
        EU_i^{\chi_i}((\b_{-i}=\s_{-i}, b_i),s_i;\sigma_{-i}^*)-\epsilon_i\cdot v_i(\bar{s},...,\bar{s})
    \end{align*}
    \end{proof}
    
    \subsection{Proof of Lemma~\ref{lemma_monotone}}
    \begin{proof}
    We only need to prove that for agent $i$, $\E_{\s_{-i}|s_i}[v_i(\s)]$ is non-decreasing in $s_i$. 
    We first prove that for any function $g^{(n)}(\s)$ non-decreasing in each signal, and for affiliated signals $\s=(s_1,...,s_n)$, $g^{(n-1)}(\s_{-j})=\E_{s_j|\s_{-j}}[g^{(n)}(\s)]$ is also non-decreasing in signal $s_i$ for all $i\ne j$. This is because signal affiliation of $\s$ implies that for any pair of $i$ and $j$, fixing $\s_{-ij}$, $s_i$ and $s_j$ are also affiliated, which further implies that $s_j|s_i=x$ weakly first-order stochastic dominates (FOSD) $s_j|s_i=y$  for any $x>y$. As a result of the FOSD and the non-decreasing property of $g^{(n)}(\s)$, we have
    \begin{align*}
    g^{(n-1)}(s_i=x,\s_{-ij})
    &=\E_{s_j|s_{-ij}, s_i=x}\left[g^{(n)}(s_j, \s_{-ij}, s_i=x)\right]\\
    & \ge\E_{s_j|s_{-ij}, s_i=x}\left[g^{(n)}(s_j, \s_{-ij}, s_i=y)\right] \\
    & \ge\E_{s_j|s_{-ij}, s_i=y}\left[g^{(n)}(s_j, \s_{-ij}, s_i=y)\right] 
    = g^{(n-1)}(s_i=y,\s_{-ij})
    \end{align*}
    Therefore, $g^{(n-1)}(\s_{-j})$ is non decreasing in $s_i$ for any $i\ne j$. 
    By induction starting with $g^{(n)}(\s)=v_i(\s)$, we can get $g^{(1)}(s_i)=\E_{\s_{-i}|s_i}[v_i(\s)]$ is non-decreasing in $s_i$. 
    \end{proof}
    
    \subsection{Proof of Proposition~\ref{prp_cursedness_monotonicity_examples}}
    \begin{proof}
    We first prove that 
    $v_i(\s)=s_i+\beta\sum_{j\ne i}s_j$ satisfies the cursedness-monotonicity condition. 
    We have 
    \begin{align*}
    v_i(\s)-v_i^\chi(\s) 
    = & \chi\left(v_i(\s)-\E_{\tilde{\s}_{-i}|s_i}[v_i^\chi(s_i, \tilde{\s}_{-i})]\right)\\
    = & \chi\left(\sum_{j\ne i}s_j -\E_{\tilde{\s}_{-i}|s_i}\left[\sum_{j\ne i}\tilde{s}_j\right]\right)\\
    = & \chi\left(\sum_{j\ne i}s_j -\E_{\tilde{\s}_{-i}}\left[\sum_{j\ne i}\tilde{s}_j\right]\right)\\
    \end{align*}
    The last equation is due to signals are assumed to be independent in this section. 
    Therefore, if for some $\s$, $v_i(\s)<v_i^\chi(\s)$, then we have  $\sum_{j\ne i}s_j <\E_{\tilde{\s}_{-i}|s_i}\left[\sum_{j\ne i}\tilde{s}_j\right]$ and thus for any $\s'_{-i}\preceq \s_i$, $\sum_{j\ne i}s'_j\le\sum_{j\ne i}s_j <\E_{\tilde{\s}_{-i}|s_i}\left[\sum_{j\ne i}\tilde{s}_j\right]$, implying for any $s_i'$, $v_i(\s')-v_i^\chi(\s')<0$, which completes the proof. Also, note that if the signals are not independent but positively affiliated in the sense that $\forall s_i'>s_i, \E_{\tilde{\s}_{-i}|s_i'}[\sum_{j\ne i}\tilde{s}_j]\ge \E_{\tilde{\s}_{-i}|s_i}[\sum_{j\ne i}\tilde{s}_j]$, the cursedness monotinicity still holds. 
    
    % We first prove that 
    % $v_i(\s)=h(s_i)+g(\s_{-i})$ satisfies the cursedness-monotonicity condition. 
    % We have 
    % \begin{align*}
    % v_i(\s)-v_i^\chi(\s) 
    % = & \chi\left(v_i(\s)-\E_{\tilde{\s}_{-i}|s_i}[v_i^\chi(s_i, \tilde{\s}_{-i})]\right)\\
    % = & \chi\left(g(\s_{-i})-\E_{\tilde{\s}_{-i}|s_i}[g(\tilde{\s}_{-i})]\right)\\
    % = & \chi\left(g(\s_{-i})-\E_{\tilde{\s}_{-i}}[g(\tilde{\s}_{-i})]\right)
    % \end{align*}
    % The last equation is due to signals are assumed to be independent in this section. 
    % Therefore, if for some $\s$, $v_i(\s)<v_i^\chi(\s)$, then we have  $g(\s_{-i})<\E_{\tilde{\s}_{-i}}[g(\tilde{\s}_{-i}))]$ and thus for any $\s'_{-i}\preceq \s_i$, $g(\s'_{-i})\le g(\s_{-i})<\E_{\tilde{\s}_{-i}}[g(\tilde{\s}_{-i})]$, implying for any $s_i'$, $v_i(\s')-v_i^\chi(\s')<0$, which completes the proof. Also, note that if the signals are not independent but positively affiliated in the sense that $\forall s_i'>s_i, \E_{\tilde{\s}_{-i}|s_i'}[g(\tilde{\s}_{-i})]\ge \E_{\tilde{\s}_{-i}|s_i}[g(\tilde{\s}_{-i})]$, the cursedness monotinicity still holds. 
    
    Second, we prove that  $v_i(\s)=\max_s\{s_i\}$ satisfies the cursedness-monotonicity condition. This is simply because that $\forall \s_{-i}$ and $s_i\in(\max_{j\ne i}\{s_j\},\bar{s})$, $v_i(\s)-v_i^\chi(\s)<0$. To see this, we have $v_i(\s)=s_i<\E_{\tilde{\s}_{-i}|s_i}[\max\{s_i, \max_{j\ne i}\{\tilde{s_j}\}\}]=\E_{\tilde{\s}_{-i}|s_i}[v_i(s_i,\tilde{\s}_{-i})]$, implying $v_i(\s)-v_i^\chi(\s)<0$. 
    \end{proof}

\subsection{Proof of Corollary~\ref{cor:max}}
	\begin{proof}
	Because $v_i(\s)=\max_i\{s_i\}$, we have
	$\forall \s_{-i}$ and $s_i\in(\max_{j\ne i}\{s_j\},\bar{s})$, $v_i(\s)-v_i^\chi(\s)<0$. To see this, we have $v_i(\s)=s_i<\E_{\tilde{\s}_{-i}|s_i}[\max\{s_i, \max_{j\ne i}\{\tilde{s_j}\}\}]=\E_{\tilde{\s}_{-i}|s_i}[v_i(s_i,\tilde{\s}_{-i})]$, implying $v_i(\s)-v_i^\chi(\s)<0$. 
	Therefore, for any threshold function $t_i(\cdot)$, which satisfies $t_i(\s_{-i})\ge \max_{j\ne i} \{s_j\}, \forall s_{-i}$ according to Lemma~\ref{lem:max-alloc}, we have $v_i(t_i(\s_{-i}),\s_{-i})-v_i^\chi(t_i(\s_{-i}),\s_{-i})\le 0$, where the equality holds only if $t_i(\s_{-i})=\bar{s}$. Therefore, for any $\s_{-i}$, if $t_i(\s_{-i})<\bar{s}$, then we have $p_i(0,\s_{-i})=\min\{0, v_i(t_i(\s_{-i}),\s_{-i})-v_i^\chi(t_i(\s_{-i}),\s_{-i})\}<0$. Since the $\max$ function satisfies the cursedness-monotonicity, Theorem~\ref{thm:no-positive-transfers} implies such $t_i(\cdot)$ cannot be supported by any deterministic, anonymous, \cepicir, and \epbb\ mechanism. Consequently, the only deterministic, anonymous, \cepicir, and \epbb\ mechanism is to either allocate to a bidder with $s_i=\bar{s}$ or never allocate, leading to zero allocation probability and thus zero social welfare and revenue.

	   % Assume for some set of signals $\s_{-i}$ we have that $s_{-i}^*\leq \t(\s_{-i}) < 1$ (otherwise, the mechanism never allocates). Then we have 
	   % \begin{eqnarray*}
	   %     & &\E_{\tilde{\s}_{-i}\sim U[0,1]^{n-1}}\left[v_i(t_i(s_{-i}),\tilde{\s}_{-i})\right] \\&=& \E_{\tilde{\s}_{-i}\sim U[0,1]^n}\left[\max\{t_i(s_{-i}),\tilde{\s}_{-i}\}\right]  \\
	   %     & = & t(\s_{-i})\Pr_{\tilde{\s}_{-i}\sim U[0,1]^n }\left[\max\{\tilde{\s}_{-i}\}\leq t(\s_{-i})\right] +\\
	   %     & & \E_{\tilde{\s}_{-i}\sim U[0,1]^{n-1}}[\max\{\tilde{\s}_{-i}\}\ |\ \max\{\tilde{\s}_{-i}\} > t(\s_{-i})]\Pr_{\tilde{\s}_{-i}\sim U[0,1]^{n-1}}\left[\max\{\tilde{\s}_{-i}\}> t(\s_{-i})\right] \\
	   %     & > & t(\s_{-i})\ =\  v_i(t_i(\s_{-i}), \s_{-i}),
	   % \end{eqnarray*}
	   % The last equality follows since $t(\s_{-i})\geq s_i^*$. 
	   % The inequality follows since $\E_{\tilde{\s}_{-i}\sim U[0,1]^{n-1}}[\max\{\tilde{\s}_{-i}\}\ |\ \max\{\tilde{\s}_{-i}\} > t(\s_{-i})] > t(\s_{-i})$ and since  $$\Pr_{\tilde{\s}_{-i}\sim U[0,1]^{n-1} }\left[\max\{\tilde{\s}_{-i}\}> t(\s_{-i})\right] = 1-(1-t(\s_{-i}))^{n-1} > 0.$$ 
	   % Therefore, 
	   % \begin{eqnarray*}
	   %     v_i(t(\s_{-i}), \s_{-i}) & < & (1-\chi)v_i(t(\s_{-i}), \s_{-i}) + \chi  \E_{\tilde{\s}_{-i}\sim U[0,1]^n}\left[v_i(t_i(s_{-i}),\tilde{\s}_{-i})\right]\\
	   %     & = & v_i^\chi(t(\s_{-i}), \s_{-i}),
	   % \end{eqnarray*}
	   % and $p_i(0,\s_{-i})=\min\{0, v_i(t(\s_{-i}) - v_i^\chi(t(\s_{-i}), \s_{-i})\} < 0$. Since the $\max$ function satisfies the cursedness-monotonicity, Theorem~\ref{thm:no-positive-transfers} implies $t$ cannot be supported by and $\epbb$ mechanism.
	\end{proof}

\subsection{Proof of Lemma~\ref{lem_recursive_expectation}}
\begin{proof}
\begin{align*}\E_{\z}\left[q(\z)|\sum z_j \ge d\right]
= & \E_{\z_{-i}}\left[\E_{z_i|\z_{-i}}\left[q(\z)|z_i\ge d-\sum_{j\ne i}z_j\right]\big| \sum_{j\ne i} z_j \ge d-b\right] \\
\ge & \E_{\z_{-i}}\left[\E_{z_i|\s_{-i}}\left[q(\z)\right]|\sum_{j\ne i} z_{j}\ge d-b\right]
\end{align*}
The equality is because when the supremum of the support of any $z_i$ is $b$, then $\sum z_j\ge d$ if and only if $\sum_{j\ne i} z_j\ge d-b$ and $z_i\ge c-\sum_{j\ne i}z_j$.
The inequality is because $v(\z)$ is non-decreasing in $z_i$ given any $\z_{-i}$.
\end{proof}

% \subsection{Proof of Lemma~\ref{lem_recursive_FOSD}}
% \begin{proof}
% This is due to the definition of FOSD.
% \end{proof}

\section{Derivation of Equation~\eqref{eq_EPEU_ICEU}}
\label{sec:bic-epic-der}

%\subsubsection{Derivation of Eq~(\ref{eq_EPEU_ICEU})}
    \begin{align*}
 	& \int_{\b_{-i}}f_\sigma(\b_{-i}|s_i)EU_i^\chi(\b,s_i;\sigma_{-i})d\b_{-i} \\
 	= &
 	\int_{\b_{-i}}\int_{\s_{-i}}f_\sigma(\b_{-i}|s_i)
	\bigg(
	(1-\chi)f_\sigma(\s_{-i}|\b_{-i},s_i)u_i(\b,\s)+\chi f(\s_{-i}|s_i)u_i(\b,\s)\bigg)d\s_{-i}d\b_{-i}\\
	= & 
	\int_{\b_{-i}}\int_{\s_{-i}}
	\bigg(
	(1-\chi)f_\sigma(\s_{-i},\b_{-i}|s_i)u_i(\b,\s)+\chi \tilde{f}(\b_{-i},\s_{-i}|s_i)u_i(\b,\s)\bigg)d\s_{-i}d\b_{-i}\\
	= & \int_{\b_{-i}}\int_{\s_{-i}}
	f^\chi_\sigma(\b_{-i},\s_{-i}|s_i)u_i(\b,\s)d\s_{-i}d\b_{-i} = EU^\chi_i(b_i,s_i;\sigma_{-i})
    \end{align*}
	
\end{document}